\newtheorem{theorem}{Theorem}[section]
\newtheorem{corollary}[theorem]{Corollary}
\newtheorem{lemma}[theorem]{Lemma}
\newtheorem{proposition}[theorem]{Proposition}
\theoremstyle{definition}
\newtheorem{remark}[theorem]{Remark}
\numberwithin{equation}{section}
\renewcommand\footnotemark{}
\date{}
\begin{document}

\title{Some Generalizations of Good Integers  and \\ Their Applications in the Study  of \\ Self-Dual Negacyclic Codes} 

\thanks{S. Jitman was supported by the Thailand Research Fund and  Silpakorn University under Research Grant RSA6280042.
    M. Raka was supported by the  Council of Scientific and Industrial Research (CSIR), India, Sanction No. 21(1042)/17/EMR-II.}

\author{Somphong Jitman, Supawadee Prugsapitak  and Madhu Raka
}

\thanks{S. Jitman (Corresponding Author) is with the
    Department of Mathematics, Faculty of Science,
    Silpakorn University,   Nakhon Pathom 73000,  Thailand.
    (email: {sjitman@gmail.com})  }

   \thanks{S. Prugsapitak  is with
  Algebra and Applications Research Unit,
  Department of Mathematics and Statistics,   Faculty of Science, Prince of Songkla University, Hatyai, Songkhla 90110,  Thailand.
    (email: {supawadee.p@psu.ac.th})}         

    \thanks{M. Raka is with the Centre for Advanced Study in Mathematics,
  Panjab University, Chandigarh-160014, INDIA.    (email: mraka@pu.ac.in)}

\maketitle

\begin{abstract}
   Good integers introduced in 1997 form an interesting family of integers that has been continuously studied  due to their rich number theoretical   properties and wide applications. In this paper, we have focused on classes  of  $2^\beta$-good integers, $2^\beta$-oddly-good integers, and $2^\beta$-evenly-good integers which are  generalizations  of good integers.  Properties of such integers have been given as well as their applications in   characterizing and enumerating self-dual negacyclic codes over finite fields.  An alternative proof for  the characterization of  the existence of a  self-dual negacyclic code over finite fields has been given in terms of such generalized good integers.  A general enumeration formula for the number of   self-dual negacyclic codes of length $n$  over  finite fields  has been established.  For some specific lengths, explicit formulas have been provided as well.  Some known results on self-dual negacyclic codes over finite fields can be formalized and viewed as special cases of this work.

\noindent{\bf Keywords: } {good integers, generalized good integers,  negacyclic codes, self-dual codes}

\noindent{\bf MSC: }{ 11N25,  94B15,   94B60}
\end{abstract}

  \section{Introduction}
 
 For fixed  coprime nonzero integers $a$ and $b$,  a given positive  integer $d$ is  called a  {\em good integer (with respect to $a$ and $b$)}  (see \cite{M1997}) if  there exists a positive integer $k$ such that $d|(a^k+b^k)$. Otherwise, $d$ is called a  {\em bad  integer}. Some properties of the set  $G_{(a,b)}$   of good   integers defined with respect to $a$ and $b$  has been investigated  in \cite{J2017} and \cite{M1997}.   For  a prime power $q$,  the set $G_{(q,1)}$ has been    applied in constructing BCH codes  in \cite{KG1969}.  In \cite{JLX2011} and \cite{JLLX2012},   $G_{(2^l ,1)}$  has been applied  in enumerating  self-dual  cyclic and abelian codes over finite fields.

 In \cite{J2017}, two subclasses of good integers defined with respect to coprime integers $a$ and $b$   have been  introduced.  A positive integer $d$ is said to be {\em oddly-good (with respect to $a$ and $b$)} if $d|(a^k+b^k)$ for some odd integer $k\geq 1$, and {\em evenly-good (with respect to $a$ and $b$)} if $d|(a^k+b^k)$ for some even integer $k\geq 2$. Therefore, $d$ is {good} if it is oddly-good or evenly-good. Denote by $OG_{(a,b)}$ (resp.,    $EG_{(a,b)}$) the set of oddly-good (resp., evenly-good) integers defined with respect to $a$ and $b$.
 In \cite{JLS2013}, some basic properties of  $OG_{(2^l,1)}$ and $EG_{(2^l ,1)}$ have been studied and applied in enumerating Hermitian self-dual abelian  codes over finite fields.   The characterizations  of $OG_{(a,b)}$ and    $EG_{(a,b)}$ was discussed in \cite{J2017} and was applied to the study of the hull of abelian codes. However, there are  errors  in \cite[Proposition 2.1]{J2017} and \cite[Proposition 2.3]{J2017} and  in some subsequent results using these.

 In this paper, we first  give necessary and sufficient conditions for $2^\beta d$ to be a good integer, while correcting  the errors of \cite{J2017}. Then we investigate the  extended classes of $G_{(a,b)}$ defined as follows. For a non-negative integer $\beta$,  a  positive integer $d$ is said to be {\em $2^\beta$-good} (with respect ot $a$ and $b$) if  $2^\beta d\in G_{(a,b)}$. Otherwise, $d$ is said to be {\em $2^\beta$-bad}. In the same fashion,  a positive integer $d$ is said to be {\em $2^\beta$-oddly-good (with respect to $a$ and $b$)} if $2^\beta d\in OG_{(a,b)} $, and {\em $2^\beta$-evenly-good (with respect to $a$ and $b$)} if $2^\beta d\in EG_{(a,b)} $. Therefore, $d$ is   { $2^\beta$-good} if and only if it is $2^\beta$-oddly-good or $2^\beta$-evenly-good.
 For an integer $\beta\geq 0$, denote by  $G_{(a,b)}(\beta)$, $OG_{(a,b)}(\beta)$,  and $EG_{(a,b)}(\beta)$ the sets of $2^\beta$-good, $2^\beta$-oddly-good, and $2^\beta$-evenly-good integers, respectively. 
 Here, we focus on     these $3$ types of generalized  good integers.   Applications of such  generalized good integers in characterizing and enumerating    self-dual negacyclic codes are discussed.
 Brief  history, properties, and applications of (self-dual)  negacyclic codes will be recalled in Section 3. For more details, the readers may refer to  \cite{BR2012}, \cite{BR2013},  \cite{B2008}, \cite{D2012},   \cite{GG2012}, \cite{LI2016}, and references therein.

 The paper is organized as follows.  In Section 2, the family of $2^\beta$-good integers are studied together with their subclasses of    $2^\beta$-evenly-good integers and $2^\beta$-oddly-good integers.  Corrections to  \cite[Propositions 2.1 and 2.3]{J2017}  and its consequence are  also given in this section.    Applications of such  $2^\beta$-good integers in  characterizing  and enumerating   self-dual negacyclic codes are provided in Section 3.
 
 \section{$2^\beta$-Good Integers}
 
 For  pairwise coprime nonzero  integers $a,b$ and $n>0$, let ${\rm ord}_n(a)$ denote the multiplicative order of $a$ modulo $n$. In this case, the multiplicative inverse $b^{-1} $  of  $b$ exists in the multiplicative group $\mathbb{Z}_n^\times$.  Let ${\rm ord}_n(\frac{a}{b})$ denote the multiplicative order of $ab^{-1}$ modulo $n$.   Denote by $2^\gamma ||n$ if $\gamma\geq 0 $ is the largest  integer such that $2^\gamma |n$, i.e.,  $2^\gamma |n$ but $2^{\gamma+1} \nmid n$.

 From the definition of a $2^\beta$-good integer,  a positive integer $d$ is $2^\beta$-good  if and only if  there exists a positive integer $k$ such that $2^\beta d|(a^k+b^k)$. Hence,  for each $\beta\geq 1$, $d$ is $2^{\beta-1}$-good  whenever   it is  $2^\beta$-good. It follows that  $G_{(a,b)}(\beta-1) \supseteq G_{(a,b)}(\beta)$ for all $\beta\geq 1$.

 We note that if $ab$ is even, then  $d\nmid (a^k+b^k) $  for all positive integers $k$ and all  even positive integers $d$.
 Hence, $d\notin G_{(a,b)}$ for all  even integers $d$. Consequently, $G_{(a,b)}(\beta)=\emptyset$ for all $\beta\geq 1$.

 In the following subsections, we assume that $a$ and $b$ are  coprime odd integers. In Subsection 2.1, we rectify the errors of \cite{J2017}. Properties of odd $2^\beta$-good integers are discussed in Subsection 2.2 and arbitrary $2^\beta$-good integers are studied in Subsection 2.3. The subclasses of    $2^\beta$-evenly-good integers and $2^\beta$-oddly-good integers are investigated in Subsection 2.4.

 \subsection{Good Integers: Correction of Results of \cite{J2017}}
 
 The errors in \cite{J2017} were caused because of the following false statements
 \begin{equation*} {\rm ord}_{2^{\beta}}(\frac{a}{b}) =2 ~~~ \Rightarrow ~~~ ab^{-1} \equiv -1 ~ {\rm mod}~ 2^{\beta} ~~~~  \text{i.e.,} ~~~2^{\beta}\mid a+b\end{equation*}
 and
 \begin{equation*} {\rm ord}_{d}(\frac{a}{b}) =2k ~~~ \Rightarrow ~~~ (ab^{-1})^k \equiv -1 ~ {\rm mod}~ d,\end{equation*}
 used in  the proofs of \cite[Proposition 2.1]{J2017} and  \cite[Proposition 2.3]{J2017}, respectively,
 where $a$, $b$ and $d\geq 1$  are pairwise  coprime odd integers  and $\beta$ is a positive  integer. It is not difficult to see that  $ {\rm ord}_8(11)=2$  but $11 \not\equiv -1~{\rm mod}~8,$  and $
 {\rm ord}_{15}(11)=2  $ but $11 \not\equiv -1~{\rm mod}~15.$

 First we have a general lemma:
 
 \begin{lemma} \label{ord2k}
     Let $x$ and $d>1$  be   coprime odd integers. If $k$ is the smallest positive integer such that $ x^k \equiv -1 ~ {\rm mod}~ d$,  then ${\rm ord}_{d}(x)=2k.$
 \end{lemma}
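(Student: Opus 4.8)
The plan is to set $m={\rm ord}_d(x)$ and to determine $m$ from two facts: $k$ is the smallest exponent with $x^k\equiv -1$, and $m$ is the smallest exponent with $x^m\equiv 1$. First I would square the hypothesis $x^k\equiv -1\pmod d$ to obtain $x^{2k}\equiv 1\pmod d$, which forces $m\mid 2k$. Since $d$ is odd and $d>1$, we have $d\ge 3$, so $1\not\equiv -1\pmod d$; hence $x^k\equiv -1$ precludes $x^k\equiv 1$, giving $m\nmid k$. From $m\mid 2k$ and $m\nmid k$ it follows immediately that $m$ is even, because an odd divisor of $2k$ would already divide $k$.

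Writing $m=2s$, the relation $2s\mid 2k$ is equivalent to $s\mid k$, while $2s\nmid k$ says that $k/s$ is odd; thus $k=(2r+1)s$ for some integer $r\ge 0$. The crux is to show $x^{s}\equiv -1\pmod d$. Setting $y=x^{s}=x^{m/2}$, I note $y^2=x^m\equiv 1$ and compute
\[
-1\equiv x^k=y^{2r+1}=y\,(y^2)^{r}\equiv y\pmod d,
\]
so $x^{m/2}\equiv -1$. Finally, minimality of $k$ together with $s=m/2\ge 1$ yields $k\le m/2$, whereas $s\mid k$ with $k\ge 1$ yields $k\ge s=m/2$; hence $k=m/2$ and $m=2k$.

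The step I expect to be the main obstacle is extracting $x^{m/2}\equiv -1$: the tempting shortcut ``$y^2\equiv 1$ implies $y\equiv\pm 1$'' is valid only modulo a prime and fails for composite $d$, where $1$ may have more than two square roots. The identity $y^{2r+1}\equiv y$ sidesteps this by reading $x^{m/2}\equiv -1$ directly off the given $x^k\equiv -1$, with no primality assumption. One could instead argue group-theoretically---$-1$ is an element of order $2$ in the cyclic group $\langle x\rangle$, which is possible exactly when $m$ is even, and then the unique such element is $x^{m/2}$---but the elementary computation above is cleaner and self-contained.
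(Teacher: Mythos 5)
Your proof is correct, and it takes a genuinely different route from the paper's at the crucial step. Both proofs share the same skeleton: squaring gives $m \mid 2k$ where $m = {\rm ord}_d(x)$, one then shows $x^{m/2} \equiv -1 \pmod d$, and minimality of $k$ together with $m \mid 2k$ forces $m = 2k$. The difference is how $x^{m/2} \equiv -1$ is obtained. The paper writes $k = 2^\lambda k'$ and $m = 2^\mu r$ with $k', r$ odd, proves $\mu = \lambda + 1$ by contradiction, and then factors $d = p_1^{e_1}\cdots p_t^{e_t}$ into odd prime powers: for each $i$, exactly one of $p_i^{e_i} \mid (x^{2^\lambda r}-1)$ or $p_i^{e_i} \mid (x^{2^\lambda r}+1)$ holds (this dichotomy needs $p_i$ odd), the first branch is ruled out using $x^k \equiv -1$, and the Chinese-remainder-style recombination gives $d \mid (x^{2^\lambda r}+1)$. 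You bypass all of this: from $m \mid 2k$, $m \nmid k$ you get $m = 2s$ with $k/s$ odd, and then the identity $-1 \equiv x^k = (x^s)^{k/s} = x^s\bigl((x^s)^2\bigr)^{(k/s-1)/2} \equiv x^s \pmod d$ reads $x^{m/2} \equiv -1$ directly off the hypothesis. This is shorter, avoids the prime factorization entirely, and --- as you implicitly note --- correctly dodges the false claim that $y^2 \equiv 1 \pmod d$ forces $y \equiv \pm 1$, which is exactly the pitfall the paper's prime-power dichotomy exists to handle (and which the paper revisits in its Remark 2.2 on the converse). Your argument also buys extra generality: it never uses oddness of $x$ or of $d$ beyond ensuring $d \geq 3$ (so that $-1 \not\equiv 1$), so it proves the lemma for every modulus $d > 2$, whereas the paper's proof is tied to odd $d$. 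The paper's heavier machinery is not wasted, though: the per-prime-power dichotomy it sets up is precisely what powers Remark 2.2, where the converse is shown to hold for odd prime powers.
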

 \begin{proof}
     Let $k$ be  the smallest positive integer such that $x^k \equiv -1 ~ {\rm mod}~ d$. Write  $k= 2^{\lambda} k'$, where $\lambda \ge 0$ and $k'$ is an  odd integer.  Since $x^k \equiv -1 ~{\rm mod}~ d$, we have $x^{2k} \equiv 1 ~ {\rm mod}~ d$. Therefore, ${\rm ord}_{d}( x) | 2k.$  Let ${\rm ord}_{d}(x)=  2^{\mu} r $, where  $0\leq \mu \le \lambda+1 $ and $r$ is odd.  Then  $ r| k'$, i.e., $k'=rr'$ for some positive integer $r'$.
     
     Suppose that  $\mu \leq \lambda $. Then $x^{2^{\mu}r} \equiv 1 ~ {\rm mod}~ d$.  It gives that  $x^ {2^\lambda k' } \equiv  x^{2^{\lambda} rr'} \equiv (x^{2^{\mu}r})^{2^{\lambda-\mu}r'} \equiv 1 ~ {\rm mod}~ d,$ but   $x^ {2^\lambda k' } \equiv x^ {k } \equiv -1 ~ {\rm mod}~ d,$  a contradiction, as $d$ is odd. Therefore,  we must have $\mu =\lambda + 1$.

     Write $d = p_1^{e_1}p_2^{e_2}\cdots p_t^{e_t}$,  where $p_1,p_1,\dots,p_t$ are distinct odd  primes and $e_1,e_2,\dots,$ $ e_t$ are some positive integers. Since $1\equiv x^ {{\rm ord}_{d}( x) } =x^{2^\mu r} = x^{2^{\lambda +1}r} ~ {\rm mod}~ p_i^{e_i}$, we have $p_i^{e_i} | (x^{2^{\lambda}r}-1)(x^{2^{\lambda}r}+1)$ for all  $i\in \{1,2,\dots, t\}$.
     Hence, for each  $i\in \{1,2,\dots, t\}$,    either
     $p_i^{e_i} |(x^{2^{\lambda}r'}-1)$ or $p_i^{e_i} | (x^{2^{\lambda}r}+1)$ but not both. If   $p_i^{e_i} | (x^{2^{\lambda}r}-1)$ for some $i\in \{1,2,\dots, t\}                                                                                                                                                        $,  then  $-1\equiv x^k  \equiv  x^{2^{\lambda}k'} \equiv x^{2^{\lambda}rr'}  \equiv  (x^{2^{\lambda}r} )^{r'} \equiv  1 ~ {\rm mod}~ p_i^{e_i}$,  a contradiction.  Hence, $p_i^{e_i} | (x^{2^{\lambda}r}+1)$ for all $i\in \{1,2,\dots, t\}$.  Consequently,  we have $d | (x^{2^{\lambda}r}+1)$, i.e.,                                                                                $x^{\frac{{\rm ord}_{d}( x) }{2}} \equiv x^{ 2^\lambda r} \equiv -1 ~ {\rm mod}~ d$. By the minimality of $k$,  it can be deduced that $k\leq \frac{{\rm ord}_{d}( x) }{2}$.
     Since ${\rm ord}_{d}( x) | 2k$,
     we have  ${{\rm ord}_{d}( x) }=2k$ as desired.
 \end{proof}
 
 \begin{remark}\label{1} The converse of Lemma
     \ref{ord2k} is not always true. The converse holds only if $d$ is an odd prime power or $d=2$. This is so because if  ${\rm ord}_{p^r}(x)=2s$, where $p$ is an odd prime, we have $p^r | (x^s-1)(x^s+1)$. It can not happen that $p^{i} | (x^s-1)$ and $p^j | (x^s+1)$ with $i+j=r, ~i \geq 1, j\ge 1$. Because then $p| (x^s-1)$ and $p | (x^s+1)$ which gives $p | 2$; not possible.  Hence,  either
     $p^r | (x^s-1)$ or $p^r | (x^s+1)$ but not both. If  $p^r | (x^s-1)$, we get ${\rm ord}_{p^r}(x) \geq s$, not possible. Therefore,  $p^r$ must divide $(x^s+1)$.
 \end{remark}
 For each odd integer $x$ and positive integer $\beta$, we note that
 \begin{align}
 \label{eq:ordMod2b}
 {\rm ord }_{2^{\beta}}(x)= \begin{cases} 1 & {\rm if} ~ \beta=1,\\ 2 & {\rm if} ~ \beta \geq 2 ~{\rm and }~ x\equiv -1 ~ {\rm mod}~ 2^{\beta}.
 \end{cases}
 \end{align}

 A correction  of     {\cite[Proposition~2.1 ]{J2017}}  is now given as follows.
 \begin{proposition} \label{prop2.01}Let $a$ and $b$   be  coprime odd  integers  and  let $\beta\geq 1$ be  an  integer. Then  $2^\beta\in G_{(a,b)}$ if and only if  $2^\beta|(a+b)$, i.e., $1\in G_{(a,b)}(\beta)$ if and only if
     $2^\beta|(a+b)$.
 \end{proposition}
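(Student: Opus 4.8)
The plan is to translate the divisibility condition into a statement about multiplicative orders in $\mathbb{Z}_{2^\beta}^\times$ and then exploit the fact that this group is a $2$-group. Since $a$ and $b$ are odd, $b$ is a unit modulo $2^\beta$; writing $x\equiv ab^{-1}\pmod{2^\beta}$, the condition $2^\beta\mid(a^k+b^k)$ is equivalent, after multiplying by $(b^{-1})^k$, to $x^k\equiv -1\pmod{2^\beta}$, while $2^\beta\mid(a+b)$ is equivalent to $x\equiv -1\pmod{2^\beta}$. Thus it suffices to show that $x^k\equiv -1\pmod{2^\beta}$ holds for some $k\geq 1$ if and only if $x\equiv -1\pmod{2^\beta}$.

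The backward implication is immediate by taking $k=1$.

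For the forward implication I would first dispose of the case $\beta=1$, in which $2\mid(a+b)$ holds automatically since $a+b$ is even. For $\beta\geq 2$, assume $x^k\equiv -1\pmod{2^\beta}$. Reducing modulo $4$ and using that $\mathbb{Z}_4^\times=\{1,3\}$ has order $2$, I would deduce both that $x\equiv -1\pmod 4$ and that $k$ must be odd. Setting $y\equiv -x\pmod{2^\beta}$, so that $x=-y$ with $y$ odd, the identity $x^k=(-y)^k=-y^k$ (valid because $k$ is odd) turns $x^k\equiv -1$ into $y^k\equiv 1\pmod{2^\beta}$. Since ${\rm ord}_{2^\beta}(y)$ divides both $k$ and the group order $\phi(2^\beta)=2^{\beta-1}$, it divides $\gcd(k,2^{\beta-1})=1$; hence $y\equiv 1\pmod{2^\beta}$, that is, $x\equiv -1\pmod{2^\beta}$, as desired.

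The main obstacle, and the reason Lemma~\ref{ord2k} cannot be invoked directly, is precisely that here $d=2^\beta$ is even, so the order-doubling conclusion of that lemma fails---this is exactly the source of the error in \cite{J2017} flagged above. The crux is therefore the parity step: extracting from $x^k\equiv -1\pmod{2^\beta}$ the fact that $k$ is odd, which is what allows the $2$-group structure to force $y$ to be trivial. I expect this mod-$4$ reduction to be the only delicate point, with the remaining manipulations being routine.
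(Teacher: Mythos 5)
Your proof is correct. It shares the pivotal step with the paper's own argument: both proofs handle $\beta=1$ trivially and then, for $\beta\geq 2$, reduce modulo $4$ to force the exponent $k$ to be odd (the paper phrases this as: if $k$ were even then $a^k+b^k\equiv 2\pmod 4$, contradicting $4\mid(a^k+b^k)$). Where you diverge is the finishing move. The paper stays additive and elementary: once $k$ is odd, it invokes the factorization $a^k+b^k=(a+b)\bigl(\sum_{i=0}^{k-1}(-1)^ia^{k-1-i}b^i\bigr)$ and observes that the second factor is odd (a sum of an odd number of odd terms), so every power of $2$ dividing $a^k+b^k$ must divide $a+b$. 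You instead work multiplicatively in $\mathbb{Z}_{2^\beta}^\times$: setting $y\equiv -ab^{-1}$, the relation $y^k\equiv 1$ with $k$ odd and $\lvert\mathbb{Z}_{2^\beta}^\times\rvert=2^{\beta-1}$ forces $y\equiv 1$. Both finishes are sound; the paper's is more self-contained (no group order or inverses needed, and it yields the divisibility $2^\beta\mid(a+b)$ directly rather than through the congruence $ab^{-1}\equiv-1$), while yours is conceptually cleaner in explaining \emph{why} the result holds---$-1$ admits no odd-exponent roots other than $-1$ itself in a $2$-group---and it meshes naturally with the order-theoretic language (${\rm ord}_d(\frac{a}{b})$) used throughout Section~2. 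Your remark that Lemma~\ref{ord2k} cannot be applied because $d=2^\beta$ is even, which is exactly the false step behind the errors in \cite{J2017}, is also accurate and well placed.
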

 \begin{proof}
     Suppose   $2^\beta \in G_{(a,b)}$.  If  $\beta=1$, then  clearly $2^\beta|(a+b)$   since $a+b$ is even. Let $\beta>1$. Then  $2^\beta|(a^k+b^k)$ for some integer $k\geq 1$
     and so  $4|(a^k+b^k)$. If $k$ is even, then    $a^k\equiv 1 \ {\rm mod\ } 4$ and $b^k\equiv 1 \ {\rm mod\ } 4$  which implies that $(a^k+b^k)\equiv 2 ~{\rm mod }~ 4$, a contradiction.
     It follows that   $k$ is odd.  Since $a^k+b^k=(a+b)\left(\sum\limits_{i=0}^{k-1} (-1)^ia^{k-1-i}b^i \right)$ and  $\sum\limits_{i=0}^{k-1} (-1)^ia^{k-1-i}b^i $ is odd (it being a sum of odd terms taken odd number of times), we have that  $2^\beta |(a+b)$. The converse  is obvious.
 \end{proof}

 The following  results   are  needed in the proof of the correct version of \cite[Proposition 2.3]{J2017}.
 
 \begin{proposition}[{\cite[Theorem 1]{M1997}}] \label{Mor2}  Let $d>1$ be an odd integer. Then $ d \in G_{(a,b)}=G_{(a,b)}(0)$ if and only if there exists an integer $s\geq 1$ such that $2^s|| {\rm ord}_p(\frac{a}{b})$ for every prime $p$ dividing $d$. \end{proposition}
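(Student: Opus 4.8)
The plan is to pass from the two-parameter divisibility to a single base and then to analyze it one prime power at a time. Since $a,b,d$ are pairwise coprime and odd, $b$ is invertible modulo $d$ and modulo every prime $p\mid d$; setting $x\equiv ab^{-1}$ one has $a^k+b^k\equiv b^k(x^k+1)$, so $d\mid(a^k+b^k)$ for some $k$ iff $x^k\equiv-1\pmod d$, and likewise $p\mid(a^k+b^k)$ iff $x^k\equiv-1\pmod p$, where $\mathrm{ord}_p(x)=\mathrm{ord}_p(\tfrac ab)$. Thus the statement to prove becomes: $x^k\equiv-1\pmod d$ is solvable for some $k\ge1$ if and only if $v_2(\mathrm{ord}_p(x))$ takes one common value $s\ge1$ over all primes $p\mid d$, where $v_2(m)$ denotes the exponent with $2^{v_2(m)}\,\|\,m$.

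For necessity, suppose $d\mid(x^k+1)$ and fix a prime $p\mid d$. Reducing mod $p$ gives $x^k\equiv-1\pmod p$, so $x^k$ has order $2$ modulo $p$ (here $p$ odd is essential, so that $-1\not\equiv1$). From $\mathrm{ord}_p(x^k)=\mathrm{ord}_p(x)/\gcd(\mathrm{ord}_p(x),k)=2$ I would read off $\mathrm{ord}_p(x)=2\gcd(\mathrm{ord}_p(x),k)$, and taking $2$-adic valuations yields $\min\bigl(v_2(\mathrm{ord}_p(x)),v_2(k)\bigr)=v_2(\mathrm{ord}_p(x))-1$. Since the right-hand side is strictly less than $v_2(\mathrm{ord}_p(x))$, the minimum must be attained at $k$, forcing $v_2(k)=v_2(\mathrm{ord}_p(x))-1$. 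As $k$ is a single fixed integer while $p$ ranges over the prime divisors of $d$, this pins down $v_2(\mathrm{ord}_p(x))=v_2(k)+1=:s$ for every such $p$, and $s\ge1$.

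For sufficiency, write $d=p_1^{e_1}\cdots p_t^{e_t}$. Because each $\mathbb{Z}_{p_i^{e_i}}^\times$ is cyclic, $\mathrm{ord}_{p_i^{e_i}}(x)$ equals $\mathrm{ord}_{p_i}(x)$ times a power of the odd prime $p_i$, so its $2$-adic valuation is still $s$; write $\mathrm{ord}_{p_i^{e_i}}(x)=2^s m_i$ with $m_i$ odd. Remark \ref{1} (the prime-power converse of Lemma \ref{ord2k}) then gives $x^{2^{s-1}m_i}\equiv-1\pmod{p_i^{e_i}}$. The one genuine difficulty is to merge these $t$ congruences, which live at different exponents, into a single $k$. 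The device I would use is $k=2^{s-1}L$ with $L=\mathrm{lcm}(m_1,\dots,m_t)$, which is odd: then $k/(2^{s-1}m_i)=L/m_i$ is an odd integer, so $x^k=(x^{2^{s-1}m_i})^{L/m_i}\equiv(-1)^{\mathrm{odd}}=-1\pmod{p_i^{e_i}}$ for each $i$, and the Chinese Remainder Theorem gives $x^k\equiv-1\pmod d$, i.e. $d\in G_{(a,b)}$.

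The crux is this last synchronization: each prime power individually admits solutions of the shape $2^{s-1}m_i\cdot(\text{odd})$, and a common solution exists precisely because the shared $2$-adic valuation $s$ lets me factor out $2^{s-1}$ and then cancel all the odd discrepancies at once with an $\mathrm{lcm}$, using that an odd power of $-1$ is again $-1$. I expect the bookkeeping of $2$-adic valuations — in particular keeping the statement's $s$ distinct from the half-order $2^{s-1}m_i$ that enters through Remark \ref{1} — to be the only place demanding care.
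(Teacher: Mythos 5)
Your proof is correct, but note that the paper never actually proves Proposition \ref{Mor2}: it is imported from Moree \cite{M1997} as a black box, so what you have supplied is a self-contained argument where the paper defers to a citation. Your two halves relate to the paper's material as follows. The sufficiency half is, in substance, the same mechanism the paper deploys in the converse half of its proof of Proposition \ref{prop2m}: lift the $2$-adic condition from $p_i$ to $p_i^{e_i}$ (Lemma \ref{Mor1}), apply Remark \ref{1} at each odd prime power to obtain $x^{2^{s-1}m_i}\equiv -1 \pmod{p_i^{e_i}}$, and synchronize the primes with an lcm of odd parts, an odd power of $-1$ being again $-1$; the paper runs this only for $s=1$ (with an extra factor $2^\beta$ in play), and your version with $2^{s-1}$ factored out is the natural generalization to arbitrary $s$. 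Your necessity half takes a genuinely different and cleaner route than the analogous step in the paper (which works with the minimal $k$ and Lemma \ref{ord2k}): the identity $\mathrm{ord}_p(x^k)=\mathrm{ord}_p(x)/\gcd(\mathrm{ord}_p(x),k)=2$, followed by taking $2$-adic valuations, pins down $v_2(\mathrm{ord}_p(x))=v_2(k)+1$ for every prime $p\mid d$ simultaneously, which both yields the common value $s$ and identifies it explicitly as $v_2(k)+1$. One small repair: your justification of $\mathrm{ord}_{p_i^{e_i}}(x)=\mathrm{ord}_{p_i}(x)\cdot p_i^{\alpha_i}$ by ``cyclicity'' is loose---the clean reason is that the kernel of the reduction map $\mathbb{Z}_{p_i^{e_i}}^\times\to\mathbb{Z}_{p_i}^\times$ is a group of order $p_i^{e_i-1}$---but the fact itself is exactly Lemma \ref{Mor1} of the paper, so you may simply cite it and nothing else in your argument is affected.
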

 \begin{proposition}[{\cite[Proposition 2.2]{J2017}}] \label{Jit1}  Let $a,b,d>1$ be pairwise coprime odd integers. Then $ d \in G_{(a,b)}$ if and only if $ 2d \in G_{(a,b)}$. \end{proposition}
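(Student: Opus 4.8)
The plan is to reduce both implications to the single elementary observation that for coprime odd integers $a$ and $b$, the integer $a^k+b^k$ is even for every $k\geq 1$; indeed, any power of an odd number is odd, so $a^k+b^k$ is a sum of two odd integers and hence even, regardless of the parity of $k$. Since $d$ is odd, $\gcd(2,d)=1$, and therefore $2d\mid(a^k+b^k)$ if and only if $d\mid(a^k+b^k)$, because the factor $2$ is automatically present in $a^k+b^k$ and contributes independently of $d$. Thus no change of the exponent $k$ is needed: the same witness $k$ serves for $d$ and for $2d$ simultaneously.

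Concretely, I would split into the two directions. For the forward implication, suppose $d\in G_{(a,b)}$, so that $d\mid(a^k+b^k)$ for some $k\geq 1$. As noted, $2\mid(a^k+b^k)$, and since $\gcd(2,d)=1$ it follows that $2d\mid(a^k+b^k)$, whence $2d\in G_{(a,b)}$. For the converse, suppose $2d\in G_{(a,b)}$, so that $2d\mid(a^k+b^k)$ for some $k\geq 1$; then a fortiori $d\mid(a^k+b^k)$, giving $d\in G_{(a,b)}$.

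The genuine content of this proposition is therefore the parity remark rather than any order computation: once one sees that $a^k+b^k$ carries a factor of $2$ for free when $a$ and $b$ are odd, the equivalence follows at once from the coprimality of $2$ and the odd integer $d$. I do not expect any real obstacle here; the only points requiring care are to invoke the oddness of both $a$ and $b$ (so that every $a^k+b^k$ is even, independent of the parity of $k$) and the oddness of $d$ (so that the divisibilities by $2$ and by $d$ may be combined), both of which are supplied by the hypotheses. In fact the argument uses nothing beyond these parity facts, so the pairwise coprimality of $a,b,d$ enters only through the standing assumption needed to make $G_{(a,b)}$ well defined.
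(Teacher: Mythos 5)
Your proof is correct and complete: the parity observation (that $a^k+b^k$ is even for odd $a,b$, regardless of $k$) combined with $\gcd(2,d)=1$ immediately yields the equivalence, with the same witness $k$ serving both $d$ and $2d$. The paper itself does not prove this statement but quotes it from \cite[Proposition 2.2]{J2017}; your argument is the natural elementary one, and you are also right that the hypotheses $d>1$ and the coprimality of $d$ with $a$ and $b$ play no role beyond the standing conventions, since only the oddness of $a$, $b$, and $d$ is used.
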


 \begin{lemma}[{\cite[Proposition 2]{M1997}}] \label{Mor1}  For an odd prime $p$, ${\rm ord}_{p^{e}}(\frac{a}{b})= {\rm ord}_p(\frac{a}{b})p^{\alpha}$ for some $\alpha\geq 0$. \end{lemma}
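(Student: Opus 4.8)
The plan is to work inside the unit group $(\mathbb{Z}/p^e\mathbb{Z})^\times$ with the single element $x:=ab^{-1}$ and to compare its order there with its order modulo $p$. Write $f:={\rm ord}_p(\frac{a}{b})$ and $m:={\rm ord}_{p^e}(\frac{a}{b})$; the goal is to produce an $\alpha\geq 0$ with $m=f\,p^{\alpha}$. The first, easy, step is the divisibility $f\mid m$: reduction modulo $p$ is a group homomorphism $(\mathbb{Z}/p^e\mathbb{Z})^\times\to(\mathbb{Z}/p\mathbb{Z})^\times$, so $x^m\equiv 1\bmod p^e$ forces $x^m\equiv 1\bmod p$, whence $f\mid m$.

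Next I set $y:=x^f$. Since $x^f\equiv 1\bmod p$, the element $y$ lies in the kernel $K$ of the reduction map, i.e.\ in the subgroup of residues congruent to $1$ modulo $p$, and $\lvert K\rvert=p^{e-1}$. As $K$ is a $p$-group, Lagrange's theorem forces ${\rm ord}_{p^e}(y)$ to be a power of $p$, say ${\rm ord}_{p^e}(y)=p^{\alpha}$ with $0\leq\alpha\leq e-1$. To convert this into a statement about $m$, write $m=f\,t$ (possible since $f\mid m$); then $y^{t}=x^{ft}=x^{m}\equiv 1\bmod p^e$, and $t$ is the least positive integer with $y^{t}\equiv 1$, since any smaller $t'$ would give $x^{ft'}\equiv 1$ with $ft'<m$, contradicting the minimality of $m$. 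Hence $t={\rm ord}_{p^e}(y)=p^{\alpha}$ and $m=f\,p^{\alpha}$, as claimed.

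The one point that genuinely uses the hypothesis that $p$ is odd — and hence the step to get right — is the assertion that ${\rm ord}_{p^e}(y)$ is a power of $p$. The cleanest route is the structural fact $\lvert K\rvert=p^{e-1}$, that the group of residues $\equiv 1\bmod p$ is a $p$-group. If one wishes to avoid quoting the structure of $(\mathbb{Z}/p^e\mathbb{Z})^\times$, the same conclusion follows by a direct lifting-the-exponent computation: writing $x^{f}\equiv 1+p^{s}u\bmod p^{e}$ with $p\nmid u$ and $s\geq 1$, an induction on $j$ shows that $x^{f p^{j}}\equiv 1+p^{s+j}u'\bmod p^{e}$ with $p\nmid u'$ as long as $s+j\leq e$, so that the least $j$ with $x^{fp^{j}}\equiv 1\bmod p^{e}$ is $\alpha=\max(0,e-s)$. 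This is precisely where oddness enters: for odd $p$ the valuation rises by exactly $1$ at each step, whereas for $p=2$ it can jump by more and the clean formula breaks down.
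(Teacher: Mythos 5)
Your proof is correct, but there is nothing in the paper to compare it against: the paper states this lemma as a quotation of \cite[Proposition 2]{M1997} and gives no proof of its own, so your argument serves as a self-contained substitute rather than a parallel to an internal proof. The core of your argument is sound: reduction modulo $p$ is a group homomorphism, giving $f\mid m$; the element $y=x^f$ lies in the kernel $K$ of $(\mathbb{Z}/p^e\mathbb{Z})^\times\to(\mathbb{Z}/p\mathbb{Z})^\times$, which has order $p^{e-1}$, so ${\rm ord}_{p^e}(y)$ is a $p$-power; and the minimality of $m$ correctly identifies ${\rm ord}_{p^e}(y)$ with $t=m/f$. One point in your closing commentary is wrong, however: the $p$-group step does \emph{not} use the hypothesis that $p$ is odd. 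The kernel $K$ has order $p^{e-1}$ for every prime, including $p=2$, so your main argument proves the statement verbatim for $p=2$ as well (where it holds trivially, since ${\rm ord}_2(x)=1$ and $(\mathbb{Z}/2^e\mathbb{Z})^\times$ is a $2$-group). Oddness becomes essential only in your alternative lifting-the-exponent route, where the clean formula $\alpha=\max(0,e-s)$ requires the valuation to rise by exactly one at each step; for $p=2$ this fails (e.g.\ $3^2=1+8$ jumps from valuation $1$ to valuation $3$). So the honest summary is: the lemma needs no oddness hypothesis at all, and oddness is only needed for the finer valuation statements elsewhere in the paper, such as the prime-power case discussed in Remark \ref{1}.
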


 The next proposition is a  correction of   \cite[Proposition 2.3]{J2017}.
 \begin{proposition}  \label{prop2m} Let $a,b$ and $d>1$   be pairwise coprime odd positive integers  and let  $\beta\geq 2$ be  an  integer. Then    $2^\beta d\in G_{(a,b)}$ if and only if   $2^\beta|(a+b)$
     and   $2|| {\rm ord}_{p}(\frac{a}{b})$ for every prime $p$ dividing $d$.
     In this case,    ${\rm ord}_{2^\beta}(\frac{a}{b})=2$  and   $2|| {\rm ord}_{2^\beta d}(\frac{a}{b})$.
 \end{proposition}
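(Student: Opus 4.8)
The plan is to translate the divisibility condition into a statement about a single unit. Setting $x = ab^{-1}$ (which is legitimate since $a,b$ are coprime to $2^\beta d$), the membership $2^\beta d \in G_{(a,b)}$ is equivalent to the existence of a positive integer $k$ with $x^k \equiv -1 \pmod{2^\beta d}$, because $a^k+b^k\equiv 0$ is the same as $(ab^{-1})^k\equiv -1$. Since $\gcd(2^\beta, d) = 1$, the Chinese Remainder Theorem lets me analyse this congruence separately modulo $2^\beta$ and modulo $d$; the crucial link is that the \emph{same} exponent $k$ must serve both reductions.

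For the forward direction I would first extract the parity of $k$ by reducing modulo $4$ (valid as $\beta \ge 2$): from $x^k \equiv -1 \pmod 4$ and $x$ odd one reads off $x \equiv -1 \pmod 4$ and $k$ odd. I would then upgrade this to $x \equiv -1 \pmod{2^\beta}$ using that, since $k$ is odd and hence coprime to $\lvert \mathbb{Z}_{2^\beta}^\times\rvert = 2^{\beta-1}$, the $k$-th power map is a bijection of $\mathbb{Z}_{2^\beta}^\times$; as $(-1)^k = -1$, the unique solution of $x^k \equiv -1$ is $x \equiv -1 \pmod{2^\beta}$, which is exactly $2^\beta \mid (a+b)$ and gives ${\rm ord}_{2^\beta}(\frac{a}{b})=2$ by \eqref{eq:ordMod2b}. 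For the part concerning $d$, I would use that $x^k \equiv -1 \pmod p$ with $k$ odd forces $2 \,||\, {\rm ord}_p(x)$: writing $o = {\rm ord}_p(x)$, one has $o \mid 2k$ but $o \nmid k$ (since $-1 \not\equiv 1 \pmod p$ as $p$ is odd), and as $2k$ carries a single factor of $2$ this pins down $2 \,||\, {\rm ord}_p(\frac{a}{b})$ for every prime $p \mid d$.

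For the converse I would run these implications backwards. From $2^\beta \mid (a+b)$ I get $x \equiv -1 \pmod{2^\beta}$, so ${\rm ord}_{2^\beta}(x) = 2$. From $2\,||\,{\rm ord}_p(x)$ for each $p \mid d$, Lemma~\ref{Mor1} (odd $p$) gives $2\,||\,{\rm ord}_{p^e}(x)$ for each prime power $p^e \,||\, d$; writing ${\rm ord}_{p^e}(x) = 2 s_p$ with $s_p$ odd, Remark~\ref{1} yields $x^{s_p} \equiv -1 \pmod{p^e}$. Taking $t = \operatorname{lcm}_p s_p$, which is odd, each quotient $t/s_p$ is odd, so $x^{t} \equiv -1 \pmod{p^e}$ for all $p^e \,||\, d$, whence $x^t \equiv -1 \pmod d$ by CRT; and $x^t \equiv (-1)^t = -1 \pmod{2^\beta}$ since $t$ is odd. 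Recombining via CRT gives $x^t \equiv -1 \pmod{2^\beta d}$, i.e. $2^\beta d \mid (a^t + b^t)$, so $2^\beta d \in G_{(a,b)}$. The auxiliary claims then follow from ${\rm ord}_{2^\beta d}(x) = \operatorname{lcm}({\rm ord}_{2^\beta}(x), {\rm ord}_d(x)) = \operatorname{lcm}(2, 2t) = 2t$ with $t$ odd.

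I expect the main obstacle to be the $2$-adic step in the forward direction, namely passing from the easy congruence $x \equiv -1 \pmod 4$ to the full $x \equiv -1 \pmod{2^\beta}$. This is precisely where the flawed implication of \cite{J2017} lived, and it must be justified by the bijectivity of the odd-power map on $\mathbb{Z}_{2^\beta}^\times$ rather than by a naive order computation. Keeping careful track that one fixed odd $k$ simultaneously governs the residue modulo $2^\beta$ and the residues modulo every prime power dividing $d$ is the delicate bookkeeping that makes the two halves of the biconditional fit together.
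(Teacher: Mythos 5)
Your proof is correct, and its converse half is essentially the paper's own: Lemma~\ref{Mor1} to lift $2\,||\,{\rm ord}_p(\frac{a}{b})$ to prime powers, Remark~\ref{1} to produce $x^{s_p}\equiv -1 \pmod{p^{e}}$, an odd least common multiple, and CRT recombination --- this is step for step the paper's argument with its $r_i$ renamed $s_p$. Where you genuinely diverge is the forward direction. The paper fixes the \emph{minimal} $k$ with $2^\beta d\mid (a^k+b^k)$, gets $k$ odd and $2^\beta\mid(a+b)$ from Proposition~\ref{prop2.01} (whose proof rests on the identity $a^k+b^k=(a+b)\sum_{i=0}^{k-1}(-1)^ia^{k-1-i}b^i$ with odd second factor), then introduces the minimal $k'$ for $d$ alone, invokes Lemma~\ref{ord2k} to get ${\rm ord}_d(\frac{a}{b})=2k'$, argues $k=k'$ by minimality, and finally reads off $2\,||\,{\rm ord}_{p_i}(\frac{a}{b})$ from the lcm expansion coming from Lemma~\ref{Mor1}. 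You dispense with minimality entirely: any witness $k$ works, because (i) reduction mod $4$ forces $k$ odd, (ii) the $k$-th power map is a bijection of the $2$-group $\mathbb{Z}_{2^\beta}^\times$ since $\gcd(k,2^{\beta-1})=1$, so $x^k\equiv -1=(-1)^k$ pins down $x\equiv -1 \pmod{2^\beta}$ --- a clean, correct substitute both for the false step of \cite{J2017} and for the paper's factorization identity --- and (iii) for each prime $p\mid d$, the order $o={\rm ord}_p(\frac{a}{b})$ satisfies $o\mid 2k$ and $o\nmid k$, and since $2k$ has $2$-adic valuation $1$ this forces $2\,||\,o$, with no appeal to Lemma~\ref{ord2k}, Lemma~\ref{Mor1}, or the identification $k=k'$. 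This buys a shorter and more robust forward implication; what the paper's longer route buys is that it computes ${\rm ord}_d(\frac{a}{b})=2k$ for the minimal witness along the way, which feeds directly into the closing ``in this case'' claims. You instead derive those claims from the converse side, which is legitimate once the equivalence is established; the only point worth making explicit is the one-line justification that ${\rm ord}_d(\frac{a}{b})={\rm lcm}_p\bigl(2s_p\bigr)=2t$ --- the same order-lcm-over-coprime-moduli fact you already use for ${\rm ord}_{2^\beta d}$ --- before concluding ${\rm lcm}(2,2t)=2t$.
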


 \begin{proof}
     
     Suppose $2^{\beta}d \in G_{(a,b)}$. Let $k$ be the smallest positive integer such that $2^{\beta}d|(a^k + b^k)$.  Then $d|(a^k+b^k) $ and  $2^\beta |(a^k+b^k)$   which implies that  $(ab^{-1})^{2k} \equiv 1 ~{\rm mod} ~d$.  Moreover,   $2^\beta |(a+b)$   and    $k$ must be odd   by    Proposition~\ref{prop2.01} and its proof. Let $k'$ be the smallest  positive integer  such that  $d|(a^{k'}+b^{k'})$.  Then $ {\rm ord}_{d}(\frac{a}{b})=2k'$  by Lemma \ref{ord2k}.  Since  $(ab^{-1})^{2k} \equiv 1 ~{\rm mod} ~d$,  we have  $k'|k$.    Consequently,  $k'$ is odd   and    $(a+b)|(a^{k'}+b^{k'})$.  Hence, $2^\beta d| (a^{k'}+b^{k'})$. By the minimality of $k$, we have  $k= k'$  and  ${\rm ord}_{d}(\frac{a}{b})=2k'=2k$, where $k$ is odd. Let $d = p_1^{e_1}p_2^{e_2}\cdots p_t^{e_t}$ where $p_i$ are odd primes and $e_i \geq 1$. Then, using Lemma ~\ref{Mor1},\vspace{2mm}
     
     $ ~~~~~~~~~~~~~~2k={\rm ord}_d(\frac{a}{b})= {\rm lcm}\big(  {\rm ord}_{p_1}(\frac{a}{b})p_1^{\alpha_1}, {\rm ord}_{p_2}(\frac{a}{b}) p_2^{\alpha_2}, \cdots, {\rm ord}_{p_t}(\frac{a}{b})p_t^{\alpha_t} \big),\vspace{2mm}$

     \noindent where $\alpha_i$ are some non-negative integers. Also $(ab^{-1})^k \equiv -1 ~ {\rm mod}~ p_i$ for all $i,~ 1\leq i \leq t$. Therefore,  ${\rm ord}_{p_i}(\frac{a}{b})$ is even and so $2|| {\rm ord}_{p_i}(\frac{a}{b})$ for each $i$.
     
     Conversely let $2|| {\rm ord}_{p_i}(\frac{a}{b})$ for each $p_i| d$. This gives $2|| {\rm ord}_{p_i^{e_i}}(\frac{a}{b})$ for each $i$ by Lemma \ref{Mor1}. Let $ {\rm ord}_{p_i^{e_i}}(\frac{a}{b})=2r_i$, where $r_i$ is odd. Therefore, by Remark \ref{1},  $(ab^{-1})^{r_i} \equiv -1 ~ {\rm mod}~ p_i^{e_i}$ for all $i, 1\leq i \leq t$. Let $k= {\rm lcm}(r_1,r_2,\cdots,r_t)$, $k$ is odd and let $k=r_ir'_i$. Each of $r'_i$ is also odd. Then $(ab^{-1})^k \equiv (ab^{-1})^{r_i r'_i} \equiv (-1)^{r'_i} \equiv -1 ~ {\rm mod}~ p_i^{e_i}$ for each $i$. Therefore, $(ab^{-1})^k \equiv -1 ~ {\rm mod}~ d$ which implies $ {\rm ord}_{d}(\frac{a}{b})=2k$ by Lemma \ref{ord2k}.  Now $ 2^{\beta}| a+b$ implies $2^{\beta}| a^k+b^k$ as $k$ is odd. Hence $(ab^{-1})^k \equiv -1 ~{\rm mod}~ 2^{\beta}d$, i.e., $2^{\beta}d \in G_{(a,b)}$.

     In this case,      we have $2^\beta|(a+b)$ which implies that
     ${\rm ord}_{2^\beta}(\frac{a}{b})=2$ by \eqref{eq:ordMod2b}. Moreover, $ {\rm ord}_{2^\beta  d}(\frac{a}{b})= {\rm lcm} \left({\rm ord}_{2^\beta  }(\frac{a}{b}), {\rm ord}_{ d}(\frac{a}{b}) \right)=2k $ and $k$ is odd. Hence,
     $2|| {\rm ord}_{2^\beta  d}(\frac{a}{b})$ as desired.
 \end{proof}

 \begin{remark} 
     As a consequence of the above corrections,  the bullets  $(c)$ and $(d)$ of  \cite[Theorem 2.1 and Theorem  3.1]{J2017}  should be  rewritten  as follows.

     \begin{enumerate}
         \item[$(c)$]  {$\beta \ge 2$, $d=1$  and    $ 2^\beta |(a+b)$.}
         
         \item[$(d)$] $\beta \ge  2$, $d\geq 3$,    $ 2^\beta |(a+b)$ and    $2|| {\rm ord}_{p}(\frac{a}{b})$ for every prime $p$ dividing $d$.
     \end{enumerate}
     Note that the above corrections do not affect any other result given in \cite{J2017}.
 \end{remark}

 \subsection{Odd $2^\beta$-Good Integers}


 The characterization  of odd $2^\beta$-good integers greater than $1$   is summarized   in the next proposition.
 \begin{proposition}  \label{oddG}
     Let $a$, $b$  and $d>1$ be  pairwise coprime odd     integers  and let $\beta$ be  a non-negative integer. Then the following statements hold.
     \begin{enumerate}[$1)$]
         \item  ({\cite[Theorem 1]{M1997}})  If $\beta=0$, then $d\in G_{(a,b)}=G_{(a,b)}(\beta)$   if and only if there exists $s\geq 1$ such that $2^s|| {\rm ord}_{p}(\frac{a}{b})$ for every prime $p$ dividing~$d$.
         \item (\cite[Proposition 2.2]{J2017})  If $\beta=1$, then    $d\in G_{(a,b)}(\beta)$
         if and only if        $d\in G_{(a,b)}$.
         \item (Proposition~\ref{prop2m})  If  $\beta\geq 2$, then    $d\in G_{(a,b)}(\beta)$ if and only if
         $2^\beta|(a+b)$  and $2|| {\rm ord}_{p}(\frac{a}{b})$ for every prime $p$ dividing $d$.
     \end{enumerate}
 \end{proposition}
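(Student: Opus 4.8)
The plan is to recognize that this proposition is a consolidation of three results already at hand in the excerpt, and that each of its three cases follows by unwinding the definition $G_{(a,b)}(\beta)=\{d : 2^\beta d\in G_{(a,b)}\}$ and then invoking the appropriate earlier statement. Thus the entire argument reduces to careful bookkeeping of which prior result applies in which range of $\beta$, together with a check that the standing hypotheses (namely that $a$, $b$, $d>1$ are pairwise coprime odd integers) match exactly those required by the cited results.

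First I would dispose of the case $\beta=0$. Here $2^{0}d=d$, so directly from the definition $G_{(a,b)}(0)=G_{(a,b)}$, and the asserted equivalence is precisely the content of Proposition~\ref{Mor2} (that is, \cite[Theorem 1]{M1997}) applied to the odd integer $d>1$. No further work is needed.

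Next, for $\beta=1$, the definition gives $d\in G_{(a,b)}(1)$ if and only if $2d\in G_{(a,b)}$. Since $a,b,d$ are pairwise coprime odd integers with $d>1$, Proposition~\ref{Jit1} (\cite[Proposition 2.2]{J2017}) states exactly that $2d\in G_{(a,b)}$ if and only if $d\in G_{(a,b)}$. Chaining these two equivalences yields $d\in G_{(a,b)}(1)$ if and only if $d\in G_{(a,b)}$, as claimed.

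Finally, for $\beta\geq 2$, the definition gives $d\in G_{(a,b)}(\beta)$ if and only if $2^\beta d\in G_{(a,b)}$, which is precisely the situation analyzed in Proposition~\ref{prop2m}. Its conclusion is that $2^\beta d\in G_{(a,b)}$ holds if and only if $2^\beta\mid(a+b)$ and $2\,||\,{\rm ord}_{p}(\frac{a}{b})$ for every prime $p$ dividing $d$; substituting the definitional equivalence settles this case. Since the three ranges of $\beta$ are mutually exclusive and exhaust all non-negative integers, the proposition follows. Because each step is a direct citation, there is no genuine obstacle: the only point demanding attention is verifying that the oddness and pairwise-coprimality hypotheses here coincide with those imposed in Propositions~\ref{Mor2}, \ref{Jit1}, and \ref{prop2m}, so that every invocation is legitimate.
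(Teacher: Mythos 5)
Your proposal is correct and matches the paper exactly: the paper states this proposition as a summary with no separate proof, attributing item 1 to \cite[Theorem 1]{M1997} (Proposition~\ref{Mor2}), item 2 to \cite[Proposition 2.2]{J2017} (Proposition~\ref{Jit1}), and item 3 to Proposition~\ref{prop2m}, just as you do after unwinding the definition of $G_{(a,b)}(\beta)$. Your explicit check that the pairwise-coprimality and oddness hypotheses align with those of the cited results is the only substantive verification required, and it is carried out correctly.
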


 For  coprime odd  integers $a$ and $b$, it is obvious that $1$ is an element in both $ G_{(a,b)}(0)$ and $G_{(a,b)}(1)$.   For $\beta\geq 2$, we have the following characterization.
 
 \begin{corollary} \label{corBad}
     Let $a$ and $b$   be  coprime odd  integers  and  let $\beta\geq 2$ be  an  integer. Then the following statements are equivalent.
     \begin{enumerate}[$1)$]
         \item $1\notin G_{(a,b)}(\beta)$.
         \item $2^\beta \nmid (a+b)$.
         
         \item $n \notin G_{(a,b)}(\beta)$ for all odd natural numbers $n$.
     \end{enumerate}
 \end{corollary}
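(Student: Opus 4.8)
The plan is to prove the equivalence $(1) \Leftrightarrow (2)$ outright and then close the loop with the two implications $(2) \Rightarrow (3)$ and $(3) \Rightarrow (1)$, with essentially all of the content concentrated in $(2) \Rightarrow (3)$. The equivalence $(1) \Leftrightarrow (2)$ is just the contrapositive of Proposition~\ref{prop2.01}: that proposition states that $1 \in G_{(a,b)}(\beta)$ if and only if $2^\beta \mid (a+b)$, so negating both sides yields $(1) \Leftrightarrow (2)$ immediately. The implication $(3) \Rightarrow (1)$ is free, since $1$ is itself an odd natural number and so the universal statement in $(3)$ specialises to $1 \notin G_{(a,b)}(\beta)$.

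The heart of the argument is $(2) \Rightarrow (3)$: assuming $2^\beta \nmid (a+b)$, I would show $n \notin G_{(a,b)}(\beta)$ for every odd natural number $n$. The case $n = 1$ is already covered by the equivalence $(1) \Leftrightarrow (2)$. For odd $n > 1$ I would split according to whether $n$ is coprime to $ab$. If $\gcd(n, ab) = 1$, then $a$, $b$, $n$ are pairwise coprime odd integers and Proposition~\ref{prop2m} applies verbatim: membership $2^\beta n \in G_{(a,b)}$ would force $2^\beta \mid (a+b)$, contradicting the hypothesis, and hence $n \notin G_{(a,b)}(\beta)$.

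The step that needs the most care is the remaining subcase $\gcd(n, ab) > 1$, because here the pairwise-coprimality hypothesis of Proposition~\ref{prop2m} fails and that proposition cannot be invoked. I would instead argue directly: choose a prime $p$ dividing both $n$ and $ab$. Since $a$ and $b$ are coprime, $p$ divides exactly one of them, say $p \mid a$; then $a^k + b^k \equiv b^k \pmod{p}$ with $\gcd(b, p) = 1$, so $p \nmid (a^k + b^k)$ for every $k$. Consequently $n \nmid (a^k + b^k)$, whence $2^\beta n \nmid (a^k + b^k)$ for all $k$, giving $n \notin G_{(a,b)}(\beta)$. This completes $(2) \Rightarrow (3)$ and closes the cycle. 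I expect this non-coprime subcase to be the only genuine obstacle, since it is the one place where the cited propositions do not apply off the shelf; everything else reduces to bookkeeping on the earlier results.
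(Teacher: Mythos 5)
Your proof is correct and structurally identical to the paper's: Proposition~\ref{prop2.01} gives $1)\Leftrightarrow 2)$, the characterization of odd $2^\beta$-good integers gives $2)\Rightarrow 3)$, and $3)\Rightarrow 1)$ is the trivial specialization to $n=1$. The one genuine difference is your treatment of the subcase $\gcd(n,ab)>1$. The paper disposes of $2)\Rightarrow 3)$ with a single citation of Proposition~\ref{oddG}, whose hypotheses require $n>1$ to be pairwise coprime with $a$ and $b$, so strictly read it leaves the non-coprime subcase untreated; your direct argument --- pick a prime $p$ dividing both $n$ and, say, $a$, so that $a^k+b^k\equiv b^k\not\equiv 0 \pmod{p}$ for every $k$, whence $2^\beta n$ can never divide $a^k+b^k$ --- fills this in at the cost of a few lines. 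So your proposal is, if anything, slightly more complete than the printed proof: the non-coprime case is exactly the place where the cited propositions do not apply off the shelf, and you correctly identified and handled it rather than papering over it.
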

 \begin{proof}  From Proposition~\ref{prop2.01},  we have that 1) and 2) are  equivalent.
     
     To prove 2) implies 3), assume that $2^\beta \nmid (a+b)$.  By Proposition~\ref{oddG}, we therefore have  $n \notin G_{(a,b)}(\beta)$ for all odd natural numbers $n$. That 3)  implies 1) is clear.
 \end{proof}
 
 Next, we consider  a  product of two odd  $\beta$-good integers and the  divisors of an odd $\beta$-good integer.

 \begin{proposition}\label{factorG}
     Let $a$ and  $b$    be    coprime odd     integers. Let $\beta$ be  a non-negative  integer and let $c$ and  $d$ be odd positive integers.
     \begin{enumerate}[$1)$]
         \item  If      $cd \in G_{(a,b)}(\beta)$, then   $c\in G_{(a,b)}(\beta)$ and $d \in G_{(a,b)}(\beta)$.
         \item If $\beta\geq 2$, then     $cd \in G_{(a,b)}(\beta)$ if and only if   $c\in G_{(a,b)}(\beta)$ and $d \in G_{(a,b)}(\beta)$.
     \end{enumerate}
 \end{proposition}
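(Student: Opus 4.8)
The plan is to handle the two parts separately: part 1) is a one-line divisibility observation valid for every $\beta\geq 0$, while part 2) reduces, for $\beta\geq 2$, to the prime-by-prime characterization already available in Proposition~\ref{oddG}.

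For part 1) I would simply unwind the definitions. Saying $cd\in G_{(a,b)}(\beta)$ means $2^\beta cd\in G_{(a,b)}$, so there is a positive integer $k$ with $2^\beta cd\mid(a^k+b^k)$. Since both $2^\beta c$ and $2^\beta d$ divide $2^\beta cd$, the \emph{same} $k$ witnesses $2^\beta c\mid(a^k+b^k)$ and $2^\beta d\mid(a^k+b^k)$, and hence $c\in G_{(a,b)}(\beta)$ and $d\in G_{(a,b)}(\beta)$. No case analysis is needed, and the value of $\beta$ is irrelevant.

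For part 2) the forward implication is exactly part 1), so only the converse requires work, and this is where $\beta\geq 2$ enters. I would first dispose of the degenerate cases: if $c=d=1$ then $cd=1$ and membership is governed by Proposition~\ref{prop2.01}, namely $2^\beta\mid(a+b)$, which is already guaranteed because $c=1\in G_{(a,b)}(\beta)$; if exactly one of $c,d$ equals $1$, then $cd$ is the other factor and there is nothing to prove. In the remaining case $c,d>1$, I would invoke Proposition~\ref{oddG} 3): from $c,d\in G_{(a,b)}(\beta)$ we read off $2^\beta\mid(a+b)$ together with $2\mid\mid{\rm ord}_p(\frac{a}{b})$ for every prime $p$ dividing $c$ and for every prime $p$ dividing $d$. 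Because the prime divisors of the odd integer $cd$ are exactly the union of those of $c$ and of $d$, the condition $2\mid\mid{\rm ord}_p(\frac{a}{b})$ holds for every prime $p\mid cd$; combined with $2^\beta\mid(a+b)$ and $cd>1$, Proposition~\ref{oddG} 3) read in reverse gives $cd\in G_{(a,b)}(\beta)$.

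The only substantive point is the recognition that the $\beta\geq 2$ criterion is \emph{local} in the primes dividing its argument, apart from the single global requirement $2^\beta\mid(a+b)$ that $c$ and $d$ share; this is precisely what lets the two prime-divisor conditions merge without friction. I expect no genuine obstacle, but it is worth flagging why the equivalence is not claimed for $\beta\in\{0,1\}$: there the relevant criterion (Proposition~\ref{oddG} 1), with the $\beta=1$ case reducing to it) only asks for the existence of \emph{some} $s\geq 1$ with $2^s\mid\mid{\rm ord}_p(\frac{a}{b})$ uniform across the primes of a single integer, and the exponent $s$ forced by $c$ need not agree with that forced by $d$, so the conditions need not combine. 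A characterization-free alternative for the converse would take smallest witnesses $k_1,k_2$ for $c$ and $d$, observe they are odd by (the proof of) Proposition~\ref{prop2m}, and verify that ${\rm lcm}(k_1,k_2)$ works simultaneously modulo $cd$ and modulo $2^\beta$; this is more computational, and I would prefer the prime-divisor argument above.
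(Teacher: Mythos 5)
Your proof is correct and takes essentially the same route as the paper: part 1) and the forward direction of 2) follow directly from the definition (the same witness $k$ works for both factors), and the converse of 2) follows by merging the prime-divisor criterion of Proposition~\ref{oddG}~3), which is exactly how the paper argues. Your explicit treatment of the cases $c=1$ or $d=1$ (formally excluded from Proposition~\ref{oddG}) and your explanation of why the equivalence fails for $\beta\in\{0,1\}$ are careful additions consistent with the paper's counterexamples, but they do not change the underlying argument.
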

 \begin{proof} 1) and the sufficient part of 2) follow directly from the definition of $G_{(a,b)}(\beta)$.
     The necessary part of 2) follows from Proposition~\ref{oddG}.
 \end{proof}

 \begin{remark}
     We note that the converse of 1) in Proposition  \ref{factorG} does not need to be  true for $ \beta\in \{ 0, 1\}$.   Using Proposition~\ref{oddG} and a direct calculation,  we have the following examples.
     \begin{enumerate}
         \item $ 5\in  G_{(3,1)}(0)$ and $ 7\in  G_{(3,1)}(0)$ but $ 5\times 7\notin  G_{(3,1)}(0)$.
         \item  $ 5\in  G_{(3,1)}(1)$ and $ 7\in  G_{(3,1)}(1)$ but $ 5\times 7\notin  G_{(3,1)}(1)$.
     \end{enumerate}
 \end{remark}

 The following corollary can be obtained immediately from Proposition~\ref{factorG}.
 \begin{corollary}
     Let $a$, $b$  and $d>1$ be  pairwise coprime odd     integers  and let $\beta$ be  a non-negative   integer.   Then the following statements hold.
     \begin{enumerate}
         \item     If      $d \in G_{(a,b)}(\beta)$,  then   $c\in G_{(a,b)}(\beta)$ for all proper/prime  divisors $c$ of $d$.
         \item     If   $\beta\geq 2$, then  $ d\in G_{(a,b)}(\beta)$ if and only if  $c\in G_{(a,b)}(\beta)$ for all proper/prime divisors $c$ of $d$.
     \end{enumerate}
 \end{corollary}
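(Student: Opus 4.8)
The plan is to read off both statements from Proposition~\ref{factorG} by applying it to suitable factorizations of $d$, using only that every divisor of the odd number $d$ is again odd and positive. For statement 1), I would take a proper or prime divisor $c$ of $d$ and write $d = ce$ with $e = d/c$, an odd positive integer. Then $d \in G_{(a,b)}(\beta)$ means $ce \in G_{(a,b)}(\beta)$, and statement 1) of Proposition~\ref{factorG}, which is valid for every $\beta \geq 0$, yields $c \in G_{(a,b)}(\beta)$ (together with $e \in G_{(a,b)}(\beta)$, which is not needed). The only degenerate case is $c = d$, which can occur in the prime-divisor reading when $d$ is itself prime; there the desired conclusion is exactly the hypothesis, so nothing is to prove.

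For statement 2) with $\beta \geq 2$, the forward implication is immediate from statement 1). For the converse I would first record the multiplicative closure that the sufficient part of statement 2) of Proposition~\ref{factorG} provides when $\beta \geq 2$: if $c_1,\dots,c_m$ are odd integers each lying in $G_{(a,b)}(\beta)$, then $c_1\cdots c_m \in G_{(a,b)}(\beta)$, proved by a short induction on $m$ that applies Proposition~\ref{factorG} to the splitting $c_1\cdots c_m = (c_1\cdots c_{m-1})\,c_m$ at each step. Writing $d = p_1^{e_1}\cdots p_t^{e_t}$ and assuming every prime divisor $p_i$ lies in $G_{(a,b)}(\beta)$, the integer $d$ is a product of the $p_i$ counted with multiplicity, each a member of $G_{(a,b)}(\beta)$, so the closure step forces $d \in G_{(a,b)}(\beta)$. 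For the proper-divisor formulation I would instead factor $d = ce$ with $1 < c, e < d$ whenever $d$ is composite, both factors then being proper divisors and hence good by hypothesis, and appeal to the prime-divisor version in the remaining case where $d$ is prime.

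The one point that needs care, and the reason the equivalence in statement 2) is restricted to $\beta \geq 2$, is precisely this converse direction: the closure property \emph{products of good integers are good} genuinely fails for $\beta \in \{0,1\}$, as the Remark following Proposition~\ref{factorG} exhibits with $5,7 \in G_{(3,1)}(\beta)$ yet $5\times 7 \notin G_{(3,1)}(\beta)$. Hence the multiplicative-closure step must draw on the $\beta \geq 2$ hypothesis through Proposition~\ref{factorG}, and cannot be obtained from statement 1) alone. Apart from this, the argument is routine bookkeeping of factorizations, with the only mild subtlety being the degenerate prime case, resolved by reading the divisor condition as a condition on prime divisors when $d$ is itself prime.
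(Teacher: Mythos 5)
Your argument is correct and takes essentially the same route as the paper, which offers no proof beyond asserting that the corollary ``can be obtained immediately from Proposition~\ref{factorG}''; your factorization $d=ce$ for 1), and the induction giving multiplicative closure for the converse in 2), are exactly the details the paper leaves implicit. Your care with the case of prime $d$ is justified rather than pedantic: under the strict proper-divisor reading the converse in 2) literally fails for prime $d$ (take $a=3$, $b=1$, $\beta=2$, $d=5$: the only proper divisor $1$ lies in $G_{(3,1)}(2)$ since $4\mid(3+1)$, yet $5\notin G_{(3,1)}(2)$ because $2^2\,\|\,{\rm ord}_5(3)$), so resolving that case via the prime-divisor reading, as you do, is the only sound interpretation of the statement.
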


 \subsection{Arbitrary $2^\beta$-Good Integers}
 
 Here, we focus on arbitrary $2^\beta$-good integers and derive the following results.

 \begin{lemma} \label{Gb+1}
     Let $a$ and $b$   be    coprime odd     integers  and let $\beta$   be   a positive integer such that $2^\beta||(a+b)$.  Then the following statements hold.
     \begin{enumerate}[$1)$]
         \item   $ G_{(a,b)}(\beta)\ne \emptyset$ and every element in $ G_{(a,b)}(\beta)$ is odd.
         \item  $ G_{(a,b)}(\beta+1)=\emptyset$.
     \end{enumerate}
 \end{lemma}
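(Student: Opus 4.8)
The entire lemma rests on a single fact about the $2$-adic valuation $v_2$ of $a^k+b^k$, so my plan is to isolate and prove the claim that, under the hypothesis $2^\beta||(a+b)$ (and hence $\beta\ge 1$), one has $2^{\beta+1}\nmid(a^k+b^k)$ for every positive integer $k$. I would prove this by splitting on the parity of $k$, reusing the two elementary facts already present in the proof of Proposition~\ref{prop2.01}. When $k$ is even, both $a^k$ and $b^k$ are $\equiv 1\pmod 4$, so $a^k+b^k\equiv 2\pmod 4$ and $v_2(a^k+b^k)=1<\beta+1$. When $k$ is odd, the factorization $a^k+b^k=(a+b)\sum_{i=0}^{k-1}(-1)^i a^{k-1-i}b^i$ applies, and the second factor is a sum of an odd number of odd terms and hence odd; thus $v_2(a^k+b^k)=v_2(a+b)=\beta<\beta+1$. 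This settles the claim in both cases.

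With the claim in hand, part~2) is immediate: if some $d$ were to lie in $G_{(a,b)}(\beta+1)$, then $2^{\beta+1}d\mid(a^k+b^k)$ for some $k$, forcing $2^{\beta+1}\mid(a^k+b^k)$ and contradicting the claim; hence $G_{(a,b)}(\beta+1)=\emptyset$.

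For part~1), nonemptiness follows at once from Proposition~\ref{prop2.01}, since $2^\beta||(a+b)$ gives in particular $2^\beta\mid(a+b)$ and therefore $1\in G_{(a,b)}(\beta)$. To show that every element of $G_{(a,b)}(\beta)$ is odd, I would argue by contradiction: if an even $d$ belonged to $G_{(a,b)}(\beta)$, then $2^{\beta+1}\mid 2^\beta d\mid(a^k+b^k)$ for some $k$, once more contradicting the claim.

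The hard part here is essentially nonexistent; the only point needing attention is the even-$k$ case of the valuation claim, where one must notice that $v_2(a^k+b^k)$ equals exactly $1$ rather than merely being at least $1$, so that it stays strictly below $\beta+1$ even in the boundary case $\beta=1$. Once the claim is phrased as ``$2^{\beta+1}$ never divides $a^k+b^k$,'' both statements of the lemma collapse to one-line deductions.
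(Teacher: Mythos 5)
Your proof is correct, but it follows a genuinely different route from the paper's. The paper deduces both parts from the characterization machinery already established: writing a hypothetical even element of $G_{(a,b)}(\beta)$ as $2^id$ with $d$ odd (and arguing similarly in part 2), it invokes 3) of Proposition~\ref{oddG} (i.e., Proposition~\ref{prop2m}, which rests on the order-theoretic Lemma~\ref{ord2k}) to force $2^{\beta+i}\mid(a+b)$, contradicting $2^\beta||(a+b)$. You instead prove one self-contained valuation claim, namely $2^{\beta+1}\nmid(a^k+b^k)$ for every $k\geq 1$, using only the two elementary facts already present in the proof of Proposition~\ref{prop2.01} (even powers of odd numbers are $1 \bmod 4$; the odd-exponent factorization with odd cofactor), and both parts of the lemma then follow in one line each. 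It is worth noting that your claim is precisely the assertion $2^{\beta+1}\notin G_{(a,b)}$, which is the contrapositive of Proposition~\ref{prop2.01} applied with exponent $\beta+1$, so you could simply have cited that proposition rather than re-proving its ingredients. What your route buys: it needs nothing beyond Proposition~\ref{prop2.01}, avoiding the multiplicative-order machinery entirely, and it uniformly covers the corner cases ($d=1$ or $d$ even) where the paper's appeal to Proposition~\ref{oddG} 3) --- stated only for odd $d>1$ coprime to $a$ and $b$ --- is slightly loose as written. What the paper's route buys: since Proposition~\ref{prop2m} is already proved and needed elsewhere, its argument is shorter and keeps the lemma aligned with the characterization it is later combined with in Theorem~\ref{thmSet}.
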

 \begin{proof} Clearly,  $ 1\in G_{(a,b)}(\beta)$.  Next, suppose that $ G_{(a,b)}(\beta)$ contains an even integer, denoted it by $2^id$ for some positive integer $i$ and odd positive  integer $d$.  Then $d\in   G_{(a,b)}(\beta+i)$. Since $\beta+i\geq 2$,  we have $2^{\beta+i}|(a+b)$ by 3) of Proposition~\ref{oddG}. This is  a contradiction. Hence, 1) is proved.

     To prove 2), suppose that  $ G_{(a,b)}(\beta+1)\ne \emptyset$.   Let  $ d \in G_{(a,b)}(\beta+1)$ so $2^{\beta +1}d \in G_{(a,b)}$.  Since  $\beta+1\ge 2$,    we have
     $ {2^{\beta+1}}|(a+b)$  by  3) of Proposition~\ref{oddG}. This is a contradiction.
 \end{proof}

 The characterization of the set $ G_{(a,b)}(\beta )$ is  given in the next theorem.
 
 \begin{theorem} \label{thmSet} Let $a$ and $b$ be  coprime  odd integers and let $\beta\geq 0$ be an integer. Let $\gamma$ be a positive  integer such that $2^\gamma||(a+b)$. Then the following statements hold.
     \begin{enumerate}[$1)$]
         
         \item   If $\gamma<\beta$, then  $ G_{(a,b)}(\beta )=\emptyset$.
         
         \item If $2\leq \beta\leq \gamma$, then
         \begin{align*}G_{(a,b)}(\gamma ) =\{d \in \mathbb{N} \mid d=1 \text{  or  } &d \text{  is odd such that } \\ & 2|| {\rm ord}_{p}(\frac{a}{b})~ \text{for every  prime}~        p ~\text{dividing}~d\} \end{align*}
         and for  $2\leq \beta<\gamma$
         \begin{align*}
         G_{(a,b)}(\beta )   &= \bigcup _{i=0}^{\gamma-\beta} \left\{d 2^{i}\mid d \in  G_{(a,b)}(\gamma ) \right\}  \\
         &=G_{(a,b)}(\beta+1 )\cup  \left\{d 2^{\gamma-\beta}\mid d \in  G_{(a,b)}(\gamma ) \right\}
         \end{align*}

         \item   If $ \beta \in \{0, 1\}$, then
         \begin{align*}G_{(a,b)}(1 )   =  \{2d\mid d\in G_{(a,b)}(2) \} \cup \{ d \in \mathbb{N} \mid & d=1  \text{ or $d$ is odd and  there exists}\\
         & s\geq 1~ \text{such that}~ 2^s|| {\rm ord}_{p}(\frac{a}{b}) \\&  \text{for  every  prime }       p ~\text{dividing}~d \}\end{align*}
         and
         $$G_{(a,b)}(0)   = G_{(a,b)}(1 ) \cup \left\{  2d \mid d\in G_{(a,b)}(1 )  \right\}.$$
     \end{enumerate}
 \end{theorem}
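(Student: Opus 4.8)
The unifying device I would use throughout is an \emph{odd-part decomposition}: writing an arbitrary positive integer $m$ as $m=2^{i}d$ with $d$ odd and $i\ge 0$, the definition of $G_{(a,b)}(\beta)$ gives at once that $m\in G_{(a,b)}(\beta)$ if and only if $2^{\beta+i}d\in G_{(a,b)}$, i.e.\ $d\in G_{(a,b)}(\beta+i)$. This reduces every membership question about a (possibly even) integer to one about an \emph{odd} integer at a shifted level, where Proposition~\ref{oddG} applies verbatim; the argument then becomes a bookkeeping of which clause of Proposition~\ref{oddG} is in force, according to whether $\beta+i$ equals $0$, equals $1$, or is at least $2$, together with the fact (Proposition~\ref{prop2.01}) that $2^{\beta+i}\in G_{(a,b)}$ exactly when $\beta+i\le\gamma$. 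Part~1) then follows directly: if $d\in G_{(a,b)}(\beta)$ then $2^{\beta}d\mid(a^{k}+b^{k})$ for some $k$, hence $2^{\beta}\in G_{(a,b)}$, so $2^{\beta}\mid(a+b)$ by Proposition~\ref{prop2.01}; since $\gamma<\beta$ forces $2^{\beta}\nmid(a+b)$, no such $d$ exists and $G_{(a,b)}(\beta)=\emptyset$.

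For part~2) I would first settle $\beta=\gamma$. Lemma~\ref{Gb+1}(1), valid since $2^{\gamma}\|(a+b)$, shows every element of $G_{(a,b)}(\gamma)$ is odd, so I may test odd $d$ only: $1$ always lies in $G_{(a,b)}(\gamma)$ because $2^{\gamma}\mid(a+b)$, and for odd $d>1$ clause~3) of Proposition~\ref{oddG} (using $\gamma\ge 2$) reduces membership to the single condition $2\|{\rm ord}_{p}(\tfrac{a}{b})$ for all $p\mid d$, which is exactly the displayed set. For $2\le\beta<\gamma$ I feed the decomposition into this: $m=2^{i}d$ lies in $G_{(a,b)}(\beta)$ iff $d\in G_{(a,b)}(\beta+i)$, and since $\beta+i\ge 2$, clause~3) splits this into the valuation constraint $2^{\beta+i}\mid(a+b)$ (equivalently $i\le\gamma-\beta$) and the order condition on the primes of $d$ (equivalently $d\in G_{(a,b)}(\gamma)$). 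Collecting $i$ from $0$ to $\gamma-\beta$ yields the first union, and peeling off the top index $i=\gamma-\beta$ while recognising the remaining indices as the same description one level up gives the second equality $G_{(a,b)}(\beta)=G_{(a,b)}(\beta+1)\cup\{d2^{\gamma-\beta}\mid d\in G_{(a,b)}(\gamma)\}$.

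Part~3) is the same decomposition at the delicate low levels where the clauses of Proposition~\ref{oddG} change. For $\beta=1$ I would separate $m$ by parity: an odd $m$ lies in $G_{(a,b)}(1)$ iff $m\in G_{(a,b)}$ (clause~2), which is exactly the second displayed set, while an even $m=2d$ lies in $G_{(a,b)}(1)$ iff $2^{2}d\in G_{(a,b)}$, i.e.\ iff $d=m/2\in G_{(a,b)}(2)$, so the even elements are precisely $\{2d\mid d\in G_{(a,b)}(2)\}$; the union over the two parities is the claim. For $\beta=0$ the identical split shows the odd members of $G_{(a,b)}$ coincide with the odd members of $G_{(a,b)}(1)$ (clause~2 again), while an even $m$ lies in $G_{(a,b)}$ iff $m/2\in G_{(a,b)}(1)$, i.e.\ iff $m\in\{2d\mid d\in G_{(a,b)}(1)\}$; together with $G_{(a,b)}(1)\subseteq G_{(a,b)}(0)$ this yields $G_{(a,b)}(0)=G_{(a,b)}(1)\cup\{2d\mid d\in G_{(a,b)}(1)\}$.

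The main obstacle I anticipate is not a single deep step but the careful alignment of the index bookkeeping with the three-way branching of Proposition~\ref{oddG}: one must verify at each level that the hypothesis $\beta+i\ge 2$ actually holds before invoking clause~3), treat the boundary cases $i=0$, $\beta+1=\gamma$, and the degenerate value $d=1$ separately, and check that the description of $G_{(a,b)}(\gamma)$ is genuinely valuation-free so that it can be shared across all levels $\beta\le\gamma$. Ensuring that the even elements captured by $\{2d\mid d\in G_{(a,b)}(2)\}$ and $\{2d\mid d\in G_{(a,b)}(1)\}$ exhaust \emph{all} even members at their respective levels, without double counting or dropping the $d=1$ case, is where I would spend the most care.
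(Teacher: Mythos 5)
Your proposal is correct and follows essentially the same route as the paper: the odd-part decomposition $m=2^{i}d$ reducing membership in $G_{(a,b)}(\beta)$ to $d\in G_{(a,b)}(\beta+i)$, combined with Lemma~\ref{Gb+1}, Proposition~\ref{prop2.01}, and the three clauses of Proposition~\ref{oddG}, is exactly the paper's argument. The only cosmetic differences are that you prove part~1) directly from Proposition~\ref{prop2.01} rather than citing Lemma~\ref{Gb+1}, and your parity split in part~3) handles $\gamma=1$ uniformly where the paper treats it as a separate case.
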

 \begin{proof}
     From Lemma \ref{Gb+1}, 1) is clear.

     To prove 2), assume that $2\leq \beta \leq \gamma$.     Since $2^{\gamma} || (a+b)$,  {we have  $2^\beta|(a+b)$.}
     By Lemma \ref{Gb+1},  $G_{(a,b)}(\gamma)$  is the set of odd $2^\gamma$-good integers and the result follows from Proposition~\ref{prop2m}. Let now $ \beta < \gamma$.  Let $c\in G_{(a,b)}(\beta )$. Let $c=2^id$ for some integer $i\geq 0$ and some odd integer $d$. It follows that  $ 2^{\beta+i} d =2^\beta c \in G_{(a,b)}(0 )$. Hence,  $  2^{\beta+i} \in G_{(a,b)}(0 )$ and $0\leq i \leq \gamma-\beta$  because   $2^{\gamma} || (a+b)$.  Since $2\leq \beta$, we have $ d \in G_{(a,b)}(2 )$ which implies that $ d \in G_{(a,b)}(\gamma )$ by Proposition~\ref{prop2m}.  Therefore,  $c=2^id\in \left\{d 2^{i}\mid d \in  G_{(a,b)}(\gamma ) \right\}$ for some $0\leq i \leq \gamma-\beta$.  The reverse inclusion is clear.
     Therefore,  $G_{(a,b)}(\beta )   = \bigcup _{i=0}^{\gamma-\beta} \left\{d 2^{i}\mid d \in  G_{(a,b)}(\gamma ) \right\}.$
     
     Now \begin{align*} G_{(a,b)}(\beta ) &  = \bigcup _{i=0}^{\gamma-\beta-1} \left\{d 2^{i}\mid d \in  G_{(a,b)}(\gamma ) \right\} \cup \left\{d 2^{\gamma-\beta}\mid d \in  G_{(a,b)}(\gamma ) \right\}\\&= G_{(a,b)}(\beta+1 )\cup  \left\{d 2^{\gamma-\beta}\mid d \in  G_{(a,b)}(\gamma ) \right\}.
     \end{align*}
     
     Next, we prove 3). If $\gamma=1$,  then $G_{(a,b)}(2 )=\emptyset $  by Lemma \ref{Gb+1}.  The results follow from 1)-2) of Proposition~\ref{oddG}.  Assume that $\gamma\geq 2$.
     
     \noindent     {\bf Case I:} $\beta=1$.   Let $c\in G_{(a,b)}(1)$.   If $c$ is odd,  then $c\in G_{(a,b)}( 0)=\{ d \in \mathbb{N} \mid  d=1  \text{ or $d$ is odd and  there exists } s\geq 1~ \text{such that}~ 2^s|| {\rm ord}_{p}(\frac{a}{b})   \text{ for  every  prime }       p $ $\text{dividing}~d \}$  by  1)-2) of Proposition~\ref{oddG}.  Suppose that $c=2d$ is even for some positive integer $d$. Since $2d=c\in G_{(a,b)}(1)$, we have $d\in G_{(a,b)}(2)$ and hence $c=2d\in \{2d\mid d\in G_{(a,b)}(2) \}$.
     The reverse inclusion is clear.
     
     \noindent     {\bf Case II:} $\beta=0$.  Let  $c\in G_{(a,b)}(0) $. If $c$ is odd,  then $c\in G_{(a,b)}( 1) $ by  2) of Proposition~\ref{oddG}.   Suppose that $c=2d$ is even for some positive integer $d$.  If  $2d=c\in G_{(a,b)}(0)$, we have $d\in G_{(a,b)}(1)$ and hence $c=2d\in \{2d\mid d\in G_{(a,b)}(1) \}$.  The reverse inclusion is clear.
 \end{proof}

 \begin{corollary} \label{coroll2.4} Let $a$ and $b$ be coprime odd  integers and let $\gamma$ be a positive integer such that $2^\gamma||(a+b)$.  Then
     \begin{align*} G_{(a,b)}=G_{(a,b)}(0) \supsetneq &G_{(a,b)}(1)  \supsetneq G_{(a,b)}(2)     \supsetneq  \cdots  \supsetneq    G_{(a,b)}(\gamma)   \supsetneq    G_{(a,b)}(\gamma+1)  =\emptyset.
     \end{align*}
 \end{corollary}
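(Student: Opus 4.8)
The plan is to assemble the statement from the monotonicity already recorded at the start of this section together with Proposition~\ref{prop2.01} and Lemma~\ref{Gb+1}. The non-strict chain $G_{(a,b)}(0)\supseteq G_{(a,b)}(1)\supseteq\cdots$ is immediate, since $2^\beta d\mid(a^k+b^k)$ forces $2^{\beta-1}d\mid(a^k+b^k)$; thus the only real work is to show that each containment is proper and that the chain first becomes empty exactly at index $\gamma+1$.

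For the terminal equality I would invoke Lemma~\ref{Gb+1} directly: applying its part 2) with the role of the distinguished exponent played by $\gamma$ (legitimate since $2^\gamma||(a+b)$ by hypothesis) yields $G_{(a,b)}(\gamma+1)=\emptyset$ at once.

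The core of the argument is to pin down exactly which powers of two lie in each $G_{(a,b)}(\beta)$, and here Proposition~\ref{prop2.01} is the key tool. By definition $2^j\in G_{(a,b)}(\beta)$ means $2^{\beta+j}\in G_{(a,b)}$, and whenever $\beta+j\geq 1$ Proposition~\ref{prop2.01} identifies this with the divisibility $2^{\beta+j}\mid(a+b)$; because $2^\gamma||(a+b)$, that divisibility holds if and only if $\beta+j\leq\gamma$. Consequently $2^{\gamma-\beta}\in G_{(a,b)}(\beta)$ but $2^{\gamma-\beta}\notin G_{(a,b)}(\beta+1)$ for every $\beta$ with $0\leq\beta\leq\gamma-1$, so this single witness establishes that each inclusion $G_{(a,b)}(\beta)\supsetneq G_{(a,b)}(\beta+1)$ is strict for $\beta=0,1,\dots,\gamma-1$. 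For the last strict inclusion $G_{(a,b)}(\gamma)\supsetneq G_{(a,b)}(\gamma+1)=\emptyset$ it suffices to produce a single element of $G_{(a,b)}(\gamma)$, and indeed $2^\gamma\mid(a+b)$ gives $2^\gamma\in G_{(a,b)}$, i.e.\ $1\in G_{(a,b)}(\gamma)$, so this set is nonempty.

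I anticipate no genuine obstacle here: the whole proof reduces to bookkeeping on the inequality $\beta+j\leq\gamma$. The only point deserving a word of care is the boundary index $\beta=0$, where Proposition~\ref{prop2.01} must be applied to the exponents $\gamma$ and $\gamma+1$ rather than to $1\in G_{(a,b)}(0)$ itself; since the chosen witness $2^{\gamma-\beta}$ always carries a positive exponent, this causes no difficulty. One could alternatively read the strictness off the explicit descriptions of $G_{(a,b)}(\beta)$ in Theorem~\ref{thmSet}, but the power-of-two witness is cleaner and sidesteps the case split between $\beta\in\{0,1\}$ and $\beta\geq 2$.
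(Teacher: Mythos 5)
Your proof is correct, and its skeleton coincides with the paper's: both establish the non-strict inclusions directly from the definition, both use the same witnesses ($2^{\gamma-\beta}$ for the strict inclusion $G_{(a,b)}(\beta)\supsetneq G_{(a,b)}(\beta+1)$ when $0\leq\beta\leq\gamma-1$, and the element $1$ for the last one), and both reduce the terminal equality to earlier results. The difference is in how the witnesses are certified. The paper reads the emptiness of $G_{(a,b)}(\gamma+1)$ and the memberships $2^{\gamma-i}\in G_{(a,b)}(i)\setminus G_{(a,b)}(i+1)$ off the explicit descriptions of the sets in Theorem~\ref{thmSet}, whereas you bypass that structure theorem entirely: you certify the witnesses with Proposition~\ref{prop2.01} alone, via the observation that membership of a power of two $2^j$ in $G_{(a,b)}(\beta)$ is equivalent to $2^{\beta+j}\mid(a+b)$, i.e.\ to $\beta+j\leq\gamma$, and you quote Lemma~\ref{Gb+1} for $G_{(a,b)}(\gamma+1)=\emptyset$ --- which is in fact exactly the ingredient the paper's Theorem~\ref{thmSet} invokes for its own part 1). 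Your route is more elementary and uniform in $\beta$: it avoids the case split between $2\leq\beta\leq\gamma$ and $\beta\in\{0,1\}$ that Theorem~\ref{thmSet} carries, and it makes transparent that the entire corollary is governed by the single inequality $\beta+j\leq\gamma$. What the paper's version buys is brevity: once Theorem~\ref{thmSet} is in hand, the witnesses are visible at a glance. Both arguments are complete, including your careful handling of the boundary case $\beta=0$, where the witness's total exponent $\gamma\geq 1$ keeps Proposition~\ref{prop2.01} applicable.
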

 \begin{proof}
     From the definition, it is not difficult to see that  $G_{(a,b)}(i) \supseteq G_{(a,b)}(i+1) $ for all $0\leq i \leq \gamma$. From Theorem~\ref{thmSet},   we have $G_{(a,b)}(\gamma+1) =\emptyset$.  Again, by   Theorem~\ref{thmSet},  it can be seen that  $2^{\gamma-i} \in   G_{(a,b)}(i) \setminus G_{(a,b)}(i+1) $  for all $0\leq  i \leq \gamma-1$ and  $1\in  G_{(a,b)}(\gamma) \setminus G_{(a,b)}(\gamma+1) $.
 \end{proof}

 \subsection{$2^\beta$-Oddly-Good and $2^\beta$-Evenly-Good  Integers}

 In this subsection,  we focus on families of  $2^\beta$-oddly-good  and $2^\beta$-evenly-good integers.
 
 First, recall that a positive integer $d$ is said to be {\em $2^\beta$-oddly-good (with respect to $a$ and $b$)} if $2^\beta d\in OG_{(a,b)} $, and {\em $2^\beta$-evenly-good (with respect to $a$ and $b$)} if $2^\beta d\in EG_{(a,b)} $.
 Useful  characterization of   $OG_{(a,b)}(0)=OG_{(a,b)}$ and of $EG_{(a,b)}(0)=EG_{(a,b)}$ from \cite{J2017} are recalled as follows (These results are not affected by the  errors of \cite{J2017}  discussed in Section 2.1).

 \begin{proposition}[{\cite[Proposition 3.2]{J2017}}]
     \label{odd-good} Let $a$ and $b$ be coprime non-zero integers and let $d>1$ be an  odd  integer. Then   $d\in OG_{(a,b)}(0)=OG_{(a,b)}$   if and only if $2|| {\rm ord}_{p}(\frac{a}{b})$ for every prime $p$ dividing $d$.
 \end{proposition}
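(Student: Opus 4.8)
The plan is to translate the divisibility condition into a statement about the multiplicative order of $x:=ab^{-1}$ modulo $d$ and then to argue one prime at a time. Since $a,b$ are coprime and $d>1$ is odd, any prime $p$ dividing $d$ must be coprime to $b$ whenever $d$ is oddly-good (else $p\mid a$ too), so $x$ is a well-defined unit modulo each $p\mid d$ and ${\rm ord}_p(x)={\rm ord}_p(\frac{a}{b})$. Multiplying $a^k+b^k$ by $b^{-k}$ shows that $d\mid(a^k+b^k)$ is equivalent to $x^k\equiv -1\pmod d$. Thus the task reduces to showing that $x^k\equiv -1\pmod d$ for some odd $k\geq 1$ if and only if $2\,||\,{\rm ord}_p(x)$ for every prime $p\mid d$.

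For the forward direction I would fix an odd $k$ with $x^k\equiv -1\pmod d$ and examine each prime $p\mid d$ separately. Reducing modulo $p$ gives $x^k\equiv -1$ and hence $x^{2k}\equiv 1$, while $x^k\not\equiv 1$ because $p$ is odd. Writing ${\rm ord}_p(x)=2^s m$ with $m$ odd and using ${\rm ord}_p(x)\mid 2k$ together with the fact that $2k$ carries exactly one factor of $2$ (as $k$ is odd) forces $s\le 1$. If $s=0$, then ${\rm ord}_p(x)=m\mid k$, giving $x^k\equiv 1\pmod p$, a contradiction; hence $s=1$, i.e.\ $2\,||\,{\rm ord}_p(x)$.

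For the converse I would assume $2\,||\,{\rm ord}_p(x)$ for every $p\mid d$ and assemble a single odd exponent via the Chinese Remainder Theorem, mirroring the construction already used in the proof of Proposition~\ref{prop2m}. Writing $d=p_1^{e_1}\cdots p_t^{e_t}$ and applying Lemma~\ref{Mor1}, the power of $2$ in ${\rm ord}_{p_i^{e_i}}(x)$ equals that in ${\rm ord}_{p_i}(x)$ since $p_i$ is odd, so ${\rm ord}_{p_i^{e_i}}(x)=2r_i$ with $r_i$ odd. By the prime-power case recorded in Remark~\ref{1}, this yields $x^{r_i}\equiv -1\pmod{p_i^{e_i}}$ for each $i$. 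Setting $k={\rm lcm}(r_1,\dots,r_t)$, the exponent $k$ is odd, and writing $k=r_i r_i'$ with $r_i'$ odd gives $x^k\equiv(-1)^{r_i'}\equiv -1\pmod{p_i^{e_i}}$; the Chinese Remainder Theorem then yields $x^k\equiv -1\pmod d$ with $k$ odd, so $d$ is oddly-good.

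The main obstacle is that one cannot simply invoke the converse of Lemma~\ref{ord2k} for the composite modulus $d$: as Remark~\ref{1} emphasizes, ${\rm ord}_d(x)=2k'$ does not in general force $x^{k'}\equiv -1\pmod d$. This is exactly why the converse direction must be handled prime-power by prime-power, extracting an odd exponent $r_i$ at each $p_i^{e_i}$ where the converse \emph{does} hold, and then recombining these through the lcm while carefully preserving oddness so that each $(-1)^{r_i'}$ collapses to $-1$. Keeping track of the parity of $k$ throughout is the delicate point that distinguishes oddly-good integers (where the $2$-adic valuation of every ${\rm ord}_p(x)$ must equal $1$) from general good integers in Proposition~\ref{Mor2} (where only a common valuation $s\geq 1$ is required).
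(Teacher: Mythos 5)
Your proof is correct. Note that the paper itself offers no argument for this statement: it is quoted from \cite[Proposition 3.2]{J2017} with only the remark that it is unaffected by the errors corrected in Subsection 2.1, so there is no in-paper proof to match; what you have produced is a self-contained reconstruction built from the paper's own toolkit. Your converse direction is, almost word for word, the converse half of the paper's proof of Proposition~\ref{prop2m}: pass from ${\rm ord}_{p_i}(\frac{a}{b})$ to ${\rm ord}_{p_i^{e_i}}(\frac{a}{b})$ via Lemma~\ref{Mor1}, extract $x^{r_i}\equiv -1 \pmod{p_i^{e_i}}$ from Remark~\ref{1} (where the converse of Lemma~\ref{ord2k} is valid because the modulus is a prime power), and recombine with an odd ${\rm lcm}$; and you correctly identify that working prime power by prime power is forced precisely because the naive composite-modulus implication is the false statement that caused the errors in \cite{J2017}. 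Your forward direction, by contrast, is cleaner than what the paper does in the analogous step of Proposition~\ref{prop2m}: instead of invoking minimality of $k$ and Lemma~\ref{ord2k} to get ${\rm ord}_d(\frac{a}{b})=2k$, you reduce modulo each prime $p$ and run a direct $2$-adic valuation argument on ${\rm ord}_p(x)\mid 2k$, ${\rm ord}_p(x)\nmid k$, which needs nothing beyond $p$ being odd. The one point worth stating a bit more carefully is the opening remark that $\gcd(ab,d)=1$: in the forward direction this follows from $d\mid a^k+b^k$ and $\gcd(a,b)=1$ exactly as you sketch, while in the converse direction it is implicit in the hypothesis that ${\rm ord}_p(\frac{a}{b})$ is defined for every $p\mid d$; neither gap is substantive.
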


 \begin{corollary}[{\cite[Corollary 3.2]{J2017}}]
     \label{odd-good2} Let $a$ and $b$ be coprime non-zero integers and let $d$ be an  odd positive  integer.  Then the following statements hold.
     \begin{enumerate}[$1)$]
         \item    $d\in OG_{(a,b)}(0)= OG_{(a,b)}$ if and only if  $2d\in OG_{(a,b)}(0)$ if and only if  $d\in OG_{(a,b)}(1)$.

         \item  For each $\beta \ge 2$,  $  d\in OG_{(a,b)}(\beta)$ if and only if $ d\in G_{(a,b)}(\beta)$.
     \end{enumerate}
 \end{corollary}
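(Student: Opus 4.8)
The plan is to treat the two parts separately, in each case reducing the oddly-good condition to the already-characterized good condition by controlling the parity of the exponent $k$ appearing in $a^k+b^k$. Throughout I use the standing assumption that $a,b$ are coprime \emph{odd} integers. The recurring mechanism is that for odd $a,b$ the sum $a^k+b^k$ is always even, and, more sharply, that divisibility of $a^k+b^k$ by $4$ forces $k$ to be odd; this last fact is exactly the mod-$4$ computation already carried out in the proof of Proposition~\ref{prop2.01}.

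For part 1), I would first observe that the second equivalence is purely definitional: $OG_{(a,b)}(1)$ is by definition $\{d : 2d\in OG_{(a,b)}=OG_{(a,b)}(0)\}$, so the assertions ``$2d\in OG_{(a,b)}(0)$'' and ``$d\in OG_{(a,b)}(1)$'' say the same thing. It then remains to show, for odd $d$, that $d\in OG_{(a,b)}$ if and only if $2d\in OG_{(a,b)}$. The backward implication is trivial, since $2d\mid(a^k+b^k)$ gives $d\mid(a^k+b^k)$. For the forward implication, take an odd $k$ with $d\mid(a^k+b^k)$; as $a,b$ are odd the sum $a^k+b^k$ is even, and $\gcd(2,d)=1$ then yields $2d\mid(a^k+b^k)$ for the same odd $k$, so $2d\in OG_{(a,b)}$.

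For part 2), fix $\beta\geq2$ and odd $d$. One inclusion is free: $OG_{(a,b)}\subseteq G_{(a,b)}$ gives $OG_{(a,b)}(\beta)\subseteq G_{(a,b)}(\beta)$, hence $d\in OG_{(a,b)}(\beta)\Rightarrow d\in G_{(a,b)}(\beta)$. For the converse, suppose $d\in G_{(a,b)}(\beta)$, i.e.\ $2^\beta d\mid(a^k+b^k)$ for some $k$. Since $\beta\geq2$ we have $4\mid 2^\beta d$, so $4\mid(a^k+b^k)$, and the parity argument of Proposition~\ref{prop2.01} then forces $k$ to be odd. Thus $2^\beta d\mid(a^k+b^k)$ for an \emph{odd} $k$, which is precisely $2^\beta d\in OG_{(a,b)}$, i.e.\ $d\in OG_{(a,b)}(\beta)$. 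No separate treatment of $d=1$ is required, since the argument never uses $d>1$.

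The routine steps are the definitional rewriting in part 1) and the trivial inclusion $OG_{(a,b)}(\beta)\subseteq G_{(a,b)}(\beta)$ in part 2). The only substantive point---the main obstacle---is recognizing that once $\beta\geq2$, divisibility by $2^\beta d$ can occur only at odd exponents, so that $G_{(a,b)}(\beta)$ and $OG_{(a,b)}(\beta)$ automatically coincide. This oddness is not apparent from the definition of $G_{(a,b)}(\beta)$ in isolation but is supplied by the mod-$4$ parity computation already recorded in the proof of Proposition~\ref{prop2.01}; it could equivalently be read off from the fact, noted in Proposition~\ref{prop2m}, that $2\,\|\,{\rm ord}_{2^\beta d}(\frac{a}{b})$.
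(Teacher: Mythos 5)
Your proof is correct; the main point of comparison is that the paper itself gives \emph{no} proof of this corollary --- it is recalled from \cite[Corollary 3.2]{J2017} with only a citation --- so your argument is necessarily an independent, self-contained route, and it is a sound one. Part 1) is exactly the definitional unwinding of $OG_{(a,b)}(1)$ together with the observation that $a^k+b^k$ is even and $\gcd(2,d)=1$; part 2) hinges on the mod-$4$ computation from the proof of Proposition~\ref{prop2.01}, which forces any exponent $k$ with $4\mid(a^k+b^k)$ to be odd, so every witness of $2^\beta$-goodness is automatically a witness of $2^\beta$-oddly-goodness once $\beta\geq 2$. Within the paper, the identity $OG_{(a,b)}(\beta)=G_{(a,b)}(\beta)$ for $\beta\geq 2$ is obtained only later and by a different mechanism: Theorem~\ref{OEG} shows $EG_{(a,b)}(\beta)=\emptyset$ for $\beta\geq 2$ by appealing to Proposition~\ref{prop2.01} and to the external fact \cite[Proposition 3.1]{J2017} that no integer greater than $2$ is both oddly-good and evenly-good, and then uses $G_{(a,b)}(\beta)=OG_{(a,b)}(\beta)\cup EG_{(a,b)}(\beta)$. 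Your direct parity argument is more elementary and bypasses that citation entirely, which is a genuine gain. Two small caveats. First, your standing assumption that $a$ and $b$ are odd is not cosmetic: part 1) is false for even $ab$ (e.g.\ $3\in OG_{(2,1)}$ but $6\notin OG_{(2,1)}$, since $2^k+1$ is odd), so the ``coprime non-zero'' wording of the statement must indeed be read under Section 2's blanket assumption that $a$ and $b$ are odd, exactly as you do. Second, your closing aside that the oddness of $k$ could ``equivalently be read off'' from $2\,||\,{\rm ord}_{2^\beta d}(\frac{a}{b})$ needs care: by Remark~\ref{1}, an order equal to twice an odd number does not by itself produce an exponent $k$ with $(ab^{-1})^k\equiv-1$; fortunately that aside carries no weight in your argument.
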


 \begin{proposition}[{\cite[Proposition 3.3]{J2017}}]
     \label{odd-even} Let $a$ and $b$ be coprime nonzero  integers and let $d>1$ be an  odd  integer. Then
     $d\in EG_{(a,b)}(0)= EG_{(a,b)}$    if and only if there exists $s\geq 2$ such that $2^s|| {\rm ord}_{p}(\frac{a}{b})$ for every prime $p$ dividing $d$.
 \end{proposition}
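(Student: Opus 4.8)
The plan is to obtain this characterization as a clean consequence of three facts already at hand: Morgan's characterization of good integers (Proposition~\ref{Mor2}), the characterization of oddly-good integers (Proposition~\ref{odd-good}), and the order formula of Lemma~\ref{ord2k}. First I would record the reduction that in both directions $\gcd(d,ab)=1$: if $d\in EG_{(a,b)}$ then $d\mid(a^k+b^k)$ for some $k$, and together with $\gcd(a,b)=1$ this forces $p\nmid ab$ for every prime $p\mid d$; the right-hand condition likewise presupposes that $ab^{-1}$ is a unit modulo each $p\mid d$. Writing $x\equiv ab^{-1}\pmod d$, the condition $d\mid(a^k+b^k)$ becomes $x^k\equiv-1\pmod d$, so that $d\in EG_{(a,b)}$ precisely when $x^k\equiv-1\pmod d$ admits an even solution $k\geq 2$.

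The linchpin is that the set of exponents solving $x^k\equiv-1\pmod d$, when nonempty, has constant parity. Indeed, if $k_0$ is the least positive solution, then ${\rm ord}_d(x)=2k_0$ by Lemma~\ref{ord2k}, and any solution $k$ satisfies $x^{k-k_0}\equiv1\pmod d$, hence $2k_0\mid(k-k_0)$; thus every solution is congruent to $k_0$ modulo $2k_0$ and in particular has the same parity as $k_0$. Consequently a good odd integer $d>1$ is oddly-good or evenly-good but never both, and membership in $EG_{(a,b)}$ is equivalent to being good while not being oddly-good.

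For the forward direction, if $d\in EG_{(a,b)}$ then $d$ is good, so Proposition~\ref{Mor2} yields a single $s\geq1$ with $2^s|| {\rm ord}_p(\frac{a}{b})$ for every prime $p\mid d$. If $s=1$, then Proposition~\ref{odd-good} gives $d\in OG_{(a,b)}$, producing an odd solution of $x^k\equiv-1\pmod d$ straight from the definition of oddly-good; but the assumed even solution then contradicts the constant-parity fact above, so $s\geq2$. For the converse, a common exponent $s\geq2$ gives in particular $s\geq1$, so Proposition~\ref{Mor2} makes $d$ good, i.e.\ $d\in OG_{(a,b)}\cup EG_{(a,b)}$. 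Were $d\in OG_{(a,b)}$, Proposition~\ref{odd-good} would force $2|| {\rm ord}_p(\frac{a}{b})$, i.e.\ the common valuation would be $1$, contradicting $s\geq2$. Therefore $d\in EG_{(a,b)}$.

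I expect the only delicate point to be the constant-parity dichotomy, which is exactly where Lemma~\ref{ord2k} does the work; once the solution set is pinned to a single residue class modulo $2k_0$, both directions drop out of Propositions~\ref{Mor2} and~\ref{odd-good}. If instead one wants a self-contained argument not routing through Morgan's theorem, the alternative is a prime-by-prime analysis: for each prime power $p^e$ exactly dividing $d$, Remark~\ref{1} gives $x^{{\rm ord}_{p^e}(x)/2}\equiv-1\pmod{p^e}$, and Lemma~\ref{Mor1} shows that $2^s|| {\rm ord}_{p^e}(\frac{a}{b})$ iff $2^s|| {\rm ord}_p(\frac{a}{b})$; one then glues the local exponents into a single even $k$ by solving $k\equiv2^{s-1}\pmod{2^s}$ together with $k\equiv0$ modulo the product of the odd parts, via the Chinese Remainder Theorem. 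This gluing, and the verification that the resulting $k$ has $2$-adic valuation exactly $s-1$ (hence is even precisely when $s\geq2$), would be the main computational step along that route.
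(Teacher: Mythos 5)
Your proof is correct, but note that the paper itself never proves this proposition: it is imported verbatim from \cite[Proposition 3.3]{J2017}, so there is no in-paper argument to compare against, and the right question is whether your reconstruction from the paper's toolkit is sound --- it is. Your linchpin, that the exponents $k$ with $(ab^{-1})^k\equiv -1 \pmod d$ all lie in a single residue class modulo $2k_0$ and hence have constant parity, is exactly the content of \cite[Proposition 3.1]{J2017}, which the paper also cites without proof (inside the proof of Theorem~\ref{OEG}); you instead derive it directly from Lemma~\ref{ord2k}, which makes your account more self-contained than the paper's own chain of citations. Given that dichotomy, an odd $d>1$ lies in $EG_{(a,b)}$ precisely when it is good but not oddly-good, and Proposition~\ref{Mor2} together with Proposition~\ref{odd-good} then pins the common $2$-adic valuation to $s\geq 2$, exactly as you argue; the converse correctly uses that $d>1$ has at least one prime divisor to manufacture the contradiction. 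Two small points of hygiene: Lemma~\ref{ord2k} is stated for \emph{odd} $x$, so you should say explicitly that you replace $ab^{-1}\bmod d$ by an odd representative (harmless, since $d$ is odd and orders depend only on the residue class); and Proposition~\ref{Mor2} is quoted in a subsection where $a,b$ are assumed odd, whereas your statement allows arbitrary coprime nonzero $a,b$ --- this is fine because Moree's theorem in \cite{M1997} holds in that generality, but it deserves a remark. Your sketched alternative (prime-by-prime analysis via Remark~\ref{1} and Lemma~\ref{Mor1}, then gluing local exponents into a single $k$ with $2$-adic valuation exactly $s-1$) is also viable and is closer to a from-scratch proof, though, as you note, the gluing step is where the real work would sit on that route.
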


 From the definitions, we have $G_{(a,b)}(\beta) =OG_{(a,b)}(\beta) \cup EG_{(a,b)}(\beta) $ for all non-negative integers $\beta$.  In many cases, the following theorem shows that $EG_{(a,b)}(\beta) =\emptyset$.

 \begin{theorem}\label{OEG} Let $a$ and $b$ be coprime odd  integers and let $\beta$ be a non-negative integer.  Then $EG_{(a,b)}(\beta) \ne \emptyset $  if and only if $\beta\in\{0,1\}$, i.e. $EG_{(a,b)}(\beta)= \emptyset $  if and only if $\beta \geq 2$.
 \end{theorem}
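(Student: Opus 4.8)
The plan is to handle the biconditional by splitting into the three regimes $\beta = 0$, $\beta = 1$, and $\beta \geq 2$, and to establish nonemptiness in the first two by exhibiting explicit witnesses. When $\beta = 0$, I would note that $1 \mid (a^2 + b^2)$ with the even exponent $k = 2$, so $1 \in EG_{(a,b)}(0)$ and the set is nonempty. When $\beta = 1$, I would use that $a$ and $b$ are both odd, so $a^2 + b^2$ is even; hence $2 \mid (a^2 + b^2)$, which says exactly that $2 \in EG_{(a,b)}$, i.e. $1 \in EG_{(a,b)}(1)$, so again the set is nonempty. These two cases are immediate once the witnesses are written down.

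The substance lies in showing $EG_{(a,b)}(\beta) = \emptyset$ for every $\beta \geq 2$. The key arithmetic fact I would isolate first, as a standalone step, is that for coprime odd integers $a,b$ and any even exponent $k$, the $2$-adic valuation of $a^k + b^k$ is exactly $1$. Indeed, every odd square is $\equiv 1 \pmod 8$, hence $\equiv 1 \pmod 4$, so writing $k = 2m$ gives $a^k + b^k = (a^2)^m + (b^2)^m \equiv 1 + 1 = 2 \pmod 4$; thus $2 \mid (a^k + b^k)$ but $4 \nmid (a^k + b^k)$. With this in hand, suppose for contradiction that $d \in EG_{(a,b)}(\beta)$ for some $\beta \geq 2$. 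By definition $2^\beta d \in EG_{(a,b)}$, so $2^\beta d \mid (a^k + b^k)$ for some even $k \geq 2$. Since $\beta \geq 2$ forces $4 \mid 2^\beta \mid (a^k + b^k)$, this contradicts the valuation computation. Hence no such $d$ exists and $EG_{(a,b)}(\beta) = \emptyset$, which completes both implications.

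I do not anticipate a serious obstacle: the whole argument reduces to the observation that an even power of an odd integer is $\equiv 1 \pmod 4$, which pins the $2$-adic valuation of $a^k + b^k$ to $1$ whenever $k$ is even. The only point demanding a little care is to keep the role of the exponent's parity distinct from the power of $2$ dividing $d$: evenness of $k$ is what caps the valuation at $1$, while $\beta \geq 2$ is what demands strictly more than that cap. For this reason I would state the valuation fact cleanly before invoking it. One could alternatively route the $\beta \geq 2$ case through the characterization of $EG_{(a,b)}$ in Proposition~\ref{odd-even}, but the direct modular computation is shorter and self-contained.
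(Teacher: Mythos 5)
Your proposal is correct, and it takes a genuinely different route from the paper. The paper argues indirectly: from $c\in EG_{(a,b)}(\beta)$ it extracts $2^\beta\in EG_{(a,b)}\subseteq G_{(a,b)}$, applies the corrected Proposition~\ref{prop2.01} to get $2^\beta\mid(a+b)$, hence $2^\beta\in OG_{(a,b)}$ (witness $k=1$), and then invokes an external result (Proposition 3.1 of \cite{J2017}) stating that no integer greater than $2$ is both oddly-good and evenly-good, which forces $2^\beta\le 2$. You instead prove emptiness for $\beta\ge 2$ head-on by pinning the $2$-adic valuation: for even $k$ one has $a^k+b^k\equiv 1+1\equiv 2 \pmod 4$ since even powers of odd integers are $\equiv 1 \pmod 4$, so $4\nmid(a^k+b^k)$ and no multiple of $4$ can lie in $EG_{(a,b)}$. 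Your argument is shorter and entirely self-contained, avoiding both Proposition~\ref{prop2.01} and the external dichotomy result; it is worth noting that the same mod-$4$ computation is in fact the engine inside the paper's proof of Proposition~\ref{prop2.01} (where it forces the exponent $k$ to be odd), so the two proofs share the same arithmetic core, merely packaged differently --- the paper's packaging buys integration with its established good-integer machinery, while yours buys independence from \cite{J2017}. The witnesses you give for $\beta\in\{0,1\}$ coincide with the paper's ($1\in EG_{(a,b)}(0)$ and $1\in EG_{(a,b)}(1)$), and your case split cleanly covers both directions of the biconditional.
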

 \begin{proof}
     Assume that $EG_{(a,b)}(\beta) \ne \emptyset $.  Let $c\in EG_{(a,b)}(\beta) $. Then $2^\beta c\in  EG_{(a,b)}$ which implies that $2^\beta \in  EG_{(a,b)}$.  It follows that $2^\beta \in  G_{(a,b)}$  and $2^\beta|(a+b)$ by Proposition~\ref{prop2.01}. Then $2^\beta \in  OG_{(a,b)}$.  By  \cite[Proposition 3.1]{J2017},  any positive integer greater than $2$ can be either oddly-good or evenly-good but not both. Hence, $\beta\in \{0,1\}$ as desired.
     
     For the converse, it is clear that $1\in  EG_{(a,b)}(0) $ and $1\in EG_{(a,b)}(1) $. \end{proof}
 
 From Theorem~\ref{OEG}, we have that $OG_{(a,b)}(\beta) =G_{(a,b)}(\beta) $  for all $\beta\geq 2$ and they are determined in  Theorem~\ref{thmSet}.  Next we  investigate    $OG_{(a,b)}(\beta)$ and  $EG_{(a,b)}(\beta)$ for $\beta \in \{0, 1\}$.
 \begin{theorem}
     Let $a$ and $b$ be coprime  odd  integers an let $\gamma$ be a positive  integer such that $2^\gamma||(a+b)$.  Then the following statements hold.
     \begin{enumerate}[$1)$]
         \item $OG_{(a,b)}(1)= \bigcup \limits _{i=0}^{\gamma-1} \left\{d 2^{i}\mid d \in O G_{(a,b)}(\gamma ) \right\}$ for $\gamma \geq 2$, where \\$ OG_{(a,b)}(\gamma ) =\{d \in \mathbb{N} \mid d=1 \text{  or   }  d \text{  is odd such that }  2|| {\rm ord}_{p}(\frac{a}{b})~ \text{for every  prime}$ $       p ~\text{dividing}~d\} $
         \item  $EG_{(a,b)}(1) = \{ d \in \mathbb{N} \mid  d=1  \text{ or $d$ is odd and  there exists } s\geq 2~ \text{such that }$ $ 2^s|| {\rm ord}_{p}(\frac{a}{b})   \text{ for } \\ ~~~~~~~~~~~~~~~~~~\text{ every  prime }       p $ $\text{dividing}~d \}$.
         \item  $OG_{(a,b)}(0)= OG_{(a,b)}(1) \cup  \{2d\mid d\in OG_{(a,b)}(1)\}$.
         \item  $EG_{(a,b)}(0)=EG_{(a,b)}(1)\cup \{2d\mid d\in EG_{(a,b)}(1)\}$.
     \end{enumerate}
 \end{theorem}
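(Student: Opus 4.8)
The plan is to dispatch all four statements with a common toolkit: the exact power of $2$ dividing $a^k+b^k$, the collapse $OG_{(a,b)}(\gamma)=G_{(a,b)}(\gamma)$ supplied by Theorem~\ref{OEG}, and the level-shifting equivalences already recorded in Corollary~\ref{odd-good2}. The elementary fact I would isolate first is that, for coprime odd $a,b$, we have $2^{\gamma}\|(a^k+b^k)$ whenever $k$ is odd (its exact $2$-power coincides with that of $a+b$, since $a^k+b^k=(a+b)\sum_{i=0}^{k-1}(-1)^ia^{k-1-i}b^i$ with the cofactor a sum of an odd number of odd terms) and $2\|(a^k+b^k)$ whenever $k$ is even (an odd square is $\equiv 1\pmod 8$, so $a^k+b^k\equiv 2\pmod 8$). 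This single observation governs which powers of $2$ can divide $a^k+b^k$, and it is what pins down the range of the exponent in statement 1) and what lets me promote an odd witness by a factor of $2$ in statements 2) and 4).

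For statement 1), I would first record the description of $OG_{(a,b)}(\gamma)$: since $\gamma\geq 2$, Theorem~\ref{OEG} gives $EG_{(a,b)}(\gamma)=\emptyset$, so $OG_{(a,b)}(\gamma)=G_{(a,b)}(\gamma)$, and the latter is the displayed set by Theorem~\ref{thmSet}. For the main equality I argue by double inclusion. If $c\in OG_{(a,b)}(1)$, write $2c\mid(a^k+b^k)$ with $k$ odd; the valuation fact forces $2^{\gamma}\|(a^k+b^k)$, hence $c=2^id$ with $d$ odd and $0\leq i\leq\gamma-1$, while $d\mid(a^k+b^k)$ with $k$ odd and Proposition~\ref{odd-good} place $d$ in $OG_{(a,b)}(\gamma)$. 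Conversely, if $c=2^id$ with $0\leq i\leq\gamma-1$ and $d\in OG_{(a,b)}(\gamma)$, then by definition $2^{\gamma}d\mid(a^k+b^k)$ for some odd $k$, and since $i+1\leq\gamma$ we get $2c=2^{i+1}d\mid(a^k+b^k)$, i.e.\ $c\in OG_{(a,b)}(1)$.

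For statement 2), I would first show that every element of $EG_{(a,b)}(1)$ is odd: an even $c=2^id$ with $i\geq 1$ would put $d\in EG_{(a,b)}(i+1)$ with $i+1\geq 2$, contradicting Theorem~\ref{OEG}. For odd $c$ I then claim $c\in EG_{(a,b)}(1)\Leftrightarrow c\in EG_{(a,b)}(0)$: the forward implication is immediate from divisibility, and for the reverse, if $c\mid(a^k+b^k)$ with $k$ even then $2\|(a^k+b^k)$ together with $c$ odd gives $2c\mid(a^k+b^k)$ with $k$ still even, whence $c\in EG_{(a,b)}(1)$. Thus $EG_{(a,b)}(1)$ is exactly the set of odd members of $EG_{(a,b)}(0)$, which Proposition~\ref{odd-even} (together with the trivial $1\in EG_{(a,b)}(0)$) identifies with the displayed set.

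Statements 3) and 4) are then essentially formal, and I would treat them in parallel by splitting $c\in OG_{(a,b)}(0)$ (resp.\ $EG_{(a,b)}(0)$) according to parity: an odd $c$ lies in $OG_{(a,b)}(1)$ by Corollary~\ref{odd-good2} (resp.\ in $EG_{(a,b)}(1)$ by the equivalence just proved), while an even $c=2c'$ satisfies $2c'\in OG_{(a,b)}$ (resp.\ $EG_{(a,b)}$), i.e.\ $c'\in OG_{(a,b)}(1)$ (resp.\ $c'\in EG_{(a,b)}(1)$), so $c\in\{2d\mid d\in OG_{(a,b)}(1)\}$ (resp.\ its $EG$ analogue); the reverse inclusions read straight off the definitions. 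I expect the only genuine work to be in statement 1): correctly reading the exponent bound $i\leq\gamma-1$ off the $2$-adic valuation of $a^k+b^k$ and keeping the two descriptions of $OG_{(a,b)}(\gamma)$ in sync via Theorem~\ref{OEG}. Once the valuation fact and Theorem~\ref{OEG} are in hand, the remaining parts are bookkeeping.
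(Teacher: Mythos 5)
Your proof is correct and follows essentially the same route as the paper: collapse $OG_{(a,b)}(\gamma)=G_{(a,b)}(\gamma)$ via Theorem~\ref{OEG} and Theorem~\ref{thmSet}, decompose $c=2^id$ and bound $i$, identify the odd part through Propositions~\ref{odd-good} and~\ref{odd-even}, and handle 3)--4) by parity splitting as in the proof of Theorem~\ref{thmSet}. The only cosmetic difference is that you derive the exponent bounds and the bridge from $EG_{(a,b)}(0)$ to $EG_{(a,b)}(1)$ from an explicit $2$-adic valuation computation, which is exactly the content the paper obtains by citing Proposition~\ref{prop2.01} and Lemma~\ref{Gb+1}.
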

 \begin{proof} If $\gamma =1 $, every element in  $ O G_{(a,b)}( 1)$ is odd  by Lemma \ref{Gb+1} as $ O G_{(a,b)}( 1)\subseteq G_{(a,b)}( 1)$.  Hence, the characterization of  $O G_{(a,b)}( 1)$ follows from  Corollary \ref{odd-good2} and Proposition \ref{odd-good}.  Next, assume that   $\gamma\geq 2$. Then we  have   $O G_{(a,b)}(\gamma ) =G_{(a,b)}(\gamma ) $ by Theorem \ref{thmSet}.
     Let $c\in OG_{(a,b)}(1)$. Let $c=2^id$ for some integer $i\geq 0$ and some odd integer $d$. It follows that  $ 2^{1+i} d =2 c \in OG_{(a,b)}(0 )$. Hence,  $  2^{1+i} \in OG_{(a,b)}(0 )$ and $0\leq i \leq \gamma-1$  because   $2^{\gamma} || (a+b)$.
     Subsequently, we have $ d \in OG_{(a,b)}(0 )$.   Since $d $ is odd, it can be concluded that  $ d \in \{d \in \mathbb{N} \mid d=1 \text{  or   }  d \text{  is odd such that }  2|| {\rm ord}_{p}(\frac{a}{b})~ \text{for every} $ $\text{prime}   $ $     p ~\text{dividing}~d\} = OG_{(a,b)}(\gamma )$ by Proposition~\ref{odd-good}.  Therefore,  $c=2^id\in \left\{d 2^{i}\mid d \in  OG_{(a,b)}(\gamma ) \right\}$ for some $0\leq i \leq \gamma-1$.  The reverse inclusion is clear. Hence, the proof of 1) is completed.

     Every element  in $EG_{(a,b)}(1) $ is odd. For if $x=2y \in EG_{(a,b)}(1)$, then  $2^2y \in EG_{(a,b)}(0)$ and so  $y \in EG_{(a,b)}(2)$. But from Theorem~\ref{OEG}, $EG_{(a,b)}(2)=\emptyset$.  Now 2) follows immediately from Proposition \ref{odd-even}.
     
     Using arguments similar to those in the proof of 3) in Theorem \ref{thmSet}, 3) and 4) follow.
 \end{proof}

 From Corollary~\ref{coroll2.4} and Theorem~\ref{OEG}, the results can be summarized in the following diagram.
 
 \setlength{\tabcolsep}{1pt}
 \renewcommand{\arraystretch}{1.5}
 \begin{center}
     \begin{tabular}{cccccccccccccc}$G_{(a,b)}$&$=$&$G_{(a,b)}(0)$ &&$\supsetneq$ &$G_{(a,b)}(1) $& $\supsetneq $&$G_{(a,b)}(2)  $    &$\supsetneq $ &$\cdots $ &$\supsetneq  $  &$G_{(a,b)}(\gamma) $  &$\supsetneq $  & $G_{(a,b)}(\gamma+1)  =\emptyset$\\
         &&$||$&&&$|| $&&$||$&&&&$||$&&$||$\\
         & & \rotatebox[origin=c]{90}{$OG_{(a,b)}(0) \cup EG_{(a,b)}(0)$~} & && \rotatebox[origin=c]{90}{$OG_{(a,b)}(1) \cup EG_{(a,b)}(1)$~} && \rotatebox[origin=c]{90}{~~~~~~~~~~~~~~~$OG_{(a,b)}(2) $}  &&$\cdots$& & \rotatebox[origin=c]{90}{~~~~~~~~~~~~~~~~$OG_{(a,b)}(\gamma)$ }&&  \rotatebox[origin=c]{90}{~~~~~~~~~~$OG_{(a,b)}(\gamma+1) $}
     \end{tabular}
 \end{center}

 \section{Self-Dual Negacyclic Codes}
 
 In this section, we focus on applications of $2^\beta$-good  and $2^\beta$-oddly-good integers in the characterization and enumeration of self-dual negacyclic codes.   In Subsection 3.1,   brief history and basic properties of  negacyclic codes  are recalled. It is  followed by the characterization and enumeration  of Euclidean  (resp., Hermitian) self-dual negacyclic codes in Subsection 3.2  (resp., Subsection 3.3).

 \subsection{Negacyclic Codes}
 For a prime  $p$ and a positive integer $l$, denote by $\mathbb{F}_{p^l}$ the finite field of $p^l$ elements. A {\em linear code} $C$ of length $n$ over $\mathbb{F}_{p^l}$ is defined to be a subspace of the $\mathbb{F}_{p^l}$-vector space $\mathbb{F}_{p^l}^n$.  The {\em Euclidean dual} of a linear code $C$  is defined to be \[C^{\perp_{\rm E}}=\{\boldsymbol{v}\in \mathbb{F}_{p^l}^n\mid \langle \boldsymbol{v} , \boldsymbol{c} \rangle_{\rm E}=0\text{ for all } \boldsymbol{c}\in C\},\]
 where $\langle \boldsymbol{v} , \boldsymbol{u} \rangle_{\rm E}:=\sum_{i=1}^n v_iu_i$ is the Euclidean inner product between $\boldsymbol{v} =(v_1,v_2,\dots, v_n) $ and  $\boldsymbol{u} =(u_1,u_2,\dots, u_n)$ in $\mathbb{F}_{p^l}^n$.
 Over $\mathbb{F}_{p^{2l}}$, the {\em Hermitian  dual} of  a linear code $C$    can be defined as well and it  is defined to be \[C^{\perp_{\rm H}}=\{\boldsymbol{v}\in \mathbb{F}_{{p^{2l}}}^n\mid \langle \boldsymbol{v} , \boldsymbol{c} \rangle_{\rm H}=0\text{ for all } \boldsymbol{c}\in C\},\]
 where $\langle \boldsymbol{v} , \boldsymbol{u} \rangle_{\rm H}:=\sum_{i=1}^n v_iu_i^{p^{l}}$ is the  Hermitian inner product between $\boldsymbol{v}   $ and  $\boldsymbol{u}  $ in $\mathbb{F}_{{p^{2l}}}^n$.  A code $C$ is said to be {\em Euclidean self-dual}
 (resp.  {\em Hermitian self-dual}) if  $C=C^{\perp _{\rm E}}$ (resp., $C=C^{\perp _{\rm H}}$).
 
 A linear code $C$ is said to be {\em  negacyclic} if it is invariant under the right negacyclic shift. Precisely,  a linear code $C$ is negacyclic if and only if   \[(-c_{n-1},c_0,c_1, \dots, c_{n-2})\in C \text{ whenever } (c_0,c_1,c_2, \dots, c_{n-1})\in C.\]  Due to their rich algebraic structures and wide applications, negacyclic codes with self-duality have been of interest and extensively studied (see \cite{BR2012}, \cite{BR2013},  \cite{B2008}, \cite{D2012},   \cite{GG2012}, and \cite{LI2016}).
 Unlike the cyclic case in \cite{JLX2011}, self-dual negacyclic code of length $n$ over $\mathbb{F}_{p^l}$ exists only if $p$  is odd.
 The characterization  for the existence of a Euclidean  self-dual negacyclic code  over  $\mathbb{F}_{p^l}$ has been given in \cite{B2008}.
 In    \cite{D2012},  algebraic structure of repeated root Euclidean self-dual  negacyclic codes of length $n=2p^r$ over $\mathbb{F}_{p^l}$ has been  studied.  In \cite{BR2012} and  \cite{BR2013}, all simple root self-dual negacyclic codes of lengths $2^\nu$  and    $2q^t$  over $\mathbb{F}_{p^l}$ have been determined in terms of their generator polynomials, where $q$ is an odd prime different from  $p$. In the said papers, the   enumeration of such self-dual negacyclic codes   has been given as well.
 All Euclidean self-orthogonal negacyclic codes of length $q^\nu$ and  $2q^\nu$  over $\mathbb{F}_{p^l}$ have been determined in \cite{BR2013} and \cite{R2015}.
 Euclidean self-dual negacyclic codes of oddly-even length  have  been studied  in \cite{GG2012}.
 In \cite{LI2016},  construction and enumeration for Euclidean self-dual negacyclic code of length $2^\nu p^r$  have been provided together with a general concept for enumeration. However, there are no explicit formulas.

 It is well known that every negacyclic code $C$ of length $n$ over $\mathbb{F}_{p^l}$ can be viewed as an (isomorphic) ideal in the principal ideal ring $\mathbb{F}_{p^l}[x]/\langle x^n+1\rangle$ uniquely generated by a monic divisor $g(x)$  of $x^n+1$. Such  polynomial is called the {\em generator polynomial} of   $C$. For a monic polynomial $f(x)=\sum_{i=0}^kf_ix^i$  of degree $k$ in $\mathbb{F}_{p^l}[x]$ with $f_0\ne 0$,  the {\em reciprocal polynomial} of $f(x)$ is defined to be $f^*(x):=f_0^{-1} x^k \sum_{i=0}^kf_i(1/x)^i$.  In $\mathbb{F}_{p^{2l}}[x]$,  the {\em conjugate reciprocal polynomial} of $f(x)$ is defined to be $f^\dagger (x):=f_0^{-p^l} x^k \sum_{i=0}^kf_i^{p^l}(1/x)^i$.  A polynomial $f(x)$ is  called {\em self-reciprocal} (resp. {\em self-conjugate-reciprocal}) if $f(x)=f^*(x)$ (resp., $f(x)=f^\dagger(x)$). Otherwise, $f(x)$ and  $f^*(x)$ form a {\em reciprocal polynomial pair} (resp, $f(x)$ and  $f^\dagger(x)$ form a {\em conjugate-reciprocal polynomial pair}).  In \cite[Proposition 2.4]{D2010} and \cite[Proposition 2.3]{YC2015}, it has been shown that the Euclidean and Hermitian duals of   a negacyclic code $C$  over finite fields are again negacyclic codes.  Moreover, if  $C$ is  a negacyclic code with the generator polynomial $g(x)$, then it is Euclidean self-dual (resp., Hermitian self-dual) if and only if $g(x)=h^*(x)$ (resp., $g(x)=h^\dagger(x)$), where $h(x)=\frac{x^n+1}{g(x)}$.
 
 Consider an odd prime $p$ and $n=2^\nu p^rn^\prime$, where $\nu\geq 0$ and $r\geq 0$ are integers and $ n^\prime$ is an odd positive integer such that $p\nmid n^\prime$.  We have
 \begin{align}\label{eq-firstfact}
 x^n+1=(x^{2^\nu n^\prime}+1)^{p^r}&=\left(\dfrac{x^{2^{\nu+1} n^\prime}-1}{x^{2^\nu n^\prime}-1}\right)^{p^r}    =\left(\dfrac{   \prod \limits_{d| 2^{\nu+1}n^\prime} Q_d(x) }{  \prod\limits _{d| 2^{\nu}n^\prime} Q_d(x)}\right)^{p^r}   \notag \\
 &=\left( \prod _{d| n^\prime} Q_{d2^{\nu+1}}(x)  \right)^{p^r} ,
 \end{align}
 where $Q_{d2^{\nu+1}}(x) := \prod\limits_{\substack{{1\leq i\leq d2^{\nu+1} }\\ {\gcd(i,d 2^{\nu+1})=1}}} (x-\omega^i)$ is the $d2^{\nu+1}$th cyclotomic polynomial and $\omega$ is a primitive  $d2^{\nu+1}$th root of unity.

 \subsection{Euclidean Self-Dual Negacyclic Codes}
 
 In this subsection, we focus on the characterization and enumeration of Euclidean self-dual negacyclic codes over $\mathbb{F}_{p^l}$. General properties of Euclidean self-dual negacyclic codes of any lengths are given in 3.2.1.  Euclidean self-dual negacyclic codes  of some specific lengths are discussed in 3.2.2.
 
 \subsubsection{Euclidean Self-Dual Negacyclic Codes}

 From \cite[Theorem 2.47]{LNBook}, it has been shown that the $d2^{\nu+1}$th cyclotomic polynomial  $Q_{d2^{\nu+1}}(x)$ in \eqref{eq-firstfact} can be factorized into  a product of $\frac{\phi(d2^{\nu+1})}{{\rm  ord}_{d2^{\nu+1}}(p^l)}$ distinct monic irreducible polynomials of the same degree in $\mathbb{F}_{p^l}[x]$, where  $\phi$  is the Euler's totient function.
 
 The results in \cite[Lemma 1]{JLX2011} over a finite field of characteristic $2$ can be straightforwardly generalized to the case of finite fields of odd characteristic as follows.
 
 \begin{lemma} \label{LemQx} Let $\nu\geq 0$  be an integer and let $d$ be an integer given  in \eqref{eq-firstfact}.  Then following statements holds.
     \begin{enumerate}[$1)$]
         \item $  d\in G_{(p^l,1)}(\nu+1)$ if and only if  every irreducible factor of $Q_{d2^{\nu+1}}(x)$ is self-reciprocal.
         \item $  d\notin G_{(p^l,1)}(\nu+1)$ if and only if the irreducible factors of $Q_{d2^{\nu+1}}(x)$ form reciprocal polynomial pairs.
     \end{enumerate}
     
 \end{lemma}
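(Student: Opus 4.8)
The plan is to translate self-reciprocity of the irreducible factors of $Q_{d2^{\nu+1}}(x)$ into a statement about $p^l$-cyclotomic cosets modulo $d2^{\nu+1}$, and then to recognize that statement as exactly the divisibility condition defining $G_{(p^l,1)}(\nu+1)$. Write $m=d2^{\nu+1}$ and let $\omega$ be the primitive $m$th root of unity fixing $Q_m(x)$ as in \eqref{eq-firstfact}. The roots of $Q_m(x)$ are precisely the primitive $m$th roots of unity $\omega^i$ with $\gcd(i,m)=1$, and over $\mathbb{F}_{p^l}$ they partition into the $p^l$-cyclotomic cosets $C_i=\{\, i(p^l)^j \bmod m : j\ge 0\,\}$; each irreducible factor of $Q_m(x)$ has root set $\{\omega^t : t\in C_i\}$ for exactly one such coset, in agreement with the factorization count $\frac{\phi(m)}{{\rm ord}_{m}(p^l)}$ quoted above.

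First I would record that if $f(x)$ is the monic irreducible factor with root set $\{\omega^t : t\in C_i\}$, then by the definition of $f^*(x)$ its roots are the inverses $\{\omega^{-t} : t\in C_i\}$; since $-C_i=C_{-i}$, this means $f^*(x)$ is the factor attached to the coset $C_{-i}$. Here $\gcd(-i,m)=\gcd(i,m)=1$, so $\omega^{-i}$ is again a primitive $m$th root of unity and $f^*(x)$ is indeed another irreducible factor of the same $Q_m(x)$. Consequently $f$ is self-reciprocal if and only if $C_i=C_{-i}$, that is, $-i\equiv i(p^l)^{j}\pmod{m}$ for some $j\ge 0$. Using $\gcd(i,m)=1$ to cancel $i$, this is equivalent to $(p^l)^{j}\equiv -1\pmod{m}$, i.e.\ $m\mid\bigl((p^l)^{j}+1\bigr)$, a condition that no longer involves $i$.

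The key point is therefore that the self-reciprocity condition is uniform across all factors, so the situation is all-or-nothing, and both equivalences follow at once. If some $j$ satisfies $m\mid\bigl((p^l)^{j}+1\bigr)$, then $C_i=C_{-i}$ for every admissible $i$ and every irreducible factor of $Q_{d2^{\nu+1}}(x)$ is self-reciprocal; but $m\mid\bigl((p^l)^{j}+1\bigr)$ says exactly that $2^{\nu+1}d\in G_{(p^l,1)}$, i.e.\ $d\in G_{(p^l,1)}(\nu+1)$, which gives 1). If instead no such $j$ exists, then $C_i\ne C_{-i}$ for every $i$, so $f\ne f^*$ for each factor and the factors group into reciprocal polynomial pairs $\{f,f^*\}$; this is precisely the case $d\notin G_{(p^l,1)}(\nu+1)$, giving 2). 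I expect the only delicate step to be the reduction of $C_i=C_{-i}$ to $(p^l)^{j}\equiv -1\pmod m$ independently of $i$: once the coprimality $\gcd(i,m)=1$ is invoked to strip off $i$, the dependence on the particular factor vanishes and the dichotomy between the self-reciprocal case and the reciprocal-pair case matches the dichotomy $d\in G_{(p^l,1)}(\nu+1)$ versus $d\notin G_{(p^l,1)}(\nu+1)$ exactly.
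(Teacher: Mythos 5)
Your proof is correct, and in fact it supplies an argument the paper itself omits: the paper states Lemma \ref{LemQx} as a ``straightforward generalization'' of Lemma 1 of \cite{JLX2011} (proved there in characteristic $2$) and gives no proof. Your cyclotomic-coset argument --- identifying each irreducible factor with a coset $C_i$, noting $f^*$ corresponds to $C_{-i}$, and cancelling the unit $i$ to reduce self-reciprocity to the $i$-independent condition $(p^l)^j\equiv -1 \pmod{d2^{\nu+1}}$, which is exactly $d\in G_{(p^l,1)}(\nu+1)$ --- is precisely the standard argument behind the cited result, so it matches the intended proof rather than taking a different route.
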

 
 Applying Lemma \ref{LemQx}  and Equation  \eqref{eq-firstfact}  (cf. \cite[Equation (29)]{SJLU2015}), it can be concluded that

 \begin{align}\label{all}
 x^n+1&=\left( \prod _{\substack{d|n^\prime   \\ d\in G_{(p^l,1)}(\nu+1)}} Q_{d2^{\nu+1}}(x)  \prod _{\substack{d|n^\prime   \\ d\notin G_{(p^l,1)}(\nu+1)}}  Q_{d2^{\nu+1}}(x)  \right)^{p^r}  \notag\\
 &=\left( \prod_{\substack{d|n^\prime   \\ d\in G_{(p^l,1)}(\nu+1)} }\prod_{i=1}^{\rho(d2^{\nu+1} ,p^l)}f_{di}(x)\prod_{\substack{d|n^\prime   \\ d\notin G_{(p^l,1)}(\nu+1)}  }\prod_{j=1}^{\sigma(d2^{\nu+1},p^l)}g_{dj}(x)g^*_{dj}(x)\right)^{p^r},
 \end{align}
 where $f_{di}(x)$ is  a  self-reciprocal irreducible  polynomial for all $d$ and $i$,    $g_{dj}(x)$ and  $g_{dj}^*(x)$  are a reciprocal irreducible polynomial pair   for all  $d$ and $j$, $
 \rho(d2^{\nu+1},p^l)=
 \frac{\phi(d2^{\nu+1})}{{\rm  ord}_{d2^{\nu+1}}(p^l)}$ and $
 \sigma(d2^{\nu+1},p^l)=
 \frac{\phi(d2^{\nu+1})}{2{\rm  ord}_{d2^{\nu+1}}(p^l)}.$
 
 The existence of a Euclidean self-dual  negacyclic code of length $n$ over $\mathbb{F}_{p^l}$ can be determined using \eqref{all} as follows.
 
 \begin{proposition}\label{prop:charE}
     Let $p$ be  an odd prime and let $n=2^\nu p^rn^\prime$, where $\nu\geq 0$ and $r\geq 0$ are integers and $ n^\prime$ is an odd positive integer such that $p\nmid n^\prime$. Let $l$ be a positive integer. Then there exists  a  Euclidean self-dual  negacyclic code of length $n$ over $\mathbb{F}_{p^l} $ if and only if  $\nu>0$ and
     $d\notin G_{(p^l,1)}(\nu+1)$   for all $d|n^\prime$.
 \end{proposition}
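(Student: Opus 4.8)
The plan is to translate the self-duality criterion $g(x)=h^{*}(x)$, with $h(x)=(x^{n}+1)/g(x)$, into a statement about the multiplicities of the irreducible factors of $x^{n}+1$, and then read off the answer from the factorization \eqref{all}. First I would record the elementary fact that a Euclidean self-dual code of length $n$ has dimension $n/2$, so $n$ must be even; since both $p$ and $n^\prime$ are odd, this already forces $\nu>0$, accounting for the first half of the stated condition. (Equivalently, when $\nu=0$ the length $n$ is odd, $x=-1$ is a root of $x^{n}+1$, and the self-reciprocal polynomial $Q_{2}(x)=x+1$ occurs as a factor, a situation the argument below rules out.)

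For the main equivalence I would fix a monic generator polynomial $g(x)$ of a negacyclic code and use the criterion $g=h^{*}$ recalled in the text. Passing to the factorization of $x^{n}+1$ into monic irreducibles, the condition $g=h^{*}$ is equivalent to requiring, for every irreducible factor $\pi$, that the multiplicity of $\pi$ in $g$ equal the multiplicity of $\pi^{*}$ in $h$. The crucial structural input from \eqref{eq-firstfact} and \eqref{all} is that $x^{n}+1$ is the $p^{r}$-th power of a squarefree polynomial, so every irreducible factor occurs with multiplicity exactly $p^{r}$, and $p^{r}$ is \emph{odd} because $p$ is an odd prime.

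With this in hand the analysis splits according to the two types of factors in \eqref{all}. For a reciprocal pair $g_{dj},g_{dj}^{*}$ one may freely place all $p^{r}$ copies of $g_{dj}$ into $g$ and all $p^{r}$ copies of $g_{dj}^{*}$ into $h$; this respects $g=h^{*}$, so such pairs never obstruct self-duality. For a self-reciprocal factor $f_{di}$ (so $f_{di}=f_{di}^{*}$) the condition forces its multiplicity in $g$ to equal its multiplicity in $h$, whence $p^{r}$ would have to be even, which is impossible. Thus a Euclidean self-dual negacyclic code exists if and only if $x^{n}+1$ has no self-reciprocal irreducible factor. I expect verifying these two bookkeeping claims, that pairs are always splittable while self-reciprocal factors never are, to be the only real obstacle, and it is precisely where the oddness of $p^{r}$ is used.

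Finally I would invoke Lemma~\ref{LemQx}: the irreducible factors of $Q_{d2^{\nu+1}}(x)$ are self-reciprocal exactly when $d\in G_{(p^{l},1)}(\nu+1)$, and $Q_{d2^{\nu+1}}$ always contributes at least one factor since $\rho(d2^{\nu+1},p^{l})\geq 1$. Hence $x^{n}+1$ has no self-reciprocal irreducible factor if and only if $d\notin G_{(p^{l},1)}(\nu+1)$ for every $d\mid n^\prime$. Combining this with the parity observation, and noting that for $\nu=0$ one has $1\in G_{(p^{l},1)}(1)$ by Proposition~\ref{prop2.01} (because $2\mid p^{l}+1$), so the factor $d=1$ already violates the condition, yields the stated characterization. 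For the converse direction I would exhibit the code explicitly, taking $g(x)=\prod g_{dj}(x)^{p^{r}}$ with one representative from each reciprocal pair, and checking both $g=h^{*}$ and $\deg g=n/2$.
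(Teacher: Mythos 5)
Your proof is correct and takes essentially the same route as the paper's: the dimension/parity argument forcing $\nu>0$, the multiplicity bookkeeping on the factorization \eqref{all} (via Lemma~\ref{LemQx}) using the oddness of $p^r$ to rule out self-reciprocal irreducible factors, and the same explicit generator $g(x)=\prod_{d\mid n^\prime}\prod_{j} g_{dj}(x)^{p^r}$ for the converse. The only differences are cosmetic: you make the pair-splitting analysis and the redundancy of the $\nu>0$ clause (since $1\in G_{(p^l,1)}(1)$ when $\nu=0$) explicit, which the paper leaves implicit.
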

 \begin{proof}
     Assume  that there exists  a  Euclidean self-dual  negacyclic code  $C$ of length $n$ over $\mathbb{F}_{p^l} $. Then $n=2^\nu p^rn^\prime$  must be even which implies that $\nu>0$.  Let  $g(x)$ be the generator polynomial for $C$  and let $h(x):=\frac{x^n+1}{g(x)}$.      Since  $C$ is Euclidean self-dual, we have
     $g(x)=h^*(x)$.
     Suppose that there exists  a positive integer $d$ such that $d|n^\prime$ and  $d\in G_{(p^l,1)}(\nu+1)$.        Then $f_{d1}(x)$ has the same multiplicity $m $ in $g(x)$ and in $g^*(x)=h(x)$. It follows that  the  multiplicity of $f_{d1}(x)$ in $x^n+1$ is $2m =p^r$, a contradiction.  Hence, $d\notin G_{(p^l,1)}(\nu+1)$ for all divisors $d$ of  $n^\prime$.

     Conversely, assume that  $\nu>0$ and
     $d\notin G_{(p^l,1)}(\nu+1)$   for all $d|n^\prime$.  From \eqref{all},     we have
     \[ x^n+1=\prod_{\substack{d|n^\prime  }  }\prod_{j=1}^{\sigma(d2^{\nu+1},p^l)}\left(  g_{dj}(x)g^*_{dj}(x)\right)^{p^r}.\]
     It is not difficult to see that the negacyclic code generated by \[g(x)=\prod_{\substack{d|n^\prime  }  }\prod_{j=1}^{\sigma(d2^{\nu+1},p^l)} \left(  g_{dj}(x)\right)^{p^r}\] is Euclidean self-dual.
 \end{proof}

 
 The result of Blackford \cite[Theorem 3]{B2008} can be obtained as a corollary to Proposition~\ref{prop:charE} as follows

 \begin{corollary}\label{charE}
     Let $p$ be an odd prime and let $n=2^\nu p^rn^\prime$, where $\nu >0$ and $r\geq 0$ are integers and $ n^\prime$ is an odd positive integer such that $p\nmid n^\prime$. Let $l$ be a positive integer. Then there exists  a  Euclidean self-dual negacyclic code of length $n$ over $\mathbb{F}_{p^l} $ if and only if $1 \notin   G_{(p^l,1)}(\nu+1)$ i.e. if and only if $2^{\nu+1}\nmid (p^l+1) $.
 \end{corollary}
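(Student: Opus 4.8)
The plan is to obtain this as an immediate consequence of Proposition~\ref{prop:charE} combined with Corollary~\ref{corBad}. The key observation is that a Euclidean self-dual negacyclic code forces even length, so the standing hypothesis $\nu>0$ already guarantees $\beta:=\nu+1\geq 2$. This places us precisely in the regime of Corollary~\ref{corBad}, applied with the coprime odd integers $a=p^l$ and $b=1$ (odd since $p$ is an odd prime).

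First I would invoke Proposition~\ref{prop:charE}: with $\nu>0$ fixed, the existence of such a code is equivalent to the requirement that $d\notin G_{(p^l,1)}(\nu+1)$ for \emph{every} divisor $d$ of $n^\prime$. Since $n^\prime$ is odd, each such $d$ is an odd natural number, so this requirement is exactly the restriction of statement 3) of Corollary~\ref{corBad} to the divisors of $n^\prime$.

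Next I would apply Corollary~\ref{corBad} with $a=p^l$, $b=1$, and $\beta=\nu+1\geq 2$. It asserts the equivalence of the three statements: $1\notin G_{(p^l,1)}(\nu+1)$; $2^{\nu+1}\nmid(p^l+1)$; and $n\notin G_{(p^l,1)}(\nu+1)$ for all odd natural numbers $n$. The third of these immediately yields $d\notin G_{(p^l,1)}(\nu+1)$ for all $d\mid n^\prime$, while conversely specializing to $d=1$ (which divides $n^\prime$) recovers the first. Hence the divisor condition supplied by Proposition~\ref{prop:charE} collapses to the single statement $1\notin G_{(p^l,1)}(\nu+1)$, which by the same corollary equals $2^{\nu+1}\nmid(p^l+1)$. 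Chaining these equivalences gives the claim.

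I do not expect a genuine obstacle here: the entire content is carried by Corollary~\ref{corBad}, whose whole point is that for $\beta\geq 2$ the $2^\beta$-goodness of an odd integer is controlled solely by the single divisibility $2^\beta\mid(a+b)$, so that one odd integer is $2^\beta$-good iff all of them are. The only routine verifications are that $\nu>0$ forces $\nu+1\geq 2$ and that every divisor of the odd number $n^\prime$ is odd, both of which are immediate.
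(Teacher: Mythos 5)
Your proposal is correct and follows essentially the same route as the paper: both reduce the existence question to the divisor condition of Proposition~\ref{prop:charE} and then use Corollary~\ref{corBad} (with $a=p^l$, $b=1$, $\beta=\nu+1\geq 2$) to collapse that condition to $1\notin G_{(p^l,1)}(\nu+1)$, equivalently $2^{\nu+1}\nmid(p^l+1)$. No gaps; the routine checks you mention (oddness of divisors of $n^\prime$, $\nu+1\geq 2$) are exactly what the paper relies on implicitly.
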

 
 \begin{proof}
     Suppose   there exists a   Euclidean self-dual negacyclic code of length $n$ over $\mathbb{F}_{p^l} $. Then, by Proposition~\ref{prop:charE}, we have   that  $ d\notin G_{(p^l,1)}(\nu+1)$   for all $d|n^\prime$, in particular   $ 1 \notin G_{(p^l,1)}(\nu+1)$. Conversely if $ 1 \notin G_{(p^l,1)}(\nu+1)$, then $d\notin G_{(p^l,1)}(\nu+1)$ for  all odd numbers $d$, in particular for all divisors $d$ of $n^\prime$,
     by Corollary~\ref{corBad}. The result therefore follows from Proposition~\ref{prop:charE}.
     
     From Corollary~\ref{corBad}, the last two statements  are equivalent.
 \end{proof}

 \begin{corollary}
     Let $p$ be an odd prime and let   $\nu >0$ and $r\geq 0$ be  integers.  Let $ n^\prime$ be  an odd positive integer such that $p\nmid n^\prime$. Let $l$ be a positive integer.  If there exists  a  Euclidean self-dual negacyclic code of length $2^\nu p^rn^\prime$ over $\mathbb{F}_{p^l} $, then there exist
     a  Euclidean self-dual negacyclic code of length $2^\mu p^rn^\prime$ over $\mathbb{F}_{p^l} $   for all positive integers   $\mu\geq \nu$.
 \end{corollary}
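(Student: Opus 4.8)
The plan is to reduce the existence question entirely to the single arithmetic condition furnished by Corollary~\ref{charE}, and then to exploit the obvious divisibility monotonicity in the $2$-adic exponent.

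First I would invoke Corollary~\ref{charE}. Since $\nu>0$, the existence of a Euclidean self-dual negacyclic code of length $2^\nu p^r n^\prime$ over $\mathbb{F}_{p^l}$ is equivalent to the condition $2^{\nu+1}\nmid(p^l+1)$. Hence the hypothesis of the corollary gives precisely $2^{\nu+1}\nmid(p^l+1)$.

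Next, for any integer $\mu\geq\nu$ I would observe that $\mu+1\geq\nu+1$, so if $2^{\mu+1}$ divided $p^l+1$ then $2^{\nu+1}\mid 2^{\mu+1}\mid(p^l+1)$ would follow, contradicting the previous step. Therefore $2^{\mu+1}\nmid(p^l+1)$ for every $\mu\geq\nu$. Because $\mu\geq\nu>0$, Corollary~\ref{charE} applies once more and produces a Euclidean self-dual negacyclic code of length $2^\mu p^r n^\prime$ over $\mathbb{F}_{p^l}$, as required.

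I expect no genuine obstacle: the statement is a direct monotonicity consequence of the characterization. The only point to watch is that Corollary~\ref{charE} requires a positive $2$-adic exponent, which is guaranteed by $\mu\geq\nu>0$. Alternatively, one may phrase the argument through the containment $G_{(p^l,1)}(\mu+1)\subseteq G_{(p^l,1)}(\nu+1)$, valid because $\mu+1\geq\nu+1$, together with Proposition~\ref{prop:charE}: then $d\notin G_{(p^l,1)}(\nu+1)$ for all $d\mid n^\prime$ immediately forces $d\notin G_{(p^l,1)}(\mu+1)$ for all $d\mid n^\prime$, and the existence at length $2^\mu p^r n^\prime$ follows.
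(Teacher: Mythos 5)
Your proof is correct and is exactly the argument the paper intends: the paper states this result as an immediate, unproved corollary of Corollary~\ref{charE}, and your reduction to the divisibility monotonicity ($2^{\mu+1}\nmid(p^l+1)$ follows from $2^{\nu+1}\nmid(p^l+1)$ when $\mu\geq\nu$) is the evident route. Your alternative phrasing via the containment $G_{(p^l,1)}(\mu+1)\subseteq G_{(p^l,1)}(\nu+1)$ and Proposition~\ref{prop:charE} is equally valid and consistent with the paper's setup.
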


 Using the above results, we can prove two corollaries which are due to  \cite[Theorem 3.3]{LI2016}.
 \begin{corollary}
     Let $p$ be an odd prime and $l$ be a positive integer. Assume that  $2^2| (p^l-1)$. Then there exists a Euclidean self-dual negacyclic code of length $2^\nu p^r$ over $\mathbb{F}_{p^l}$ if and only if $\nu \geq 1$.
 \end{corollary}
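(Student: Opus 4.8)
The plan is to specialize Corollary~\ref{charE} to the case $n^\prime = 1$ and then read off the condition from the $2$-adic valuation of $p^l+1$. Since the length under consideration is $2^\nu p^r$, we have $n^\prime = 1$, and Corollary~\ref{charE} tells us that a Euclidean self-dual negacyclic code of length $2^\nu p^r$ over $\mathbb{F}_{p^l}$ exists precisely when $\nu > 0$ and $2^{\nu+1}\nmid (p^l+1)$. The whole argument reduces to checking that, under the hypothesis $2^2\mid(p^l-1)$, the second condition is automatic as soon as $\nu\geq 1$.

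First I would dispose of the forward implication: if such a code exists, then its length $2^\nu p^r$ must be even, since a self-dual code necessarily has even length, and this forces $\nu\geq 1$. For the reverse implication, I would invoke the hypothesis $2^2\mid(p^l-1)$, which gives $p^l\equiv 1 \pmod 4$ and hence $p^l+1\equiv 2\pmod 4$. Thus the exact power of $2$ dividing $p^l+1$ is $2^1$, i.e. $2\,\|\,(p^l+1)$. Consequently, for every $\nu\geq 1$ we have $\nu+1\geq 2$, so $2^{\nu+1}\nmid(p^l+1)$. Combined with $\nu>0$, Corollary~\ref{charE} then guarantees the existence of a Euclidean self-dual negacyclic code of length $2^\nu p^r$ over $\mathbb{F}_{p^l}$.

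There is no substantive obstacle here: the corollary already converts the coding-theoretic existence question into the purely number-theoretic statement $2^{\nu+1}\nmid(p^l+1)$, and the hypothesis $4\mid(p^l-1)$ pins the $2$-adic valuation of $p^l+1$ at exactly one. The only point requiring mild care is keeping the two directions of the equivalence straight — the necessity of $\nu\geq 1$ comes from the even-length requirement (valid regardless of the hypothesis), while the sufficiency comes from the hypothesis forcing $2\,\|\,(p^l+1)$ so that no higher power $2^{\nu+1}$ with $\nu\geq 1$ can divide it.
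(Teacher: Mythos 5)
Your proposal is correct and follows essentially the same route as the paper: both reduce the statement to Corollary~\ref{charE} and observe that the hypothesis $2^2\mid(p^l-1)$ forces $2\,\|\,(p^l+1)$, so that $2^{\nu+1}\nmid(p^l+1)$ holds exactly when $\nu\geq 1$. Your congruence argument modulo $4$ is just a rephrasing of the paper's explicit factorization $p^l+1=2(2^{\gamma-1}c+1)$, and your extra remark that $\nu\geq 1$ is forced by evenness of the length matches the reasoning already built into Proposition~\ref{prop:charE}.
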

 \begin{proof} As $2^2| (p^l-1)$, let $p^\ell-1= 2^\gamma c$ for some odd integer $c$ and integer $\gamma\geq 2$. Then $p^\ell+1= 2^\gamma c+2=2(2^{\gamma-1}c+1)$, where $2^{\gamma-1}c+1$ is odd. We see that $2^{\nu+1} \nmid (p^l+1)$   if and only if $\nu \geq 1$. Then, the result follows from Corollary~\ref{charE}.
 \end{proof}

 \begin{corollary}
     Let $p$ be an odd prime and $l$ be a positive integer. Let $\gamma$ be a positive integer such that $\gamma \geq 2$ and $2^\gamma || (p^l+1)$. Then there exists a Euclidean self-dual negacyclic code of length $2^\nu p^r$ over $\mathbb{F}_{p^l}$ if and only if $\nu \geq \gamma$.
 \end{corollary}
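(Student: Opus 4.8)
The plan is to deduce this directly from Corollary~\ref{charE}, specialized to the case in which the odd part of the length coprime to $p$ is trivial. Writing the length $2^\nu p^r$ in the form $2^\nu p^r n'$ of that corollary forces $n' = 1$, so the only divisor $d \mid n'$ that can occur is $d = 1$, and the general divisor condition collapses to a single constraint. I would then organize the whole argument around the exact $2$-adic valuation of $p^l+1$ recorded by the hypothesis $2^\gamma || (p^l+1)$.

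First I would dispose of the degenerate case $\nu = 0$ separately. Here the length $p^r$ is odd, and a Euclidean self-dual code must have even length since its dimension equals half the length; hence no such code exists. As $\gamma \geq 2$, the inequality $\nu \geq \gamma$ also fails when $\nu = 0$, so the claimed equivalence holds in this case without further work.

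For $\nu \geq 1$ I would invoke Corollary~\ref{charE} with $n' = 1$, which asserts that a Euclidean self-dual negacyclic code of length $2^\nu p^r$ over $\mathbb{F}_{p^l}$ exists if and only if $2^{\nu+1} \nmid (p^l+1)$. The remaining step is a one-line valuation computation: because $2^\gamma$ is the exact power of $2$ dividing $p^l+1$, the divisibility $2^{\nu+1} \mid (p^l+1)$ holds precisely when $\nu+1 \leq \gamma$, that is when $\nu \leq \gamma - 1$; negating, $2^{\nu+1} \nmid (p^l+1)$ holds precisely when $\nu \geq \gamma$. Combining the two cases yields existence if and only if $\nu \geq \gamma$, and I would note that $\nu \geq \gamma \geq 2$ automatically supplies the hypothesis $\nu > 0$ needed to apply Corollary~\ref{charE}.

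I do not expect any genuine obstacle in this argument: all the substantive content — the factorization of $x^n+1$ and the self-duality criterion via reciprocal polynomial pairs — is already packaged inside Corollary~\ref{charE}. The present statement is merely its specialization to $n' = 1$, followed by the elementary translation of the $2$-adic divisibility $2^{\nu+1} \nmid (p^l+1)$ into the stated inequality $\nu \geq \gamma$.
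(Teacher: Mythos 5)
Your proposal is correct and follows essentially the same route as the paper: the paper's proof is exactly the observation that $2^\gamma \| (p^l+1)$ makes $2^{\nu+1} \nmid (p^l+1)$ equivalent to $\nu \geq \gamma$, combined with Corollary~\ref{charE}. Your separate treatment of $\nu = 0$ is a small touch of extra care (the paper leaves it implicit, since Corollary~\ref{charE} is stated for $\nu > 0$), but it does not change the substance of the argument.
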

 
 \begin{proof}
     
     Since $2^\gamma || (p^l+1)$, we see that  $2^{\nu+1} \nmid (p^l+1)$ if and only if   $\nu  \geq \gamma$. The desired result  therefore follows by  Corollary~\ref{charE}.
 \end{proof}

 The explicit number of Euclidean self-dual negacyclic codes of specific lengths $n=2^{\nu}$, $\nu \geq 1$ and  $n=2q^t$ ($ \nu =1$, $n^\prime=q^t, q$ an odd prime) was obtained   in  \cite[Theorems 3,4]{BR2012}, \cite[Theorems 3,4]{BR2013} respectively. The idea for a general formula for  the number of Euclidean self-dual negacyclic codes of length $n$ over $\mathbb{F}_{p^{l} }$  was given in \cite[Corollary 2.6]{LI2016}. However, there are no explicit  formulas for general $n$. Using $2^\beta$-good integers discussed in Section 2,  a general formula for the number of Euclidean self-dual negacyclic codes of length $n$ over $\mathbb{F}_{p^{l} }$  can be deduced.
 
 \begin{theorem} \label{thmNumE1}
     Let $p$ be  an odd prime and let $n=2^\nu p^rn^\prime$, where $\nu>0$ and $r\geq 0$ are integers and $ n^\prime$ is an odd positive integer such that $p\nmid n^\prime$. Let $l$ be a positive integer.  The number of Euclidean self-dual negacyclic codes of length $n$ over $\mathbb{F}_{p^{l} }$ is
     \begin{align}
     NE(p^l,n) :=
     \begin{cases}
     (p^r+1)^{\frac{1}{2}\sum_{d| n^\prime} \frac{\phi(d2^{\nu+1})}{ {\rm ord}_{d2^{\nu+1}}(p^{l})}}& \text{  if  }2^{\nu+1} \nmid (p^l+1),\\
     0& \text{ otherwise}.
     \end{cases}\label{ESDcc}
     \end{align}
 \end{theorem}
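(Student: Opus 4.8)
The plan is to split the argument according to whether $2^{\nu+1}\mid (p^l+1)$, which by Corollary~\ref{charE} governs the very existence of a Euclidean self-dual negacyclic code of length $n$ over $\mathbb{F}_{p^l}$. If $2^{\nu+1}\mid (p^l+1)$, then $1\in G_{(p^l,1)}(\nu+1)$ and Corollary~\ref{charE} shows that no such code exists, so the count is $0$, matching the second branch of $NE(p^l,n)$. The whole content therefore lies in the complementary case $2^{\nu+1}\nmid (p^l+1)$, where I must produce the exponential formula.

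So assume $2^{\nu+1}\nmid (p^l+1)$. First I would invoke Corollary~\ref{corBad} to upgrade $1\notin G_{(p^l,1)}(\nu+1)$ to $d\notin G_{(p^l,1)}(\nu+1)$ for every odd $d$, in particular for every $d\mid n^\prime$. Feeding this into \eqref{all} collapses the factorization to
\[
x^n+1=\prod_{d\mid n^\prime}\prod_{j=1}^{\sigma(d2^{\nu+1},p^l)}\bigl(g_{dj}(x)g^*_{dj}(x)\bigr)^{p^r},
\]
so that $x^n+1$ is a product of reciprocal irreducible pairs, each appearing to the power $p^r$, and no self-reciprocal irreducible factor occurs.

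Next I would translate self-duality into a counting problem via the generator-polynomial criterion recalled in Subsection~3.1: a negacyclic code with generator $g(x)$ is Euclidean self-dual precisely when $g(x)=h^*(x)$ with $h(x)=(x^n+1)/g(x)$, which, since $x^n+1$ is self-reciprocal, is equivalent to $g(x)g^*(x)=x^n+1$. Writing $g$ in terms of the factors above, let $a_{dj}$ denote the multiplicity of $g_{dj}$ in $g$; since the reciprocal operation swaps $g_{dj}\leftrightarrow g^*_{dj}$, the multiplicity of $g_{dj}$ in $g g^*$ equals $a_{dj}$ plus the multiplicity of $g^*_{dj}$ in $g$. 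Matching this with the multiplicity $p^r$ in $x^n+1$ forces the multiplicity of $g^*_{dj}$ in $g$ to be $p^r-a_{dj}$, so $g$ is completely determined by one free exponent $a_{dj}\in\{0,1,\dots,p^r\}$ for each reciprocal pair. Conversely, any such assignment yields a monic divisor $g$ of $x^n+1$ with $g g^*=x^n+1$, hence a Euclidean self-dual code.

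The counting then follows: the choices are independent across the distinct reciprocal pairs, each pair contributing $p^r+1$ possibilities, and the number of pairs is $\sum_{d\mid n^\prime}\sigma(d2^{\nu+1},p^l)=\tfrac12\sum_{d\mid n^\prime}\frac{\phi(d2^{\nu+1})}{{\rm ord}_{d2^{\nu+1}}(p^l)}$. Multiplying gives $(p^r+1)^{\frac12\sum_{d\mid n^\prime}\phi(d2^{\nu+1})/{\rm ord}_{d2^{\nu+1}}(p^l)}$, as claimed. I expect the only genuinely delicate point to be verifying that the described correspondence is a bijection, namely that the exponent choices for different pairs are truly independent (which rests on the distinctness of all the irreducible factors $g_{dj},g^*_{dj}$ appearing in \eqref{all}) and that each choice does produce a legitimate monic divisor satisfying $g g^*=x^n+1$; the arithmetic fact that $p^r$ is odd, which rules out any contribution from hypothetical self-reciprocal factors, is already absorbed into the existence statement of Corollary~\ref{charE}.
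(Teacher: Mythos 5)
Your proposal is correct and follows essentially the same route as the paper's proof: handle the case $2^{\nu+1}\mid(p^l+1)$ via Corollary~\ref{charE}, then use the factorization \eqref{all} into reciprocal irreducible pairs and count exponent assignments $(a_{dj},b_{dj})$ with $a_{dj}+b_{dj}=p^r$, giving $p^r+1$ choices per pair. Your reformulation of the self-duality condition $g=h^*$ as $g\,g^*=x^n+1$ and your explicit appeal to Corollary~\ref{corBad} are only cosmetic differences from the paper's argument.
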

 \begin{proof} If  $2^{\nu+1} | (p^l+1)$, there are no Euclidean self-dual negacyclic codes of length $n$ over $\mathbb{F}_{p^{l} }$ by Corollary~\ref{charE}. Hence, $  NE(p^l,n)=0$.
     
     Assume that  $2^{\nu+1} \nmid (p^l+1)$.
     From \eqref{all},     we have
     \[ x^n+1=  \prod_{\substack{d|n^\prime  }  }\prod_{j=1}^{\sigma(d2^{\nu+1},p^l)}\left(g_{dj}(x)g^*_{dj}(x)\right)^{p^r}.\]
     Let \[g(x)=\prod_{\substack{d|n^\prime  }  }\prod_{j=1}^{\sigma(d2^{\nu+1},p^l)}g_{dj}(x)^{a_{dj}}g^*_{dj}(x)^{b_{dj}}\] be  the generator polynomial of a Euclidean self-dual cyclic code of length $n$ over $\mathbb{F}_{p^{l} }$, where $0\leq a_{dj}\leq p^r$ and $0\leq b_{dj}\leq p^r$.  Then
     
     \[g(x)=h^*(x)=\prod_{\substack{d|n^\prime  }  }\prod_{j=1}^{\sigma(d2^{\nu+1},p^l)}g_{dj}(x)^{p^r-b_{dj}}g^*_{dj}(x)^{p^r-a_{dj}}.\]
     This implies that   $a_{dj}+b_{dj}=p^r$, and hence, the number of  choices for $(a_{dj},b_{dj})$ is $p^r+1$ for all  $d\mid n^\prime$ and $1\leq j\leq \sigma(d2^{\nu+1},p^l)$.   Therefore, the formula is proved.
 \end{proof}
 
 From the above theorem, the remaining difficult part is to compute
 \begin{align}\label{eq:t} t(n'2^{\nu+1},p^l) := {\frac{1}{2}\sum_{d| n^\prime}\frac{\phi(d2^{\nu+1})}{ {\rm ord}_{d2^{\nu+1}}(p^{l})}}\end{align}
 which  is independent of a factor $p^r$ of $n$.  Some results on a specific $n^\prime$ are given in 3.2.2.

 \subsubsection{Euclidean Self-Dual Negacyclic Codes of Lengths $2^\nu$ and  $2^\nu p^r$}
 
 In this part, we give  explicit formulae of Euclidean self-dual  negacyclic codes of lengths $2^\nu$ and  $2^\nu p^r$ over $\mathbb{F}_{p^l}$, where $\nu$ is a positive integer. First, we compute $\operatorname{ord}_{2^{\nu+1}}(p^l)$ which is a key to determine  $t(2^{\nu+1},p^l) $.
 

 First, useful number theoretical results   are recalled.
 
 \begin{theorem}[{\cite[Theorem 3.10]{N2000}}] \label{ordermod2}
     If $k\geq 3$, then $5$ has order $2^{k-2} $ modulo ${2^k}$.
     If $a \equiv 1\mod{4}$ then there exists a unique integer $i \in \{0,1,\dots, 2^{k-2}-1\}$ such that
     $a \equiv 5^i \mod{2^k}$.
     If $a \equiv 3\mod{4}$ then there exists a unique integer $i \in \{0,1,\dots, 2^{k-2}-1\}$ such that
     $a \equiv -5^i \mod{2^k}$.
 \end{theorem}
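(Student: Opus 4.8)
The plan is to establish the standard structural decomposition $(\mathbb{Z}/2^k\mathbb{Z})^\times \cong \langle -1\rangle \times \langle 5\rangle$ for $k\geq 3$ and to read off all three assertions from it.

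First I would prove that $5$ has order exactly $2^{k-2}$ modulo $2^k$ by computing the $2$-adic valuation of $5^{2^j}-1$. By induction on $j\geq 0$ I would show that $5^{2^j}-1$ is divisible by $2^{j+2}$ but not by $2^{j+3}$: the base case is ${\rm ord}$-free, namely $5-1=4$ has valuation $2$, and for the inductive step I factor $5^{2^{j+1}}-1=(5^{2^j}-1)(5^{2^j}+1)$ and note that $5^{2^j}-1$ is already divisible by $4$, so $5^{2^j}+1\equiv 2\pmod 4$ contributes exactly one more factor of $2$. Taking $j=k-2$ gives $5^{2^{k-2}}\equiv 1\pmod{2^k}$, while $j=k-3$ gives $5^{2^{k-3}}\not\equiv 1\pmod{2^k}$. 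Since the order of $5$ divides $\phi(2^k)=2^{k-1}$ and is therefore a power of two, these two facts force it to equal $2^{k-2}$.

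Next I would identify $\langle 5\rangle$ with the set $H=\{a\in(\mathbb{Z}/2^k\mathbb{Z})^\times : a\equiv 1\pmod 4\}$. As $5\equiv 1\pmod 4$, every power of $5$ lies in $H$, so $\langle 5\rangle\subseteq H$. Now $H$ is itself a subgroup, since a product of residues $\equiv 1\pmod 4$ is again $\equiv 1\pmod 4$, and $|H|=2^{k-2}$ because exactly half of the $2^{k-1}$ odd residues modulo $2^k$ are $\equiv 1\pmod 4$. Comparing cardinalities, $|\langle 5\rangle|=2^{k-2}=|H|$ forces $\langle 5\rangle=H$. This yields existence of $i$ with $a\equiv 5^i\pmod{2^k}$ whenever $a\equiv 1\pmod 4$, and uniqueness of $i\in\{0,\dots,2^{k-2}-1\}$ follows because $i$ is determined only modulo the order $2^{k-2}$ of $5$.

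Finally, for $a\equiv 3\pmod 4$ I would reduce to the previous case via $-a\equiv 1\pmod 4$: there is a unique $i\in\{0,\dots,2^{k-2}-1\}$ with $-a\equiv 5^i\pmod{2^k}$, equivalently $a\equiv -5^i\pmod{2^k}$. I expect the valuation induction in the first step to be the only real obstacle; once ${\rm ord}_{2^k}(5)=2^{k-2}$ is secured, the counting argument identifying $\langle 5\rangle$ with $H$ and the reduction of the $a\equiv 3$ case to the $a\equiv 1$ case are routine.
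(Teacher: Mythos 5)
Your proof is correct: the valuation induction giving $v_2\bigl(5^{2^j}-1\bigr)=j+2$ pins down ${\rm ord}_{2^k}(5)=2^{k-2}$, the counting argument identifying $\langle 5\rangle$ with the subgroup of residues $\equiv 1 \pmod 4$ is sound, and the reduction of the $a\equiv 3\pmod 4$ case via $-a$ works. Note, however, that the paper itself offers no proof of this statement; it is imported verbatim as a known result, cited to Nathanson's book (Theorem 3.10 of \cite{N2000}), so there is no internal argument to compare against. Your argument is the standard one (and essentially the one found in that reference), establishing along the way the decomposition $(\mathbb{Z}/2^k\mathbb{Z})^\times = \langle -1\rangle \times \langle 5\rangle$, which is exactly what the paper's Lemma \ref{formula1} later relies on.
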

 
 For convenience, for each  $\nu\geq 1$, let $\alpha_p$ denote the unique integer in the set $\{0,1,\dots, 2^{\nu-1}-1\}$ such that
 \begin{align} \label{alpha}\begin{cases}
 p \equiv 5^{\alpha_p}\mod{2^{\nu+1}} &\text{ if } p \equiv 1\mod{4},\\
 p \equiv -5^{\alpha_p}\mod{2^{\nu+1}} &\text{ if } p \equiv 3\mod{4}.
 \end{cases}
 \end{align}
 Note that $\alpha_p=0$ if $\nu=1$ and the existence of $\alpha_p$ is guaranteed by  Theorem \ref{ordermod2} for all  $\nu\geq 2$.
 
 \begin{lemma}\label{formula1}
     Let $\nu$ and $ l$ be positive integers and let  $p$ be an odd prime.   Then
     $$\operatorname{ord}_{2^{\nu+1}}(p^l) =\begin{cases}
     1 & { \text{ if } \nu=1 \;\text{and} \;p^l \equiv 1 \mod{4},}\\
     2 & { \text{ if } \nu=1 \;\text{and} \;p^l \equiv 3 \mod{4},}\\
     \frac{2^{\nu-1}}{\operatorname{gcd}(2^{\nu-1},\alpha_pl)} &\text{ if } \nu\geq 2.
     \end{cases} $$
 \end{lemma}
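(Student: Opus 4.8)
The plan is to split on the two regimes $\nu=1$ and $\nu\geq 2$, since only the latter requires the full structure of $\mathbb{Z}_{2^{\nu+1}}^\times$ supplied by Theorem~\ref{ordermod2}. For $\nu=1$ the modulus is $4$ and $\mathbb{Z}_4^\times=\{1,3\}$ has order $2$, so I argue directly: if $p^l\equiv 1\pmod{4}$ then $\operatorname{ord}_4(p^l)=1$, while if $p^l\equiv 3\equiv -1\pmod{4}$ then $(p^l)^2\equiv 1$ but $p^l\not\equiv 1$, giving $\operatorname{ord}_4(p^l)=2$. This disposes of the first two cases.

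For $\nu\geq 2$ I put $k=\nu+1\geq 3$ and invoke Theorem~\ref{ordermod2}: the residue $5$ has order $2^{k-2}=2^{\nu-1}$ modulo $2^{\nu+1}$, and every unit is uniquely $\pm 5^{i}$ with $0\le i<2^{\nu-1}$, where the $+$ residues are those $\equiv 1\pmod{4}$ and the $-$ residues those $\equiv 3\pmod{4}$. Since $-1\equiv 3\pmod{4}$ lies outside $\langle 5\rangle$, this yields the internal direct product $\mathbb{Z}_{2^{\nu+1}}^\times=\langle -1\rangle\times\langle 5\rangle\cong\mathbb{Z}_2\times\mathbb{Z}_{2^{\nu-1}}$. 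Using the definition of $\alpha_p$ in \eqref{alpha} I write $p\equiv\varepsilon\,5^{\alpha_p}\pmod{2^{\nu+1}}$ with $\varepsilon=1$ if $p\equiv 1\pmod{4}$ and $\varepsilon=-1$ if $p\equiv 3\pmod{4}$, and then raise to the $l$-th power to obtain $p^l\equiv\varepsilon^{\,l}\,5^{\alpha_p l}\pmod{2^{\nu+1}}$.

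The order of $p^l$ is then read off from the direct-product description. When $\varepsilon^{l}=1$ (i.e.\ $p\equiv 1\pmod{4}$, or $p\equiv 3\pmod{4}$ with $l$ even) we have $p^l\equiv 5^{\alpha_p l}$, whose order is $2^{\nu-1}/\gcd(2^{\nu-1},\alpha_p l)$ because $5$ has order $2^{\nu-1}$; this is exactly the asserted value. The remaining case $\varepsilon^{l}=-1$, namely $p\equiv 3\pmod{4}$ with $l$ odd, is the delicate one: here $p^l\equiv -5^{\alpha_p l}$, and reducing a putative exponent $t$ modulo $4$ forces $t$ to be even, after which $(p^l)^t\equiv 1$ reduces to $2^{\nu-1}\mid\alpha_p l\,t$; the direct-product structure then yields order $\operatorname{lcm}\!\big(2,\,2^{\nu-1}/\gcd(2^{\nu-1},\alpha_p l)\big)$. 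I expect the main obstacle to be showing that this $\operatorname{lcm}$ collapses to $2^{\nu-1}/\gcd(2^{\nu-1},\alpha_p l)$, equivalently that this quotient is even; this is precisely where the boundary value $\alpha_p=0$ (for which $p\equiv -1\pmod{2^{\nu+1}}$ and $p^l\equiv -1$ has order $2$) must be examined, so the closing step will require bounding $\alpha_p$ through \eqref{alpha} to guarantee $2^{\nu-1}\nmid\alpha_p l$ for odd $l$.
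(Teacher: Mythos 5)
Your strategy is the same as the paper's: handle $\nu=1$ by inspecting $\mathbb{Z}_4^\times$, and for $\nu\geq 2$ use Theorem~\ref{ordermod2} to write $\mathbb{Z}_{2^{\nu+1}}^\times=\langle -1\rangle\times\langle 5\rangle$, split according to whether $p^l\equiv \pm 5^{\alpha_pl}$, and in the sign $-1$ case (that is, $p\equiv 3\pmod 4$ and $l$ odd) compute $\operatorname{ord}_{2^{\nu+1}}(p^l)=\operatorname{lcm}\bigl(2,\,\operatorname{ord}_{2^{\nu+1}}(5^{\alpha_pl})\bigr)$. Up to the final step your derivation is correct and matches the paper's proof, which concludes by asserting that $2^{\nu-1}/\gcd(2^{\nu-1},\alpha_pl)$ ``is even as $\alpha_p<2^{\nu-1}$,'' exactly the collapse of the lcm that you isolate as the remaining obstacle.

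That remaining step, however, cannot be carried out, so your proof has a genuine gap --- and it is the same gap that sits, silently, in the paper's own proof. The definition \eqref{alpha} only forces $0\leq\alpha_p<2^{\nu-1}$; it does not exclude $\alpha_p=0$, and no bound extracted from \eqref{alpha} ever will, because $\alpha_p=0$ with $p\equiv 3\pmod 4$ genuinely occurs: it means $p\equiv -1\pmod{2^{\nu+1}}$. In that situation, with $l$ odd, one has $p^l\equiv -1\pmod{2^{\nu+1}}$, your lcm equals $\operatorname{lcm}(2,1)=2$, while the displayed formula gives $2^{\nu-1}/\gcd(2^{\nu-1},0)=1$; so the lemma as stated is false there. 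Concretely, for $p=7$, $\nu=2$, $l=1$ we have $\alpha_7=0$, $\operatorname{ord}_8(7)=2$, but the formula yields $1$. The exceptional case is precisely $2^{\nu+1}\mid(p^l+1)$, which is excluded by hypothesis in Corollary~\ref{corE1} and in all later applications, so the downstream results of the paper are unaffected. The correct way to close your argument is therefore not to ``bound $\alpha_p$'' but to amend the statement: either add the hypothesis $2^{\nu+1}\nmid(p^l+1)$ (equivalently, exclude $\alpha_p=0$ when $p\equiv 3\pmod 4$ and $l$ is odd), or replace the third case of the formula by $\operatorname{lcm}\bigl(2,\,2^{\nu-1}/\gcd(2^{\nu-1},\alpha_pl)\bigr)$ when $p\equiv 3\pmod 4$ and $l$ is odd. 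With either amendment, your argument is complete and correct.
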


 \begin{proof}
     It is obvious  for the case $\nu = 1$. Let now $\nu \geq 2$. By Theorem~\ref{ordermod2}, there exists $\alpha_p \in \{0,1,2,\dots, 2^{\nu-1}-1\}$ such that $p \equiv \pm 5^{\alpha_p} \mod{2^{\nu+1}}$.
     If $p \equiv 1\mod{4}$ or $p \equiv 3\mod{4}$ and $l$ is even,  then
     $\operatorname{ord}_{2^{\nu+1}}(p^l) = \operatorname{ord}_{2^{\nu+1}}(5^{\alpha_pl})$.

     When $p \equiv 3 \mod{4}$ and $l$ is odd, we have $p^l \equiv - 5^{\alpha_pl} \mod{2^{\nu+1}}$ and 
     
     \[ \operatorname{ord}_{2^{\nu+1}}(5^{\alpha_pl})= \frac{2^{\nu-1}}{\operatorname{gcd}(2^{\nu-1},\alpha_pl)}\]
     
     \noindent  which  is even as $\alpha_p < 2^{\nu-1}$.
     Thus $\operatorname{ord}_{2^{\nu+1}}(p^l) = \operatorname{lcm}(\operatorname{ord}_{2^{\nu+1}}(-1), \operatorname{ord}_{2^{\nu+1}}(5^{\alpha_pl})) = \operatorname{ord}_{2^{\nu+1}}(5^{\alpha_pl}).$
     In all cases, we therefore  have
     $$\operatorname{ord}_{2^{\nu+1}}(p^l) = \operatorname{ord}_{2^{\nu+1}}(5^{\alpha_pl})= \frac{2^{\nu-1}}{\operatorname{gcd}(2^{\nu-1},\alpha_pl)}$$
     as desired.
 \end{proof}
 
 \begin{corollary}\label{corE1}
     Let $p$ be an odd prime   and let $\nu$ and $l$ be positive integers. {If $2^{\nu+1} \nmid (p^l+1)$ then }
     $$t(2^{\nu+1},p^l) =
     {\operatorname{gcd}(2^{\nu-1},\alpha_pl)}
     $$
     In particular,  if $p=5$,  then
     $$ t(2^{\nu+1},5^l) = {\operatorname{gcd}(2^{\nu-1},l)}.$$
     
 \end{corollary}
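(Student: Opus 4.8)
The plan is to specialize the general quantity $t(n'2^{\nu+1},p^l)$ from \eqref{eq:t} to the case $n'=1$ and then simply invert the order formula of Lemma~\ref{formula1}. Taking $n'=1$, the only divisor is $d=1$, so
\[
t(2^{\nu+1},p^l)=\frac{1}{2}\cdot\frac{\phi(2^{\nu+1})}{\operatorname{ord}_{2^{\nu+1}}(p^l)}.
\]
Since $\phi(2^{\nu+1})=2^{\nu}$ for every $\nu\geq 1$, this reduces to
\[
t(2^{\nu+1},p^l)=\frac{2^{\nu-1}}{\operatorname{ord}_{2^{\nu+1}}(p^l)},
\]
so the entire statement is now a matter of substituting the value of $\operatorname{ord}_{2^{\nu+1}}(p^l)$ and a short case analysis on $\nu$.

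For $\nu\geq 2$, Lemma~\ref{formula1} gives $\operatorname{ord}_{2^{\nu+1}}(p^l)=2^{\nu-1}/\gcd(2^{\nu-1},\alpha_p l)$; substituting this into the last display cancels the factor $2^{\nu-1}$ and leaves exactly $t(2^{\nu+1},p^l)=\gcd(2^{\nu-1},\alpha_p l)$, as claimed. The hypothesis $2^{\nu+1}\nmid(p^l+1)$ is what keeps us in the regime where this order formula is the operative one: it excludes the degenerate situation $p^l\equiv-1\bmod 2^{\nu+1}$, in which the order equals $2$ and, consistently, $NE(p^l,n)=0$ so that $t$ counts nothing. For the boundary case $\nu=1$ I would argue directly: here $2^{\nu-1}=1$ and, by the convention fixed just after \eqref{alpha}, $\alpha_p=0$, so the target value is $\gcd(1,0)=1$. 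The hypothesis $2^{\nu+1}\nmid(p^l+1)$ now reads $4\nmid(p^l+1)$, i.e.\ $p^l\equiv 1\bmod 4$, whence Lemma~\ref{formula1} yields $\operatorname{ord}_{4}(p^l)=1$ and therefore $t(4,p^l)=2^{0}/1=1$, matching $\gcd(2^{\nu-1},\alpha_p l)$. This establishes the general formula.

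Finally, for the specialization $p=5$ it suffices to identify $\alpha_5$. Since $5\equiv 1\bmod 4$, the integer $\alpha_5$ is characterized by $5\equiv 5^{\alpha_5}\bmod 2^{\nu+1}$, so $\alpha_5=1$ when $\nu\geq 2$ and $\alpha_5=0$ when $\nu=1$; in either case $\gcd(2^{\nu-1},\alpha_5 l)=\gcd(2^{\nu-1},l)$, giving $t(2^{\nu+1},5^l)=\gcd(2^{\nu-1},l)$. I do not anticipate any genuine obstacle here, as the argument is a direct substitution into \eqref{eq:t}; the only points requiring care are the separate treatment of $\nu=1$ and the observation that the hypothesis $2^{\nu+1}\nmid(p^l+1)$ is precisely what makes the order formula of Lemma~\ref{formula1} applicable and forces $t$ to be a nonnegative integer.
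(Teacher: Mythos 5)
Your proof is correct and takes essentially the same route as the paper's: specialize \eqref{eq:t} to $n'=1$, use $\phi(2^{\nu+1})=2^{\nu}$, invert the order formula of Lemma~\ref{formula1} to cancel the factor $2^{\nu-1}$, and identify $\alpha_5=1$ for the case $p=5$. Your explicit handling of the boundary case $\nu=1$ (where $\alpha_p=0$ and the hypothesis $4\nmid(p^l+1)$ forces $p^l\equiv 1\bmod 4$, giving $t=1=\gcd(1,0)$) is a detail the paper's one-line proof glosses over, but the argument is the same.
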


 \begin{proof}
     Since $t(2^{\nu+1},p^l) = \frac{\phi(2^{\nu+1})}{2\operatorname{ord}_{2^{\nu+1}}{(p^l)}}$, it follows that   $t(2^{\nu+1},p^l) = {\operatorname{gcd}(2^{\nu-1},\alpha_pl)}$ by  Lemma \ref{formula1}.
     If $p=5$,  then $\alpha_5=1$.  Hence,  $t(2^{\nu+1},5^l) = {\operatorname{gcd}(2^{\nu-1},l)}$ as desired.
 \end{proof}

 Combining Theorem~\ref{thmNumE1} and Corollary~\ref{corE1},  the  numbers of Euclidean self-dual  negacyclic codes of lengths $2^\nu$ and  $2^\nu p^r$ over $\mathbb{F}_{p^l}$ can be given as follows.
 \begin{corollary}
     Let $\nu$ and $l$ be positive integers  and let $p$ be  an odd prime.  Let $r\geq 0$ be an   integer. { If $2^{\nu+1} \nmid (p^l+1)$,  then }
     \begin{align*}
     NE(p^l,2^\nu p^r) =
     (p^r+1) ^{\operatorname{gcd}(2^{\nu-1},\alpha_pl)} 
     \end{align*}
     Otherwise, $  NE(p^l,2^\nu p^r) =0$.
 \end{corollary}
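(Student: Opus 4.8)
The plan is to specialize the general enumeration formula of Theorem~\ref{thmNumE1} to the case in which the odd part coprime to $p$ is trivial, and then evaluate the resulting exponent using Corollary~\ref{corE1}. Since the prescribed length is $n=2^\nu p^r$, comparing it with the decomposition $n=2^\nu p^r n^\prime$ used in Theorem~\ref{thmNumE1} forces $n^\prime=1$. First I would note that the only divisor of $n^\prime=1$ is $d=1$, so the sum defining the exponent collapses to a single term; that is, $t(n^\prime 2^{\nu+1},p^l)=t(2^{\nu+1},p^l)=\dfrac{\phi(2^{\nu+1})}{2\operatorname{ord}_{2^{\nu+1}}(p^l)}$, which is precisely the quantity computed in Corollary~\ref{corE1}.

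Next I would split into the two cases dictated by Theorem~\ref{thmNumE1}. If $2^{\nu+1}\mid (p^l+1)$, then the theorem immediately gives $NE(p^l,2^\nu p^r)=0$, which matches the ``otherwise'' clause of the statement. If instead $2^{\nu+1}\nmid (p^l+1)$, then Theorem~\ref{thmNumE1} yields $NE(p^l,2^\nu p^r)=(p^r+1)^{t(2^{\nu+1},p^l)}$, and I would invoke Corollary~\ref{corE1} to replace the exponent by $\operatorname{gcd}(2^{\nu-1},\alpha_p l)$, obtaining exactly the claimed value $(p^r+1)^{\operatorname{gcd}(2^{\nu-1},\alpha_p l)}$.

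This argument is essentially a direct substitution, so I do not anticipate a genuine obstacle; both the $\nu=1$ boundary case (where $\alpha_p=0$) and the case $\nu\geq 2$ are already absorbed into Corollary~\ref{corE1} via Lemma~\ref{formula1}. The only point requiring a little care is verifying that the hypothesis $2^{\nu+1}\nmid (p^l+1)$ of Corollary~\ref{corE1} is identical to the condition that selects the nonzero branch of Theorem~\ref{thmNumE1}, so that the two results glue together without a gap. Once this compatibility of hypotheses is observed, both cases of the corollary follow at once.
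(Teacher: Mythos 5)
Your proposal is correct and matches the paper's own derivation exactly: the paper obtains this corollary by combining Theorem~\ref{thmNumE1} with Corollary~\ref{corE1}, which is precisely your specialization to $n^\prime=1$ (so the exponent sum collapses to $t(2^{\nu+1},p^l)$) followed by the replacement $t(2^{\nu+1},p^l)=\operatorname{gcd}(2^{\nu-1},\alpha_p l)$ under the shared hypothesis $2^{\nu+1}\nmid(p^l+1)$. Your observation that the nonzero branch of the theorem and the hypothesis of Corollary~\ref{corE1} are the same condition is the only glue needed, and it holds.
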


 \subsection{Hermitian Self-Dual Negacyclic Codes}
 
 In this subsection, we focus on the characterization and enumeration of Hermitian self-dual negacyclic codes over $\mathbb{F}_{p^{2l}}$. General properties of Hermitian self-dual negacyclic codes of any lengths are given in 3.3.1.  Hermitian self-dual negacyclic codes  of some specific lengths are discussed in  3.3.2.

 \subsubsection{Hermitian Self-Dual Negacyclic Codes}

 The result in  \cite[Lemma 3.5]{JLS2013} over a finite field of characteristic $2$ can be straightforwardly generalized to the case of  polynomials over a finite field of odd characteristic as follows.
 
 \begin{lemma} \label{LemQx2} Let $\nu\geq 0$  be an integer and let $d$ be an integer given in \eqref{eq-firstfact}. Then the  following statements holds.
     \begin{enumerate}[$1)$]
         \item $d\in OG_{(p^l,1)}(\nu+1)$ if and only if  every irreducible factor  of $Q_{d2^{\nu+1}}(x)$ is self-conjugate-reciprocal.
         \item $d\notin OG_{(p^l,1)}(\nu+1)$ if and only if the irreducible factors of $Q_{d2^{\nu+1}}(x)$ form conjugate-reciprocal polynomial pairs.
     \end{enumerate}
     
 \end{lemma}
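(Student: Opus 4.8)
The plan is to translate both the oddly-good condition and the self-conjugate-reciprocal condition into a single congruence, exactly mirroring the Euclidean argument behind Lemma~\ref{LemQx}. First I would fix a primitive $d2^{\nu+1}$th root of unity $\omega$ and record that $\gcd(p,d2^{\nu+1})=1$ (since $p$ is odd and $p\nmid n^\prime$ while $d\mid n^\prime$), so that $p^l$ is a unit modulo $d2^{\nu+1}$. The roots of $Q_{d2^{\nu+1}}(x)$ are the powers $\omega^j$ with $\gcd(j,d2^{\nu+1})=1$, and over $\mathbb{F}_{p^{2l}}$ each monic irreducible factor corresponds to a single Frobenius orbit $\{\omega^{jp^{2lt}}\mid t\geq 0\}$.

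The key computation is to describe the roots of the conjugate-reciprocal polynomial. Writing $f(x)=\sum_i f_ix^i$ and setting $\bar f(x):=\sum_i f_i^{p^l}x^i$, the definition $f^\dagger(x)=f_0^{-p^l}x^{\deg f}\sum_i f_i^{p^l}(1/x)^i$ shows that $f^\dagger=(\bar f)^*$, so $\beta$ is a root of $f^\dagger$ precisely when $\beta^{-1}$ is a root of $\bar f$; since $\bar f$ has roots $\alpha^{p^l}$ for each root $\alpha$ of $f$, the roots of $f^\dagger$ are exactly $\{\alpha^{-p^l}\mid f(\alpha)=0\}$. Consequently an irreducible factor $f$ with root $\omega^j$ is self-conjugate-reciprocal if and only if $\omega^{-jp^l}$ lies in the same Frobenius orbit, i.e. $-jp^l\equiv jp^{2lt}\pmod{d2^{\nu+1}}$ for some $t\geq 0$. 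Cancelling the unit $j$ and then the unit $p^l$ reduces this to $p^{l(2t-1)}\equiv -1 \pmod{d2^{\nu+1}}$, which is the statement that $p^{lk}\equiv -1\pmod{d2^{\nu+1}}$ for some odd integer $k$.

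This last congruence is manifestly independent of $j$, so the dichotomy is uniform across all irreducible factors: either every factor is self-conjugate-reciprocal, or none is. Moreover $p^{lk}\equiv -1\pmod{d2^{\nu+1}}$ for some odd $k\geq 1$ is by definition the statement $2^{\nu+1}d\mid(p^{lk}+1)$, i.e. $2^{\nu+1}d\in OG_{(p^l,1)}$, which is exactly $d\in OG_{(p^l,1)}(\nu+1)$. This proves the forward direction of $1)$ and, by the contrapositive of the dichotomy, the forward direction of $2)$. To complete $2)$ I would verify that when no factor is self-conjugate-reciprocal the pairing is genuine: for an irreducible factor $f$ with root $\omega^j$, its conjugate-reciprocal $f^\dagger$ has root $\omega^{-jp^l}$, again a primitive $d2^{\nu+1}$th root of unity (as $\gcd(jp^l,d2^{\nu+1})=1$), so $f^\dagger\mid Q_{d2^{\nu+1}}(x)$; and $f^\dagger$ is irreducible of the same degree because $f\mapsto f^\dagger$ is the composition of the coefficient Frobenius with the reciprocal map, both of which preserve irreducibility. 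Since $f$ is not self-conjugate-reciprocal, $f^\dagger\neq f$, so the factors split into conjugate-reciprocal pairs; the reverse implications in both parts follow from the same dichotomy.

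The main obstacle is the bookkeeping in the key computation: correctly tracking the exponent $-p^l$ (rather than $-1$ or $p^l$) in the passage from roots of $f$ to roots of $f^\dagger$, and checking that the resulting exponent $2t-1$ sweeps out all odd integers so that the orbit condition matches the oddly-good condition exactly. In particular the boundary case $t=0$ gives $p^{-l}\equiv -1$, which is equivalent to $p^l\equiv -1$ and hence to the admissible value $k=1$; once this is pinned down, the remaining arguments are formal.
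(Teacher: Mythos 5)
Your proof is correct and takes essentially the same route that the paper relies on: the paper offers no written proof of this lemma, invoking it as a ``straightforward generalization'' of \cite[Lemma 3.5]{JLS2013}, and that generalization is exactly your cyclotomic-coset argument --- irreducible factors of $Q_{d2^{\nu+1}}(x)$ over $\mathbb{F}_{p^{2l}}$ are Frobenius orbits, $f^\dagger$ has roots $\alpha^{-p^{l}}$, and closure of an orbit under multiplication by $-p^{l}$ is equivalent (uniformly in $j$) to $p^{lk}\equiv -1 \pmod{d2^{\nu+1}}$ for some odd $k$, i.e.\ to $d\in OG_{(p^{l},1)}(\nu+1)$. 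Your treatment of the boundary case $t=0$ and of the genuine pairing in part $2)$ is also sound, so your write-up in effect supplies the details the paper leaves to the citation.
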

 
 Applying Lemma \ref{LemQx2} and Equation  \eqref{eq-firstfact}  (cf. \cite[Equation (29)]{SJLU2015}), it can be concluded that
 
 \begin{align}\label{all2}
 x^n+1&=\left( \prod _{\substack{d|n^\prime   \\ d\in OG_{(p^l,1)}(\nu+1)}} Q_{d2^{\nu+1}}(x)  \prod _{\substack{d|n^\prime   \\ d\notin OG_{(p^l,1)}(\nu+1)}}  Q_{d2^{\nu+1}}(x)  \right)^{p^r}  \notag\\
 &=\left( \prod_{\substack{d|n^\prime   \\ d\in OG_{(p^l,1)}(\nu+1)} }\prod_{i=1}^{\mu(d2^{\nu+1} ,p^{2l})}f_{di}(x)\prod_{\substack{d|n^\prime   \\ d\notin OG_{(p^l,1)}(\nu+1)}  }\prod_{j=1}^{\eta(d2^{\nu+1},p^{2l})}g_{dj}(x)g^\dagger _{dj}(x)\right)^{p^r},
 \end{align}
 where $f_{di}(x)$ is  a  self-conjugate-reciprocal irreducible  polynomial for all $d$ and $i$,    $g_{dj}(x)$ and  $g_{dj}^\dagger (x)$  are a conjugate-reciprocal irreducible polynomial pair   for all  $d$ and $j$, $
 \mu(d2^{\nu+1},p^{2l})=
 \frac{\phi(d2^{\nu+1})}{{\rm  ord}_{d2^{\nu+1}}(p^{2l})}$ and $
 \eta(d2^{\nu+1},p^{2l})=
 \frac{\phi(d2^{\nu+1})}{2{\rm  ord}_{d2^{\nu+1}}(p^{2l})}.$

 The existence of  a Hermitian self-dual  negacyclic code of length $n$ over $\mathbb{F}_{p^{2l}}$ can be determined using \eqref{all2} as follows.

 \begin{proposition} \label{prop:charH}
     Let $p$ be  an odd prime and let $n=2^\nu p^rn^\prime$, where $\nu\geq 0$ and $r\geq 0$ are integers and $ n^\prime$ is an odd positive integer such that $p\nmid n^\prime$. Let $l$ be a positive integer. There there exists  a  Hermitian self-dual  negacyclic code of length $n$ over $\mathbb{F}_{p^{2l}} $ if and only if  $\nu>0$ and
     $d\notin OG_{(p^l,1)}(\nu+1)$   for all $d|n^\prime$.
 \end{proposition}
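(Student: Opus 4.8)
The plan is to mirror the proof of Proposition~\ref{prop:charE} almost verbatim, replacing the Euclidean reciprocal operation $*$ by the conjugate-reciprocal operation $\dagger$, the factorization \eqref{all} by its Hermitian analogue \eqref{all2}, and the Euclidean self-duality criterion $g(x)=h^*(x)$ by the Hermitian one $g(x)=h^\dagger(x)$, where $h(x)=\frac{x^n+1}{g(x)}$. The whole argument rests on two facts already available: the factorization \eqref{all2}, in which every irreducible factor $f_{di}(x)$ coming from a $d\in OG_{(p^l,1)}(\nu+1)$ is self-conjugate-reciprocal while the remaining factors group into conjugate-reciprocal pairs $g_{dj}(x),g_{dj}^\dagger(x)$; and the observation that each of these irreducible factors occurs in $x^n+1$ with multiplicity exactly $p^r$, which is \emph{odd} because $p$ is an odd prime.

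For the forward direction I would first note that a Hermitian self-dual code has dimension $n/2$, so $n$ must be even; since $p$ and $n'$ are odd, this forces $\nu>0$. Then, arguing by contradiction, suppose some $d\mid n'$ lies in $OG_{(p^l,1)}(\nu+1)$. By Lemma~\ref{LemQx2} the cyclotomic factor $Q_{d2^{\nu+1}}(x)$ then has a self-conjugate-reciprocal irreducible factor $f_{d1}(x)$, i.e. $f_{d1}^\dagger=f_{d1}$. Writing $g(x)$ for the generator polynomial of the code and using $g=h^\dagger$, and hence $h=g^\dagger$ since $\dagger$ is an involution on monic polynomials, the multiplicity of $f_{d1}$ in $h=g^\dagger$ equals the multiplicity of $f_{d1}^\dagger=f_{d1}$ in $g$. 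Thus $f_{d1}$ occurs in $x^n+1=g\,h$ with even multiplicity $2m$, contradicting its true multiplicity $p^r$, which is odd. This forces $d\notin OG_{(p^l,1)}(\nu+1)$ for every $d\mid n'$.

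For the converse I would assume $\nu>0$ and $d\notin OG_{(p^l,1)}(\nu+1)$ for all $d\mid n'$, so that \eqref{all2} collapses to
\[
x^n+1=\prod_{d\mid n'}\prod_{j=1}^{\eta(d2^{\nu+1},p^{2l})}\bigl(g_{dj}(x)g_{dj}^\dagger(x)\bigr)^{p^r},
\]
and then exhibit the explicit candidate generator polynomial
\[
g(x)=\prod_{d\mid n'}\prod_{j=1}^{\eta(d2^{\nu+1},p^{2l})}g_{dj}(x)^{p^r}.
\]
A direct computation gives $h(x)=\frac{x^n+1}{g(x)}=\prod_{d,j}g_{dj}^\dagger(x)^{p^r}$, whence $h^\dagger(x)=\prod_{d,j}g_{dj}(x)^{p^r}=g(x)$, so the associated negacyclic code is Hermitian self-dual, establishing existence.

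The only point needing a little care, and the part I expect to be the main obstacle, is verifying that $\dagger$ behaves as an order-reversing involution compatible with the factorizations above: that $(f^\dagger)^\dagger=f$ for monic $f$, that the conjugate-reciprocal of a product is the product of the conjugate-reciprocals, and that the multiplicity of a polynomial $q$ in $g^\dagger$ equals the multiplicity of $q^\dagger$ in $g$. These are the Hermitian analogues of the standard properties of $*$ used tacitly in Proposition~\ref{prop:charE}; once they are in place, the parity obstruction in the forward direction and the explicit construction in the converse go through exactly as in the Euclidean case.
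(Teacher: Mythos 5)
Your proposal is correct and follows essentially the same route as the paper's own proof: the forward direction via the parity obstruction (a self-conjugate-reciprocal factor $f_{d1}$ would have even multiplicity $2m$ in $x^n+1=g\,h$, contradicting the odd multiplicity $p^r$), and the converse via the explicit generator $g(x)=\prod_{d,j}g_{dj}(x)^{p^r}$ built from the factorization \eqref{all2}. The involution properties of $\dagger$ that you flag as needing care are indeed used tacitly in the paper, exactly as they are for $*$ in Proposition~\ref{prop:charE}.
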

 
 \begin{proof}
     Assume  that there exists  a  Hermitian self-dual  negacyclic code  $C$ of length $n$ over $\mathbb{F}_{p^{2l}} $. Then $n=2^\nu p^rn^\prime$  must be even which implies that $\nu>0$.  Let  $g(x)$ be the generator polynomial for $C$  and let $h(x):=\frac{x^n+1}{g(x)}$.      Since  $C$ is Hermitian self-dual, we have
     $g(x)=h^\dagger(x)$.
     Suppose that there exists  a positive integer $d$ such that $d|n^\prime$ and $d\in OG_{(p^l,1)}(\nu+1)$.        Then $f_{d1}(x)$ has the same multiplicity $m $ in $g(x)$ and in $g^\dagger(x)=h(x)$. It follows that  the  multiplicity of $f_{d1}(x)$ in $x^n+1$ is $2m=p^r$, a contradiction.

     Conversely, assume that  $\nu>0$ and
     $d\notin OG_{(p^l,1)}(\nu+1)$   for all $d|n^\prime$.  From \eqref{all2},     we have
     \[ x^n+1= \prod_{\substack{d|n^\prime  }  }\prod_{j=1}^{\eta(d2^{\nu+1},p^{2l})}\left( g_{dj}(x)g^\dagger_{dj}(x)\right)^{p^r}.\]
     It is not difficult to see that the negacyclic code of length $n$ generated by \[g(x)=  \prod_{\substack{d|n^\prime  }  }\prod_{j=1}^{\eta(d2^{\nu+1},p^{2l})}\left( g_{dj}(x)\right)^{p^r}\] is Hermitian self-dual.
 \end{proof}

 For $\nu>0$ and an odd positive  integer $d$,   we have     $d\notin OG_{(p^l,1)}(\nu+1)$  if and only if     $d\notin G_{(p^l,1)}(\nu+1)$   by Corollary~\ref{odd-good2}.  Then the conditions in Proposition~\ref{prop:charH} can be simplified using the above discussion and Corollary~\ref{charE} as follows.
 \begin{corollary} Let $p$ be  an odd prime and let $n=2^\nu p^rn^\prime$, where $\nu>0$ and $r\geq 0$ are integers and $ n^\prime$ is an odd positive integer such that $p\nmid n^\prime$.  Let $l$ be a positive integer. There there exists  a  Hermitian self-dual  negacyclic code of length $n$ over $\mathbb{F}_{p^{2l}} $ if and only if $2^{\nu+1}\nmid (p^l+1) $.
 \end{corollary}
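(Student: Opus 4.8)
The plan is to reduce the Hermitian existence problem to the already-settled Euclidean one. The starting point is Proposition~\ref{prop:charH}, which asserts that a Hermitian self-dual negacyclic code of length $n$ over $\mathbb{F}_{p^{2l}}$ exists if and only if $\nu>0$ and $d\notin OG_{(p^l,1)}(\nu+1)$ for every divisor $d$ of $n^\prime$. Since $\nu>0$ is part of the hypothesis, the entire content lies in the oddly-good condition on the divisors of $n^\prime$, and the goal is to convert this into the single numerical statement $2^{\nu+1}\nmid(p^l+1)$.

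First I would replace the oddly-good condition by an ordinary good condition. Because $\nu>0$ forces $\nu+1\geq 2$ and each divisor $d$ of $n^\prime$ is odd, part 2) of Corollary~\ref{odd-good2} gives $d\in OG_{(p^l,1)}(\nu+1)$ if and only if $d\in G_{(p^l,1)}(\nu+1)$. Hence the Hermitian existence criterion becomes $d\notin G_{(p^l,1)}(\nu+1)$ for all $d\mid n^\prime$, which is exactly the condition appearing in the Euclidean Proposition~\ref{prop:charE}. This bridge is what makes the Hermitian case a transcription of the Euclidean one.

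Next I would collapse the quantifier over divisors into the desired numerical condition by invoking Corollary~\ref{corBad} with $a=p^l$, $b=1$, and $\beta=\nu+1\geq 2$. For the backward direction, if $2^{\nu+1}\nmid(p^l+1)$ (statement 2 there), then statement 3 gives $m\notin G_{(p^l,1)}(\nu+1)$ for every odd natural number $m$, and in particular for every divisor $d$ of $n^\prime$; thus a code exists. For the forward direction, existence yields $d\notin G_{(p^l,1)}(\nu+1)$ for all $d\mid n^\prime$, and specializing to $d=1$ gives $1\notin G_{(p^l,1)}(\nu+1)$ (statement 1), whence $2^{\nu+1}\nmid(p^l+1)$ by the equivalence of statements 1) and 2) in Corollary~\ref{corBad}. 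Combining both directions proves the claimed characterization.

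I do not anticipate any substantial obstacle: the argument mirrors the proof of Corollary~\ref{charE} almost verbatim, the only additional ingredient being the $OG$-to-$G$ passage from Corollary~\ref{odd-good2}. The one point demanding care is that both this passage and the application of Corollary~\ref{corBad} require $\nu+1\geq 2$; this is guaranteed precisely by the hypothesis $\nu>0$, so the reductions are always legitimate in the setting of the statement.
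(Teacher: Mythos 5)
Your proposal is correct and follows essentially the same route as the paper: the paper also derives this corollary from Proposition~\ref{prop:charH} by using part 2) of Corollary~\ref{odd-good2} (valid since $\nu+1\geq 2$ and divisors of $n'$ are odd) to replace the $OG$-condition with the $G$-condition, and then collapses the quantifier over divisors exactly as in the Euclidean case (Corollary~\ref{charE}, whose proof rests on Corollary~\ref{corBad}). The only cosmetic difference is that you re-run the Corollary~\ref{corBad} argument inline, whereas the paper simply cites Corollary~\ref{charE}.
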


 A general formula for the number of Hermitian self-dual negacyclic codes of length $n$ over $\mathbb{F}_{p^{2l}}  $ can be summarized as follows.
 \begin{theorem}\label{thmNumH1}
     Let $p$ be  an odd prime and let $n=2^\nu p^rn^\prime$, where $\nu>0$ and $r\geq 0$ are integers and $ n^\prime$ is an odd positive integer such that $p\nmid n^\prime$. Let $l$ be a positive integer.  The  number of Hermitian self-dual negacyclic codes of length $n$ over $\mathbb{F}_{p^{2l}} $ is
     \begin{align}
     NH(p^{2l},n):=\begin{cases}  (p^r+1)^{\frac{1}{2}\sum_{d| n^\prime}\frac{\phi(d2^{\nu+1})}{ {\rm ord}_{d2^{\nu+1}}(p^{2l})}} & \text{ if }  2^{\nu+1}\nmid (p^l+1),\\
     0 & \text{ otherwise}.
     \end{cases}\label{HSDcc}
     \end{align}
 \end{theorem}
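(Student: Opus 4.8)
The plan is to mirror the proof of Theorem~\ref{thmNumE1} almost verbatim, replacing each Euclidean ingredient by its Hermitian counterpart: the factorization \eqref{all} by \eqref{all2}, the reciprocal operation $*$ by the conjugate-reciprocal operation $\dagger$, the self-dual criterion $g=h^{*}$ by $g=h^{\dagger}$, and the pair-count $\sigma$ by $\eta$. First, if $2^{\nu+1}\mid(p^{l}+1)$, then by the corollary following Proposition~\ref{prop:charH} no Hermitian self-dual negacyclic code of length $n$ over $\mathbb{F}_{p^{2l}}$ exists, so $NH(p^{2l},n)=0$ and the second branch of \eqref{HSDcc} holds.

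For the main case I would assume $2^{\nu+1}\nmid(p^{l}+1)$. By Corollary~\ref{corBad} this gives $d\notin G_{(p^{l},1)}(\nu+1)$ for every odd $d$, and since $OG_{(p^{l},1)}(\nu+1)=G_{(p^{l},1)}(\nu+1)$ on odd integers for $\nu+1\ge 2$ (part~2 of Corollary~\ref{odd-good2}), we obtain $d\notin OG_{(p^{l},1)}(\nu+1)$ for all $d\mid n'$. Hence the first product in \eqref{all2} is empty and the factorization collapses to
\[
x^{n}+1=\prod_{d\mid n'}\prod_{j=1}^{\eta(d2^{\nu+1},p^{2l})}\bigl(g_{dj}(x)\,g_{dj}^{\dagger}(x)\bigr)^{p^{r}},
\]
where, by Lemma~\ref{LemQx2}, each $g_{dj}$ and $g_{dj}^{\dagger}$ form a genuine conjugate-reciprocal pair (no self-conjugate-reciprocal factors occur).

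Next I would parametrize an arbitrary monic divisor of $x^{n}+1$ as
\[
g(x)=\prod_{d\mid n'}\prod_{j=1}^{\eta(d2^{\nu+1},p^{2l})} g_{dj}(x)^{a_{dj}}\,g_{dj}^{\dagger}(x)^{b_{dj}},\qquad 0\le a_{dj},b_{dj}\le p^{r},
\]
compute $h=(x^{n}+1)/g$, and then $h^{\dagger}$. Using that $\dagger$ is an involution on monic polynomials coprime to $x$, so that $(g_{dj})^{\dagger}=g_{dj}^{\dagger}$ and $(g_{dj}^{\dagger})^{\dagger}=g_{dj}$, the code generated by $g$ is Hermitian self-dual exactly when $g=h^{\dagger}$; matching the exponent of $g_{dj}$ and the exponent of $g_{dj}^{\dagger}$ each yields the single relation $a_{dj}+b_{dj}=p^{r}$. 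Thus each index $(d,j)$ independently contributes $p^{r}+1$ admissible pairs $(a_{dj},b_{dj})$, and counting the indices gives $\sum_{d\mid n'}\eta(d2^{\nu+1},p^{2l})=\tfrac12\sum_{d\mid n'}\frac{\phi(d2^{\nu+1})}{{\rm ord}_{d2^{\nu+1}}(p^{2l})}$, so $NH(p^{2l},n)=(p^{r}+1)^{t}$ with $t$ this exponent, establishing the first branch of \eqref{HSDcc}.

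The one step needing genuine care, and the place where the Hermitian argument diverges from the Euclidean one, is the behaviour of $\dagger$ under composition. I would verify explicitly that applying the conjugate-reciprocal twice returns the original monic polynomial over $\mathbb{F}_{p^{2l}}$: the double coefficient-reversal is the identity, while the two Frobenius twists compose to $x\mapsto x^{p^{2l}}$, which fixes $\mathbb{F}_{p^{2l}}$. This involution property is precisely what forces the symmetric condition $a_{dj}+b_{dj}=p^{r}$ and guarantees that $g_{dj}$ and $g_{dj}^{\dagger}$ are interchanged rather than fixed, so that the counting is identical in form to the Euclidean case but with $p^{2l}$ replacing $p^{l}$ in every order and degree.
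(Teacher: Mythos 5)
Your proposal is correct and takes essentially the same route as the paper, whose entire proof of Theorem~\ref{thmNumH1} is the one-line instruction to replace $G_{(p^l,1)}(\nu+1)$ by $OG_{(p^l,1)}(\nu+1)$ in the proof of Theorem~\ref{thmNumE1}; your argument is precisely that substitution carried out in full. The additional details you supply (Corollary~\ref{corBad} combined with Corollary~\ref{odd-good2} to show the self-conjugate-reciprocal part of \eqref{all2} is empty, and the verification that $\dagger$ is an involution over $\mathbb{F}_{p^{2l}}$) are correct and merely make explicit what the paper leaves implicit.
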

 \begin{proof}
     By replacing  $G_{(p^l,1)}(\nu+1)$ with $ OG_{(p^l,1)}(\nu+1)$  in  the proof of Theorem~\ref{thmNumE1}, the result can be deduced.
 \end{proof}
 
 Next, we focus on  the determination of
 \begin{align}\label{eq:tau} \tau (n'2^{\nu+1},p^l) :=  {\frac{1}{2}\sum_{d| n^\prime}\frac{\phi(d2^{\nu+1})}{ {\rm ord}_{d2^{\nu+1}}(p^{2l})}},
 \end{align}
 for some specific $n^\prime$. Note that  $\tau (n'2^{\nu+1},p^l)$
 is  independent of the  factor $p^r$ of $n$.

 \subsubsection{Hermitian Self-Dual Negacyclic Codes of Lengths $2^\nu$ and  $2^\nu p^r$}

 From  Proposition~\ref{prop:charH}, we assume that $\nu>0$ is an integer and conclude the following results.
 \begin{proposition}\label{corH1}
     Let $p$ be an odd prime   and let $\nu$ and $l$ be positive integers.
     If $2^{\nu+1}\nmid (p^l+1)$ then
     $$\tau(2^{\nu+1},p^l)=
     {\operatorname{gcd}(2^{\nu-1},2\alpha_pl)}
     $$
     In particular,  if $p=5$,  then
     $$ \tau (2^{\nu+1},5^l) = {\operatorname{gcd}(2^{\nu-1},2l)}.$$
     
 \end{proposition}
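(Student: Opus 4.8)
The plan is to treat Proposition~\ref{corH1} as the Hermitian analogue of Corollary~\ref{corE1}: the only difference from the Euclidean count is that the relevant multiplicative order is that of $p^{2l}$ rather than $p^l$, so everything should reduce to evaluating $\operatorname{ord}_{2^{\nu+1}}(p^{2l})$. First I would specialize the definition \eqref{eq:tau} to length $2^\nu$, where the odd part is $n'=1$ and the sum over divisors collapses to the single term $d=1$. Using $\phi(2^{\nu+1})=2^\nu$, this gives $\tau(2^{\nu+1},p^l)=\frac{\phi(2^{\nu+1})}{2\operatorname{ord}_{2^{\nu+1}}(p^{2l})}=\frac{2^{\nu-1}}{\operatorname{ord}_{2^{\nu+1}}(p^{2l})}$, so the entire statement is reduced to an order computation.

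The key step is then to invoke Lemma~\ref{formula1} with the exponent $l$ replaced throughout by $2l$. This substitution is legitimate because the integer $\alpha_p$ defined in \eqref{alpha} depends only on $p$ and $\nu$ and not on the exponent, so the lemma computes $\operatorname{ord}_{2^{\nu+1}}(p^{m})$ for any positive exponent $m$ in terms of $\operatorname{gcd}(2^{\nu-1},\alpha_p m)$. Taking $m=2l$, for $\nu\geq 2$ I obtain $\operatorname{ord}_{2^{\nu+1}}(p^{2l})=\frac{2^{\nu-1}}{\operatorname{gcd}(2^{\nu-1},2\alpha_p l)}$, and substituting into the expression for $\tau$ cancels the factor $2^{\nu-1}$ and yields $\tau(2^{\nu+1},p^l)=\operatorname{gcd}(2^{\nu-1},2\alpha_p l)$ directly.

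The one point that needs separate care, and the main (if modest) obstacle, is the boundary case $\nu=1$. Here $\alpha_p=0$ and the modulus is $2^{\nu+1}=4$; since $p$ is odd, $p^{2l}=(p^l)^2\equiv 1\pmod 4$, so Lemma~\ref{formula1} forces $\operatorname{ord}_4(p^{2l})=1$ and hence $\tau(4,p^l)=2^0=1$, which matches $\operatorname{gcd}(2^0,2\alpha_p l)=\operatorname{gcd}(1,0)=1$. It is worth remarking that, unlike the Euclidean computation in Corollary~\ref{corE1}, no condition on $p^l \pmod 4$ is actually needed at $\nu=1$, because squaring automatically places $p^{2l}$ in the residue class $1$ modulo $4$; the hypothesis $2^{\nu+1}\nmid(p^l+1)$ is retained only to match the enumeration setting of Theorem~\ref{thmNumH1}. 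Finally, the special case $p=5$ follows immediately by inserting $\alpha_5=1$ into the general formula, giving $\tau(2^{\nu+1},5^l)=\operatorname{gcd}(2^{\nu-1},2l)$.
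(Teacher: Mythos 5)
Your proposal is correct and follows essentially the same route as the paper: the paper's proof is a one-liner reducing $\tau(2^{\nu+1},p^l)=\frac{\phi(2^{\nu+1})}{2\operatorname{ord}_{2^{\nu+1}}(p^{2l})}$ to the evaluation $\operatorname{ord}_{2^{\nu+1}}(p^{2l})=\frac{2^{\nu-1}}{\operatorname{gcd}(2^{\nu-1},2\alpha_p l)}$, which is exactly your application of Lemma~\ref{formula1} with exponent $2l$ (the paper's printed justification actually misstates this as ``$\operatorname{ord}_{2^{\nu}+1}(p^{2l})=\operatorname{gcd}(2^{\nu-1},2\alpha_p l)$,'' a typo your version silently corrects). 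Your additional checks --- the $\nu=1$ boundary case, the observation that $\alpha_p$ is independent of the exponent, and the remark that the hypothesis $2^{\nu+1}\nmid(p^l+1)$ is not actually needed here because $p^{2l}$ is a square --- are all sound and supply detail the paper omits.
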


 \begin{proof}
     The result follows from the fact that  $\operatorname{ord}_{2^{\nu}+1}(p^{2l}) = \operatorname{gcd}({2^{\nu-1},2\alpha_pl})$.
 \end{proof}
 
 Combining Theorem~\ref{thmNumH1} and  Proposition~\ref{corH1},  the  number of Hermitian self-dual  negacyclic codes of lengths $2^\nu$ and  $2^\nu p^r$ over $\mathbb{F}_{p^{2l}}$  is determined as follow.
 \begin{corollary}
     Let $\nu$ and $l$ be positive integers  and let $p$ be  an odd prime.  Let $r\geq 0$ be an integer.
     If $2^{\nu+1}\nmid (p^l+1)$,  then
     \begin{align*}
     NH(p^{2l},2^\nu p^r) =
     (p^r+1) ^{\operatorname{gcd}(2^{\nu-1},2\alpha_pl)}
     \end{align*}
     Otherwise, $ NH(p^{2l},2^\nu p^r) =0$.

 \end{corollary}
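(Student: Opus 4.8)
The plan is to treat this corollary as a direct specialization of the general enumeration formula of Theorem~\ref{thmNumH1} to the case $n'=1$, followed by a substitution of the order computation recorded in Proposition~\ref{corH1}. First I would observe that the lengths $2^\nu$ and $2^\nu p^r$ both correspond to writing $n = 2^\nu p^r n'$ with $n' = 1$, since $n'$ is by definition the odd part of $n$ that is coprime to $p$. Consequently the index set $\{d : d \mid n'\}$ collapses to the single divisor $d = 1$, and the exponent $\tau(n'2^{\nu+1}, p^l)$ from \eqref{eq:tau} appearing in Theorem~\ref{thmNumH1} reduces to $\tau(2^{\nu+1}, p^l) = \tfrac{1}{2}\cdot\tfrac{\phi(2^{\nu+1})}{{\rm ord}_{2^{\nu+1}}(p^{2l})}$.

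Next I would split into the two cases dictated by the piecewise form of $NH(p^{2l}, n)$. If $2^{\nu+1}\nmid(p^l+1)$, Theorem~\ref{thmNumH1} gives $NH(p^{2l}, 2^\nu p^r) = (p^r+1)^{\tau(2^{\nu+1}, p^l)}$, and I would then invoke Proposition~\ref{corH1}, which---under exactly the same hypothesis $2^{\nu+1}\nmid(p^l+1)$---evaluates this exponent as $\gcd(2^{\nu-1}, 2\alpha_p l)$. Substituting yields the claimed formula $(p^r+1)^{\gcd(2^{\nu-1}, 2\alpha_p l)}$. In the complementary case $2^{\nu+1}\mid(p^l+1)$, the second branch of Theorem~\ref{thmNumH1} immediately gives $NH(p^{2l}, 2^\nu p^r) = 0$.

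Since every ingredient is already available, there is no genuine obstacle internal to this corollary; the substantive content has been front-loaded into Proposition~\ref{corH1}, whose justification rests on the identity ${\rm ord}_{2^{\nu+1}}(p^{2l}) = \gcd(2^{\nu-1}, 2\alpha_p l)$ derived in turn from Lemma~\ref{formula1} and Theorem~\ref{ordermod2}. The only points requiring a moment's care are to confirm that $n' = 1$ for these specific lengths so that the sum in the exponent genuinely degenerates to a single term, and to check that the hypotheses of Theorem~\ref{thmNumH1} and Proposition~\ref{corH1} coincide---both are stated under $2^{\nu+1}\nmid(p^l+1)$---so that the two results chain together without a gap. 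Once these alignments are verified, the corollary follows by a one-line substitution.
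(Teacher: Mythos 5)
Your proposal is correct and matches the paper's own reasoning exactly: the paper states this corollary as an immediate consequence of ``combining Theorem~\ref{thmNumH1} and Proposition~\ref{corH1},'' which is precisely your specialization to $n'=1$ followed by substitution of $\tau(2^{\nu+1},p^l)=\gcd(2^{\nu-1},2\alpha_p l)$. Nothing further is needed.
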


\end{document}